\documentclass[11pt,a4paper]{article}

\usepackage[margin=23mm]{geometry}

\usepackage{amssymb,amsmath,amsfonts,amsthm, amscd, bbm, bigints, mathrsfs, helvet, mathtools, hyperref,appendix}
\usepackage{color}
\usepackage{comment}
\usepackage{tikz}
\usepackage{tikz-cd}
\usetikzlibrary{cd}
\usetikzlibrary{matrix,calc}
\usetikzlibrary{decorations.markings,shapes.geometric,shapes.misc}
\usetikzlibrary{decorations.pathreplacing,positioning}
\usetikzlibrary{arrows}
\usetikzlibrary{hobby}
\usetikzlibrary{patterns, fadings}

\definecolor{myPurple}{rgb}{0.5,0.1,0.6}
\definecolor{myOrange}{rgb}{1.0,0.5,0.0}
\definecolor{myRed}{rgb}{1.0,0.0,0.0}
\definecolor{myGreen}{rgb}{0.0,0.5,0.0}
\definecolor{LatexBlue}{rgb}{0.211765,0.227451,0.666667}
\definecolor{myBlue}{rgb}{0.0,0.0,1.0}
\definecolor{myBlack}{rgb}{0.0,0.0,0.0}
\definecolor{myGray}{rgb}{0.3,0.3,0.3}

\theoremstyle{plain}
\newtheorem{theorem}{Theorem}[section]
\newtheorem*{theorem*}{Theorem}
\newtheorem{proposition}[theorem]{Proposition}

\newtheorem*{proposition*}{Proposition}

\theoremstyle{definition}
\newtheorem{definition}[theorem]{Definition}

\newenvironment{remark}
  {\pushQED{\qed}\remarkx}
  {\popQED\endremarkx}

\newcommand{\Res}{\mathop{\mathrm{Res}}}

\DeclareMathOperator{\Aut}{Aut}

\DeclareMathOperator{\Ad}{Ad}

\DeclareMathOperator{\Tr}{Tr}

\newcommand{\SimTo}{\xrightarrow{\raisebox{-0.3 em}{\smash{\ensuremath{\text{\tiny$\cong$}}}}}}

\newcommand{\g}{\mathfrak{g}}
\newcommand{\h}{\mathfrak{h}}

\newcommand{\CP}{\mathbb{P}^1}

\newcommand{\sfp}{\mathsf{p}}
\newcommand{\sfq}{\mathsf{q}}

\newcommand{\Lag}{\mathscr{L}}

\newcommand{\sgm}{\sigma}

\newcommand{\omg}{\omega}
\newcommand{\zbar}{\bar{z}}
\newcommand{\taubar}{\bar{\tau}}

\newcommand{\Q}{\mathcal{Q}}

\newcommand{\sfE}{\mathsf{E}}
\newcommand{\sfH}{\mathsf{H}}

\def\sl{\mathfrak{sl}}
\def\id{\textup{id}}
\def\h{\mathfrak{h}}

\def\d{\mathrm{d}}

\newcommand{\lau}[1]{(\kern-.2em( #1 )\kern-.2em)}

\newcommand{\parder}[2]{\frac{\partial #1}{\partial #2}}
\newcommand{\ie}{{\it i.e.}\ }

\def\be{\begin{equation}}
\def\ee{\end{equation}}
\def\bea{\begin{eqnarray}}
\def\eea{\end{eqnarray}}

\def\A{\mathcal{A}}

\def\CC{\mathbb{C}}
\def\CP{\mathbb{C}P^1}

\def\RR{\mathbb{R}}

\def\ZZ{\mathbb{Z}}

\def\M{\mathcal M}

\def\dol{\bar{\partial}}

\def\1{\bm{1}}

\numberwithin{equation}{section}

\linespread{1.1}

\begin{document}

\title{\textbf{The 3d mixed BF Lagrangian 1-form\\
\large -- A variational formulation of Hitchin's integrable system --
}\vspace{8pt}}

\author{Vincent Caudrelier$^{1}$\footnote{Corresponding author: v.caudrelier@leeds.ac.uk} \and Derek Harland$^1$ \and Anup Anand Singh$^{1,3}$ \and Beno\^{\i}t Vicedo$^2$}
\date{%
\emph{$^1$School of Mathematics, University of Leeds, Leeds LS2 9JT, UK}\\%
\emph{$^2$Department of Mathematics, University of York, York YO10 5GH, UK}\\%
\emph{$^3$Department of Mathematical Sciences, Loughborough University, Loughborough LE11 3TU, UK}\\[2ex]%
}

\maketitle

\begin{abstract}
We introduce the concept of gauged Lagrangian $1$-forms, extending the notion of Lagrangian $1$-forms to the setting of gauge theories. This general formalism is applied to a natural geometric Lagrangian $1$-form on the cotangent bundle of the space of holomorphic structures on a smooth principal $G$-bundle $\mathcal{P}$ over a compact Riemann surface $C$ of arbitrary genus $g$, with or without marked points, in order to gauge the symmetry group of smooth bundle automorphisms of $\mathcal{P}$. The resulting construction yields a multiform version of the $3$d mixed BF action with so-called type A and B defects, providing a variational formulation of Hitchin’s completely integrable system over $C$. By passing to holomorphic local trivialisations and going partially on-shell, we obtain a \emph{unifying action} for a hierarchy of Lax equations describing the Hitchin system in terms of meromorphic Lax matrices. The cases of genus $0$ and $1$ with marked points are treated in greater detail, producing explicit Lagrangian $1$-forms for the rational Gaudin hierarchy and the elliptic Gaudin hierarchy, respectively, with the elliptic spin Calogero–Moser hierarchy arising as a special subcase.
\end{abstract}

\setcounter{tocdepth}{2}
\tableofcontents

\section{Introduction and summary of main results}

The Hamiltonian formalism has historically played a more prominent role in shaping the modern theory of integrable systems than the Lagrangian one. Indeed, it naturally provides a description of \emph{integrable hierarchies} as commuting Hamiltonian flows and its algebraic nature has been pivotal in the problem of quantisation. However, over the past fifteen years, there has been a resurgence of interest in both applying and further developing Lagrangian methods in the context of integrable systems, driven by two major developments.

\medskip

The first of these is the development of the theory of \emph{Lagrangian multiforms}, first introduced by Lobb and Nijhoff in the setting of integrable lattice models \cite{LN}. Since then, significant progress was made in extending the multiform framework to various realms of integrability: discrete and continuous finite-dimensional systems \cite{YKLN,Su,PS,CDS,CSV}, integrable field theories in $1+1$ dimensions \cite{SuV,SNC1,SNC2,CS1,CS2,PV,CStV2} and $2+1$ dimensions \cite{SNC3,N2}, and semi-discrete systems \cite{SV}. The concept was even extended to non-commuting flows in \cite{CNSV,CH}.

Unlike traditional action functionals in $d$ spacetime dimensions, which capture the dynamics of a single $d$-dimensional integrable field theory (the $d=1$ case corresponding to finite-dimensional systems), a multiform action functional is given by the integral of a Lagrangian $d$-form over a $d$-dimensional \textit{sub}manifold $\Gamma$ in a so-called \emph{multi-time} space of dimension strictly greater than $d$. The resulting action functional $S[q, \Gamma]$ depends not only on the field configurations $q$ but also on the submanifold $\Gamma$. Applying a generalised variational principle to this action produces the so-called \emph{multi-time Euler-Lagrange equations}, encoding the dynamics of several $d$-dimensional integrable field theories from the same hierarchy, together with the so-called \emph{closure relation} which provides a variational analogue of the Poisson involutivity of the Hamiltonians of this hierarchy.

\medskip

The second development concerns the deep relationship between the notion of Lax integrability and certain so-called \emph{mixed holomorphic-topological (HT) gauge theories}. The general philosophy is that the Lax formalism of an integrable field theory on a $d$-dimensional manifold $M$ (again the $d=1$ case corresponds to finite-dimensional systems), depending on a spectral parameter living on a Riemann surface $C$, arises from the equations of motion of a suitable gauge theory on $M \times C$ which is holomorphic along $C$ and (possibly a mixture of holomorphic and) topological along $M$.

The first concrete example, discovered by Costello and further developed more recently with Witten and Yamazaki in \cite{Co1, Co2, CoWY1, CoWY2}, relates integrable lattice models to a $4$-dimensional semi-holomorphic variant of Chern-Simons theory in the presence of line defects. This idea has been extensively studied and generalised over the past five years. Most notably, integrable field theories in $1+1$ dimensions were shown to arise from surface defects in the same $4$-dimensional Chern-Simons theory in the seminal paper \cite{CoY} and many subsequent works, see for instance \cite{DLMV, Sch, BL, FSY1, HL, LaV, CStV1, FSY2, HTC, FSY3, FSY4, LiV, BP, LaW}.
Integrable field theories in higher dimensions arising as reductions of the $4$-dimensional anti-self-dual Yang-Mills equations were similarly derived from various defects in $6$-dimensional holomorphic Chern-Simons theory on twistor space \cite{BS, CCHLT1, CCHLT2}. But the example most relevant for the purpose of the present paper is the relationship between the finite (rational) Gaudin model and the $3$-dimensional mixed holomorphic-topological BF theory with line defects \cite{VW}.

\medskip

It is important to stress that, while both developments are rooted in the Lagrangian formalism, their scopes are very different and each has their own advantages and drawbacks. Indeed, the main purpose of the Lagrangian multiform framework is to encode \textit{hierarchies} of integrable systems, but the only known systematic constructions \cite{CStV2,CDS,CSV} rely on the algebraic machinery of the classical $r$-matrix method. In contrast, the framework of mixed HT gauge theories provides an elegant geometric origin for the spectral parameter and a powerful method for constructing many integrable systems, but it is currently limited to producing individual integrable systems rather than hierarchies. It is therefore very tempting to try to combine the two frameworks.

The primary purpose of the present paper is to take the first step in this direction\footnote{In connection with combining the framework of Lagrangian multiforms with gauge-theoretic ideas, we note the work \cite{MNR}. However, it is based on a different approach from mixed HT gauge theories. Lagrangian multiforms for the Darboux--Kadomtsev--Petviashvili system were derived from a hierarchy of Chern-Simons actions in an infinite-dimensional space of Miwa variables.}, in the setting of finite-dimensional integrable systems. Specifically, we will focus on Hitchin's integrable system \cite{H} and its generalisations with marked points introduced in subsequent works (see, for instance, \cite{Ma, Bot, Ne, ER}) which provide master systems from which prototypical examples of completely integrable finite-dimensional Hamiltonian systems can be derived. Focusing in the introduction on the case without marked points for conciseness, the phase space of the Hitchin system is described by the cotangent bundle $T^\ast \text{Bun}_G(C)$ of the moduli space $\text{Bun}_G(C)$ of holomorphic principal $G$-bundles of a fixed topological type on a Riemann surface $C$ of arbitrary genus $g \geq 2$. The Hitchin map describes a complete set of Poisson-commuting Hamiltonians on this phase space, thereby encoding a finite-dimensional integrable hierarchy.

Our main goal is to construct a Lagrangian $1$-form which encodes the entire integrable hierarchy of Hitchin's system variationally. In doing so, we will find that we are naturally led to a multiform version of the $3$d mixed BF theory introduced in \cite{VW}, see also \cite{Z, GW, GRW}. We thus generalise \cite{VW} in three key respects: 1) we consider Hitchin systems with marked points on Riemann surfaces $C$ of \emph{arbitrary} genus rather than just $\CP$, 2) we encode the entire hierarchy of such Hitchin systems rather than one individual flow in this hierarchy, and 3) we \textit{derive} the $3$d mixed BF action as a variational description of Hitchin systems rather than taking that action as a given. The present paper thus naturally sits at the intersection of the following three topics: Hitchin's integrable systems, mixed HT gauge theories and Lagrangian multiforms. 
\begin{center}
\begin{tikzpicture}
  \def\xrad{2.6}
  \def\yrad{1.2}

  \coordinate (A) at (0.1, -0.3);
  \coordinate (B) at (2.6, -0.2);
  \coordinate (C) at (1.4, 1);

  \definecolor{colorA}{RGB}{70,130,180}
  \definecolor{colorB}{RGB}{220,20,60}
  \definecolor{colorC}{RGB}{60,179,113}

  \filldraw[fill=colorA, fill opacity=0.1, draw=colorA, thick, rotate=12] 
    (A) ellipse [x radius=\xrad, y radius=\yrad + 0.1];
  \filldraw[fill=colorB, fill opacity=0.1, draw=colorB, thick, rotate=-12] 
    (B) ellipse [x radius=\xrad - 0.1, y radius=\yrad + 0.1];
  \filldraw[fill=colorC, fill opacity=0.1, draw=colorC, thick] 
    (C) ellipse [x radius=\xrad-0.7, y radius=\yrad+0.4];

  \node[align=center, text=colorA] at (-1, -0.6) {mixed HT\\ gauge theories};
  \node[align=center, text=colorB] at (3.8, -0.6) {Lagrangian\\multiforms};
  \node[align=center, text=colorC] at (1.5, 1.8) {Hitchin's\\ integrable system};

  \node at (1.4, 0.2) {This paper};
\end{tikzpicture}
\end{center}
As such, it provides a first instance of the merging of the framework of Lagrangian multiforms with mixed HT gauge theories, using Hitchin's integrable system as a driving example.

\bigskip

In the remainder of the introduction we give a detailed outline of the structure of the paper and summarise the main results. For ease of presentation in this introduction, we will focus here on the case of Hitchin's system on a Riemann surface $C$ of arbitrary genus $g \geq 2$ \emph{without} marked points. The case with marked points is dealt with in detail in the main body of the paper.

In the absence of marked points, the phase space of Hitchin's system is given by the cotangent bundle $T^\ast \text{Bun}_G(C)$. A point in the base $\text{Bun}_G(C)$ is a stable holomorphic principal $G$-bundle $\mathcal P_{\rm hol}$, of a fixed topological type, which can always be described using a single holomorphic transition function $\gamma : U_0 \cap U_1 \to G$ relative to an open cover $\{ U_0, U_1 \}$ of $C$ with $U_0$ an open neighbourhood of a fixed point $\mathsf p \in C$ and $U_1 \coloneqq C \setminus \{ \mathsf p \}$. A point in the fibre $T^\ast_{\mathcal P_{\rm hol}} \text{Bun}_G(C)$ above $\mathcal P_{\rm hol}$ is called a holomorphic Higgs field $L$. Identifying $\g$ with its dual $\g^\ast$ using a fixed nondegenerate invariant bilinear pairing on $\g$, it is given explicitly by a pair of $\g$-valued $(1,0)$-forms $L^0$ and $L^1$ on $U_0$ and $U_1$, respectively, which are related via the adjoint action $L^0 = \gamma L^1 \gamma^{-1}$ on the overlap $U_0 \cap U_1$. The pair $(\mathcal P_{\rm hol}, L)$ is an example of a stable Higgs bundle. The Hitchin map, which is also known as the Hitchin fibration, provides a complete set of Poisson-commuting Hamiltonians $H_i$ for $i =1,\ldots, n$, where $n$ denotes half the dimension of the phase space. These Hamiltonians induce commuting time flows $\partial_{t^i}$, $i=1,\ldots, n$ on $T^\ast \text{Bun}_G(C)$ whose actions on the pair $(L^0, \gamma)$ are given by
\begin{equation} \label{Hitchin flow intro}
\partial_{t^i} L^0 = [M^0_i, L^0], \qquad
M^0_i = \gamma M^1_i \gamma^{-1} + \partial_{t^i} \gamma \gamma^{-1}
\end{equation}
for $i =1, \ldots, n$, where $M^1_i$ are $\g$-valued meromorphic functions on $U_1$ each with a simple pole at a fixed marked point $\mathsf q_i \in U_1$ whose residue there is determined by the Hitchin Hamiltonian $H_i$, and $M^0_i$ are $\g$-valued holomorphic functions on $U_0$.

\medskip

One of the main goals of the present paper is to give a variational description of the hierarchy of commuting flows \eqref{Hitchin flow intro}. To do so, we will exploit the fact that the Hitchin phase space $T^\ast \text{Bun}_G(C)$ can be obtained as a symplectic reduction of the infinite-dimensional cotangent bundle $T^\ast \mathcal M$ of the space $\mathcal M$ of stable holomorphic structures on a fixed smooth principal $G$-bundle $\mathcal P$ by the action of the group $\mathcal G = \Aut\mathcal P$ of smooth bundle automorphisms of $\mathcal P$, as we now briefly recall.

A holomorphic structure on $\mathcal P$ can be specified in terms of a partial\footnote{The notion of partial connection is explained below after \eqref{A definition}.} $(0,1)$-connection $A''$ on $\mathcal P$ given in local coordinate patches by $\g$-valued $(0,1)$-forms on $C$. A smooth bundle automorphism $g \in \mathcal G$ acts on such a holomorphic structure $A''$ by gauge transformations $A'' \mapsto g A'' g^{-1} - \bar\partial g g^{-1}$ and two holomorphic structures related in this way define the same holomorphic principal $G$-bundle $\mathcal P_{\rm hol}$. In particular, the action of $\mathcal G$ on the space $\mathcal M$ of stable holomorphic structures is free and we have an isomorphism $\mathcal M / \mathcal G \cong \text{Bun}_G(C)$. Moreover, the action of $\mathcal G$ on $\mathcal M$ lifts to a Hamiltonian action on $T^\ast \mathcal M$ with moment map given by $\mu(B, A'') = \bar\partial^{A''} B$, where $B$ denotes the smooth Higgs field parametrising the fibre of $T^\ast_{A''} \mathcal M$ over a fixed holomorphic structure $A'' \in \mathcal M$. We prefer to use the symbol $B$ over the more standard $\Phi$ as the Higgs field will shortly become identified with the corresponding field with the same name in $3$d mixed BF theory. The starting point for our construction is then the fact that the Hitchin phase space is given by the symplectic quotient
\begin{equation} \label{Hitchin phase space intro}
T^\ast \text{Bun}_G(C) \;\cong\; \mu^{-1}(0) / \mathcal G \,.
\end{equation}
Specifically, in \S\ref{sec: Lag for Hitchin} we introduce a natural lift of the Hitchin map to the cotangent bundle $T^\ast \mathcal M$ which induces $n$ commuting flows on this infinite-dimensional symplectic manifold and that can be described variationally using a natural geometric multiform action $S_\Gamma[B, A'', t]$ on $T^\ast \mathcal M$. Upon performing the symplectic reduction to $\mu^{-1}(0) / \mathcal G$ at the level of the action, which is the content of \S\ref{sec: 3d BF}, we obtain a multiform version of the action for the gauge theory known as $3$d mixed holomorphic-topological BF theory. After subsequently passing to holomorphic local trivialisations and going partially on-shell to obtain an equivalent description of the dynamics on $T^\ast \text{Bun}_G(C)$ via the isomorphism \eqref{Hitchin phase space intro}, which is the content of \S\ref{sec: Hitchin holomorphic}, this will lead to the desired action for the dynamics \eqref{Hitchin flow intro} of the Hitchin system on $T^\ast \text{Bun}_G(C)$.

\medskip

In order to describe this in more detail, it is useful to first recall how the analog of symplectic reduction can be implemented in the Lagrangian formalism. Consider a symplectic manifold $T^\ast M$ equipped with a Hamiltonian action of a Lie group $G$ with moment map $\mu : T^\ast M \to \g^\ast$. If the Hamiltonian is $G$-invariant then so is the first-order action $S[p,q] = \int_0^1 \big( p_\mu \dot{q}^\mu - H(p,q) \big) dt$ and symplectic reduction to $\mu^{-1}(0) / G$ is implemented by adding the term $\int_0^1 \langle \mu(p,q) , \mathcal A \rangle$ to the action where $\mathcal A = \mathcal A_t dt$ is a $\g$-valued Lagrange multiplier. Indeed, the restriction to the level set $\mu^{-1}(0)$ is implemented dynamically via the equation of motion for $\mathcal A$ while the quotient by $G$ is implemented as a gauge symmetry since the global $G$-symmetry of the original action is promoted to a gauge symmetry of the new action if we let $\mathcal A$ transform as gauge field $\mathcal A \mapsto g \mathcal A g^{-1} - \partial_t g g^{-1} \d t$ for any $g \in C^\infty\big( (0,1), G \big)$. In \S\ref{sec: univar finite-dim} we generalise this to the setting of Lagrangian multiforms.

The symplectic reduction in \eqref{Hitchin phase space intro} is implemented at the level of the action in \S\ref{sec: Lag for Hitchin mod G}. Specifically, starting from the geometric multiform action $S_\Gamma[B, A'', t]$ on $T^\ast \mathcal M$, where in the present multiform setting the fields $B$, $A''$ and also the time variables $t^i$ depend on an auxiliary space of multi-times $\RR^n$, we introduce a $\g$-valued gauge field $\mathcal A = \mathcal A_j \d u^j$ (with an implicit sum over $j=1,\ldots, n$) which transforms as a connection along $\RR^n$, namely $\mathcal A \mapsto g \mathcal A g^{-1} - \d_{\RR^n} g g^{-1}$. To implement the symplectic reduction by the group $\mathcal G$ we then add the new term $\int_C \langle \mu, \mathcal A \rangle = - \int_C \langle B, \bar\partial^{A''} \mathcal A \rangle$ to the Lagrangian $1$-form $\int_C \langle B, \d_{\RR^n} A'' \rangle - H_i(B, A'') \d_{\RR^n} t^i$ underlying the original action $S_\Gamma[B, A'', t]$. Remarkably, the resulting Lagrangian $1$-form is exactly a multiform version of the Lagrangian for $3$d mixed BF theory; see Theorem \ref{thm: BF Lagrangian}. In particular, the partial $(0,1)$-connection $A''$ and $\RR^n$-connection $\mathcal A$ combine into a partial connection $A = A'' + \mathcal A$ which together with the Higgs field $B$ form the field content of $3$d mixed BF theory. Moreover, the lift of the Hitchin Hamiltonians $H_i$ to the cotangent bundle $T^\ast \mathcal M$ now play the role of type B defects in the language of \cite{VW}.

By construction, the Lagrangian $1$-form for $3$d mixed BF theory encodes the dynamics of the Hitchin system but in terms of the degrees of freedom of the cotangent bundle $T^\ast \mathcal M$, i.e. in terms of a smooth partial $(0,1)$-connection $A''$ specifying the holomorphic structure of $P$ parametrising the base, a smooth Higgs field $B$ parametrising the fibre, and a smooth $\RR^n$-connection $\mathcal A$ serving as Lagrange multiplier for the constraint $\bar\partial^{A''} B = 0$. To obtain the sought-after Lagrangian $1$-form describing the dynamics of the Hitchin system on the actual Hitchin phase space $T^\ast \text{Bun}_G(C)$, which is parametrised by pairs $(\mathcal P_{\rm hol}, L)$ with $\mathcal P_{\rm hol}$ a stable holomorphic principal $G$-bundle on $C$ and $L$ a holomorphic Higgs fields, we need to perform one final step detailed in \S\ref{sec: Hitchin holomorphic}. Specifically, the holomorphic principal $G$-bundle $\mathcal P_{\rm hol}$ is obtained by moving to a local trivialisation of the smooth principal $G$-bundle $\mathcal P$ in which $A'' = 0$. Moreover, to turn the smooth Higgs field $B$ into the holomorphic Higgs field $L$ we simply enforce the constraint $\bar\partial^{A''} B = 0$. The resulting Lagrangian $1$-form, derived in Theorem \ref{thm: Lag 1-form for Hitchin}, reads
\begin{equation} \label{Lag 1-form intro}
\Lag_{\rm H} = - \frac{1}{2 \pi i} \int_{c_{\mathsf p}} \big\langle L^0, \d_{\RR^n} \gamma \gamma^{-1} \big\rangle - H_i \d_{\RR^n} t^i \,,
\end{equation}
where $c_{\mathsf p}$ is a small counter-clockwise oriented circle around the marked point $\mathsf p$. More precisely, in Theorem \ref{thm: Lag 1-form for Hitchin} we derive the more general Lagrangian $1$-form for a Hitchin system with marked points $\mathsf p_\alpha \in C$, for $\alpha = 1, \ldots, N$ with $N \in \ZZ_{\geq 1}$, at which we attach degrees of freedom $\varphi_\alpha$ living in coadjoint orbits of fixed elements $\Lambda_\alpha \in \g^\ast$. See \S\ref{sec: adding punctures} for details. In Theorem \ref{thm: eom for 1d action} we show that the variation of the corresponding action reproduces the equations of motion \eqref{Hitchin flow intro} of the Hitchin system, as expected.
A visual summary of the whole procedure, detailing the passage from the geometrical action on $T^\ast \mathcal M$ to that of the Hitchin system on $T^\ast \text{Bun}_G(C)$, is sketched in Figure \ref{fig: paper schematic}.
\begin{figure}[ht]
\begin{tikzcd}[
  row sep=1em,
  column sep=1.5em,
  cells={nodes={minimum width=2.5cm, minimum height=0cm, align=center}}
]

  \textbf{\small \quad Hitchin's system on} & \textbf{\small \quad Phase-space parametrised by} & \textbf{\small Variational description} \\[-1mm]

  {\tikz[baseline=(X.base)]
    \node[draw, thick, rounded corners, align=center, minimum width=2.5cm, minimum height=1cm] (X)
    {$T^* \mathcal{M}$};}
  \arrow[d, shorten <=-1pt, shorten >=-1pt, "\;\text{Symplectic}\;"' {pos=0.27}, "\;\text{reduction}" {pos=0.25}, "\; \mu \,:\, T^\ast \mathcal M \,\to\, \mathfrak{G}^\ast" {pos=0.65}] \arrow[in=15, out=-15, shorten <=-5pt, shorten >=-5pt, loop, looseness=2, "\mathcal{G}"'] & \makebox[6cm][l]{\small $\begin{array}{ll}
\mathcal P\phantom{''} \quad  \text{smooth principal } G\text{-bundle} \\
A'' \quad  \text{holomorphic structure}\\
B\phantom{''} \quad  \text{smooth Higgs field} \end{array}$}
& {\small $\begin{array}{c} \text{canonical action} \\
S_\Gamma[B, A'', t] \;\; (\S\ref{sec: Lag for Hitchin})\end{array}$}\\

  {\tikz[baseline=(X.base)]
    \node[draw, thick, rounded corners, align=center, minimum width=2.5cm, minimum height=1cm] (X)
    {$\mu^{-1}(0)/\mathcal{G}$};}
  \arrow[d, shorten <=-1pt, shorten >=-1pt, "\text{Solve }"' {pos=0.25}, "\; \mu\,=\,0 \; \text{and}" {pos=0.25}, "\text{move to }"' {pos=0.67}, "\;A''\,=\,0" {pos=0.65}] & \makebox[6cm][l]{\small $\begin{array}{ll} \mathcal P\phantom{''} \quad  \text{smooth principal } G\text{-bundle} \\
A'' \quad  \text{holomorphic structure}\\
\mathcal A\phantom{''} \quad \text{Lagrange multiplier for $\mu = 0$}\\
B\phantom{''} \quad  \text{smooth Higgs field} \end{array} \bigg\} \; A$}
& {\small $\begin{array}{c} \text{3$d$ mixed BF action} \\
S_{3d, \Gamma}[B, A, t] \;\; (\S\ref{sec: Lag for Hitchin mod G}) \end{array}$}\\

  {\tikz[baseline=(X.base)]
    \node[draw, thick, rounded corners, align=center, minimum width=2.5cm, minimum height=1cm] (X)
    {$T^* \mathrm{Bun}_G(C)$};} & \makebox[6cm][l]{\small $\begin{array}{ll} \mathcal P_{\rm hol} \;\;  \text{holomorphic principal } G\text{-bundle} \\
L \quad\;\;  \text{holomorphic Higgs field}/\text{Lax matrix} \end{array}$}
& {\small $\begin{array}{c} \text{1$d$ unifying action} \\
S_{{\rm H},\Gamma}[L^0, \gamma, t] \;\; (\S\ref{sec:unifyingmultiform}) \end{array}$}
\end{tikzcd}
\begin{tikzpicture}[remember picture, overlay]
  \node at (0.2,5.1) {1)};
  \node at (0.2,2.8) {2)};
  \node at (0.2,0.7) {3)};
  \draw[black!50] (4.65,0) -- (4.65,6.8);
  \draw[black!50] (4.75,0) -- (4.75,6.8);
  \draw[black!50] (11.8,0) -- (11.8,6.8);
  \draw[black!50] (11.9,0) -- (11.9,6.8);
  \draw[black!50] (0,6.05) -- (16.35,6.05);
  \draw[black!50] (4.65,4.1) -- (16.35,4.1);
  \draw[black!50] (4.65,1.55) -- (16.35,1.55);
\end{tikzpicture}
\caption{The three levels of Hitchin's integrable system and their corresponding actions.}
\label{fig: paper schematic}
\end{figure}

Finally, in \S\ref{sec: examples} we explicitly compute our unifying Lagrangian $1$-form $\Lag_{\rm H}$, or its multiform action $S_{{\rm H},\Gamma}[L^0, \gamma, (\varphi_\alpha), t]$ from \S\ref{sec:unifyingmultiform}, in special cases. Specifically, we exploit the invariance of the action under changes of holomorphic local trivialisations to fix particularly nice representatives for the transition function $\gamma$. In the $g=0$ case, our Lagrangian $1$-form recovers that of the rational Gaudin hierarchy first obtained in \cite{CDS} by a completely different method. In the $g=1$ case, we obtain a novel Lagrangian $1$-form for the elliptic Gaudin hierarchy -- and the elliptic spin Calogero-Moser hierarchy as a special case -- thus filling a gap in the landscape of Lagrangian multiforms. 

We also include a technical Appendix \ref{sec: Stability} in which we compare, in the $G = SL_m(\CC)$ case, the differential notion of the stability condition we use in the main text to the more familiar algebraic notion of the stability condition used in the literature.

\paragraph{Acknowledgements.}
AAS is funded by the School of Mathematics EPSRC Doctoral Training Partnership Studentship (Project Reference Number 2704447). BV would like to thank Sylvain Lacroix and Graeme Wilkin for useful discussions. BV gratefully acknowledges the support of the Leverhulme Trust through a Leverhulme Research Project Grant (RPG-2021-154).

\section{Gauged univariational principle and symmetry reduction} \label{sec: univar finite-dim}

We recall in \S\ref{sec: univar principle CH} the phase-space Lagrangian $1$-form introduced in \cite{CH}. In particular, in \S\ref{sec: univar principle} we formulate the univariational principle of \cite{CH}.
In \S\ref{sec: group actions} we explain how to incorporate a symmetry, represented by the free action of a connected Lie group $G$ on $M$, and recast the known relation between Noether charges and the moment map $\mu : T^\ast M \to \g^\ast$ in the context of Lagrangian $1$-forms. Then, in \S\ref{gauged_univariational_principle} we introduce the gauged univariational principle and explain how it describes, purely in the variational language of Lagrangian $1$-forms, the symplectic reduction procedure to $\mu^{-1}(0)/G$ traditionally presented in Hamiltonian terms.

\subsection{Lagrangian 1-forms, univariational principle and symmetry} \label{sec: univar principle CH}

\subsubsection{Univariational principle for Lagrangian $1$-forms} \label{sec: univar principle}

Let $M$ be an $m$-dimensional manifold with coordinates $q^\mu$ for $\mu = 1, \ldots, m$. The cotangent bundle $T^\ast M$ is parametrised by coordinates $(q^\mu, p_\mu)$ where $p_\mu$ for $\mu = 1, \ldots, m$ are dual coordinates along the fibres. The tautological $1$-form $\alpha$ and symplectic form $\omega$ of $T^\ast M$ are given by $\alpha=p_\mu \d q^\mu$ and $\omega = \d p_\mu\wedge \d q^\mu$. Here and in what follows, we always use the summation convention according to which repeated upstairs and downstairs indices of any kind are summed over.
Given any function $f$ on $T^\ast M$, the associated Hamiltonian vector field $\mathcal{X}_f$ is defined by the property $\mathcal{X}_f \lrcorner \omega = \d f$ and is given explicitly by
\begin{equation} \label{Ham vec def}
\mathcal{X}_f = \frac{\partial f}{\partial q^\mu} \frac{\partial}{\partial p_\mu} - \frac{\partial f}{\partial p_\mu} \frac{\partial}{\partial q^\mu} \,.
\end{equation}
For any pair of functions $f$ and $g$, their Poisson bracket is then given by $\{ f, g \} = - \mathcal{X}_f g = \mathcal{X}_g f$.

Let us introduce the phase-space Lagrangian $1$-form on $T^\ast M\times\RR^n$ given by
\begin{equation}\label{L}
\Lag \coloneqq \alpha - H_i \d t^i = p_\mu \d q^\mu - H_i \d t^i,
\end{equation}
where $H_1,\ldots, H_n$ are $n \in \ZZ_{\geq 1}$ real functions on $T^\ast M$ and $t^i$ are Cartesian coordinates on $\RR^n$. (We do not necessarily assume here any relation between the integers $n$ and $m = \dim M$; the case $n=1$ corresponds to an ordinary Hamiltonian system while the case $n = m$ will correspond to that of a Liouville integrable system.) Note that in \eqref{L}, the phase space variables $p_\mu$, $q^\mu$ and the time variables $t^i$ are treated on an equal footing. Accordingly, the action associated to a parametrised curve $\gamma:(0,1)\to T^\ast M\times \RR^n$ is now
\begin{equation} \label{pre ungauged action}
S_0[\gamma] = \int_0^1\gamma^\ast \Lag=\int_0^1\left(p_\mu \frac{\d q^\mu}{\d s} - H_i(p,q) \frac{\d t^i}{\d s}\right)\d s\,.
\end{equation}

If we were to apply the usual principle of least action, or variational principle, to \eqref{pre ungauged action} then we would seek a curve $\gamma:(0,1)\to T^\ast M\times \RR^n$ which is a critical point of this action.
That is to say, under an arbitrary variation $\delta \gamma(s)=(\delta p_\mu(s), \delta q^\mu(s), \delta t^i(s))$ satisfying the boundary conditions $0=\lim_{s\to0,1}\delta q(s)=\lim_{s\to0,1}\delta t(s)$ we want
\begin{align}
0 = \delta S_0[\gamma]&=\int_0^1\left(\delta p_\mu \frac{\d q^\mu}{\d s} + p_\mu \frac{\d \delta q^\mu}{\d s}- \left( \frac{\partial H_i}{\partial p_\mu} \delta p_\mu+ \frac{\partial H_i}{\partial q^\mu} \delta q^\mu \right)\frac{\d t^i}{\d s}-H_i \frac{\d \delta t^i}{\d s}\right) \d s\nonumber\\
& =\int_0^1\left(\delta p_\mu \left(\frac{\d q^\mu}{\d s}-\frac{\partial H_i}{\partial p_\mu}\frac{\d t^i}{\d s} \right) - \delta q^\mu\left(\frac{\d p_\mu}{\d s}+ \frac{\partial H_i}{\partial q^\mu}\frac{\d t^i}{\d s}\right) + \frac{\d H_i}{\d s}\delta t^i\right) \d s\,,
\end{align}
where in the second line we integrated by parts and used the boundary conditions. This leads to the Euler-Lagrange equations
\begin{equation} \label{EL equations ungauged}
\gamma'\lrcorner \d \Lag = 0 \quad \Longleftrightarrow \quad \frac{\d q^\mu}{\d s}=\frac{\partial H_i}{\partial p_\mu} \frac{\d t^i}{\d s}\,,~~
\frac{\d p_\mu}{\d s}=-\frac{\partial H_i}{\partial q^\mu} \frac{\d t^i}{\d s}\,,~~\frac{\d H_i}{\d s}=0
\end{equation}
where $\gamma'$ denotes the tangent vector to the curve $\gamma$.
It is clear from the first two Euler-Lagrange equations that locally along the curve we must always have $\frac{\d t^i}{\d s} \neq 0$ for some $i \in \{ 1, \ldots, n \}$ since otherwise we would have $\gamma'=0$. We could then look for solutions of \eqref{EL equations ungauged} such that $\frac{\d t^i}{\d s} \neq 0$ for a fixed $i$ which up to a change of variable amounts to working with the parameter $s = t^i$ along the curve. However, this would inevitably single out 
$H_i$ as our `preferred' Hamiltonian.

Rather than looking for individual curves $\gamma : (0,1) \to T^\ast M \times \RR^n$ which are critical points of the action $S_0[\gamma]$ in \eqref{pre ungauged action}, the univariational principle takes on board the fundamental idea of the multiform framework and replaces the curve $\gamma$ by an {\it immersion}
\begin{equation} \label{map Sigma}
\Sigma : \RR^n \longrightarrow T^\ast M \times \RR^n \,, \quad (u^j) \longmapsto \big( p_\mu(u), q^\mu(u), t^i(u) \big)
\end{equation}
\ie a map $\Sigma$ such that its tangent map 
\begin{equation}
    \d_u\Sigma:T_u\RR^n\to T_{\Sigma(u)}(T^\ast M \times \RR^n)
\end{equation}
is injective for every $u\in\RR^n$. In particular, $\Sigma$ is then a (local) diffeomorphism onto its image in $T^\ast M \times \RR^n$.
In view of applying the univariational principle, $\Sigma$ is required to satisfy boundary conditions that fix the functions $\lim_{\lVert u \rVert\to\infty}q^\mu(u)$ and $\lim_{\lVert u \rVert\to\infty}t^i(u)$ of $S^{n-1}$ at spatial infinity.

Introducing also a curve 
\begin{equation} \label{Gamma def}
\Gamma : (0,1) \longrightarrow \RR^n \,, \quad s \longmapsto \big( t^i(s) \big)
\end{equation}
such that $\lim_{s\to0,1} \lVert \Gamma(s) \rVert =\infty$, we can evaluate the action \eqref{pre ungauged action} on the composite map
\begin{equation*}
\begin{tikzcd}
\gamma \coloneqq \Sigma \circ \Gamma : (0,1) \arrow[r,"\Gamma"] & \RR^n \arrow[r,"\Sigma"] & T^\ast M \times \RR^n\,,
\end{tikzcd}
\end{equation*}
leading now to a family of multiform actions labelled by $\Gamma$, namely
\begin{equation} \label{ungauged action}
S_\Gamma[\Sigma] \coloneqq S_0[\Sigma \circ \Gamma] = \int_0^1\left(p_\mu \frac{\partial q^\mu}{\partial u^j} - H_i(p,q) \frac{\partial t^i}{\partial u^j}\right) \frac{\d u^j}{\d s} \d s\,.
\end{equation}
Crucially, we view this family of actions as depending only on the map $\Sigma$ in \eqref{map Sigma} and we treat the curve $\Gamma$ in \eqref{Gamma def} as a parameter labelling the family. The distinction between the roles of $\Sigma$ and $\Gamma$, reflected in the notation $S_\Gamma[\Sigma]$, underpins the \emph{univariational principle} which can now be formulated as seeking an immersion $\Sigma$ which is simultaneously a critical point of the family of actions $S_\Gamma[\Sigma]$ for all curves $\Gamma$ in $\RR^n$. In other words, it consists in the single-step procedure: 
\begin{center}
Find $\Sigma$ such that $\delta_\Sigma S_\Gamma[\Sigma] = 0$ for all curves $\Gamma$ in $\RR^n$.    
\end{center}
As shown in \cite{CH}, the univariational principle implies that the map $\Sigma$ in \eqref{map Sigma} can be written as $\Sigma = (f_\Sigma, \id_{\RR^n})$ for some map $f_\Sigma : \RR^n \to T^\ast M$, $(t^i)\mapsto \big( p_\mu(t), q^\mu(t)\big)$ and the univariational principle is equivalent to the system of equations
\begin{subequations} \label{univar EL eq}
\begin{align}
\label{univar EL eq a} &\frac{\partial q^\mu}{\partial t^i}=\frac{\partial H_i}{\partial p_\mu} \,,~~
\frac{\partial p_\mu}{\partial t^i}=-\frac{\partial H_i}{\partial q^\mu}\,,~~i=1,\dots,n \,,~~\mu = 1, \ldots, m \,,\\
\label{univar EL eq b} &\frac{\partial H_i}{\partial p_\mu}\frac{\partial H_j}{\partial q^\mu}-\frac{\partial H_i}{\partial q^\mu}\frac{\partial H_j}{\partial p_\mu}=\{H_i,H_j\}=0\,,~~i,j=1,\dots,n\,.    
\end{align}
\end{subequations}
We refer to the proof of Theorem \ref{thm: gauged univar} below for details in the more general gauge invariant setting.
We see from \eqref{univar EL eq b} that the univariational principle already encodes the closure relation through the variation of $\Sigma$, as a result of having included the times $t^i$ among the dynamical variables in the map $\Sigma$.
In particular, the univariational principle admits solutions if and only if the Hamiltonian functions $H_1, \ldots, H_n$ mutually Poisson commute. We also see from \eqref{univar EL eq a} that the time coordinate $t^i$ parametrises the Hamiltonian flow of $H_i$ for each $i=1,\ldots, n$.
The univariational principle applied to the Lagrangian $1$-form \eqref{L} therefore encodes multitime Hamiltonian mechanics, which includes the case of Liouville integrable systems when $n = m$.
\begin{remark}
Readers familiar with Lagrangian multiforms may wonder how the univariational principle adopted here connects to the bivariational principle that has been employed in the literature until the recent work \cite{CH}. In the bivariational formulation, one views $\Sigma$ as $(f_\Sigma, \id_{\RR^n})$ for some map $f_\Sigma : \RR^n \to T^\ast M$, $(t^i)\mapsto \big( p_\mu(t), q^\mu(t)\big)$ {\it from the outset} so that $S_0[\Sigma \circ \Gamma]$ in \eqref{ungauged action} becomes the (perhaps more familiar) action 
 \begin{equation}
S[f_\Sigma , \Gamma]\coloneqq S_0[f_\Sigma \circ \Gamma] = \int_0^1\left(p_\mu \frac{\partial q^\mu}{\partial t^j} - H_j(p,q) \right) \frac{\d t^j}{\d s} \d s\,.
\end{equation}
The price to pay in this formulation is that one has to vary with respect to both $f_\Sigma$ {\it and} $\Gamma$ to obtain the full set of equations \eqref{univar EL eq}.

We also note that the univariational principle is a natural generalisation to the multitime case of the well-known {\it extended phase space} approach to classical mechanics. That approach corresponds to having a single time $t$ in our case, \ie $n=1$, so that $\Gamma:(0,1)\to \RR$, $s\mapsto t$ is a (re)parametrisation of time. This idea applied in the Lagrangian formalism can be found for instance in \cite{Sou} but is likely much older. 
\end{remark}

\subsubsection{Group actions and symmetry} \label{sec: group actions}

Suppose now that $M$ admits a free right action of a connected Lie group $G$, $\rho : G \times M \to M$.
Let $X_a$ be a basis of the Lie algebra $\g$ of $G$ and let $X_a^\mu(q)\parder{}{q^\mu}$ denote the corresponding fundamental vector fields generating the action of $\g$ on $M$. The right action of $G$ lifts to $T^\ast M$, which we also denote by $\rho : G \times T^\ast M \to T^\ast M$, and the corresponding action of $\g$ is generated by vector fields
\begin{equation}\label{action coords}
    X_a^\sharp = X_a^\mu(q)\parder{}{q^\mu} - p_\nu \parder{X_a^\nu}{q^\mu} \parder{}{p_\mu}
\end{equation}
satisfying $[X_a^\sharp,X_b^\sharp]=f_{ab}{}^c X_c^\sharp$, where $f_{ab}{}^c$ denote the structure constants of $\g$.
We further lift this to a right action $\rho : G \times T^\ast M \times \RR^n \to T^\ast M \times \RR^n$ of $G$ on $T^\ast M \times \RR^n$ by letting $G$ act trivially on $\RR^n$ so that the infinitesimal action of $\g$ is still generated by the same vector fields \eqref{action coords}.

In what follows we will consider only the corresponding left action of the group $G$ on $T^\ast M \times \RR^n$, given by $G \times T^\ast M \times \RR^n \to T^\ast M \times \RR^n$, $(g, x) \mapsto g \cdot x \coloneqq \rho_{g^{-1}}(x)$ and which is infinitesimally generated by the vector fields $\delta_{X_a}\coloneqq - X_a^\sharp$ for $a = 1, \ldots, \dim \g$.

\begin{proposition}\label{prop_global_symmetry}
The action \eqref{ungauged action} is invariant under the infinitesimal action of $G$ on $T^\ast M \times \RR^n$ generated by \eqref{action coords} if and only if each $H_i$ is invariant under this infinitesimal group action, i.e.
\begin{equation} \label{Hi invariant}
\mathcal{L}_{X^\sharp_a}H_i=0\,\,,~~i=1,\dots,n\,,~~a=1,\dots, {\rm dim}~\g\,.
\end{equation}
The Noether charges associated with this global $G$ symmetry are then given by $\mu_a(p,q) = - p_\nu X_a^\nu(q)$.
\end{proposition}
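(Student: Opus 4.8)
The plan is to reduce the invariance of the family of actions \eqref{ungauged action} to a single geometric statement about the Lagrangian $1$-form $\Lag = \alpha - H_i \d t^i$, namely its Lie derivative along the generator $X_a^\sharp$. Since $S_\Gamma[\Sigma] = \int_0^1 \gamma^\ast \Lag$ with $\gamma = \Sigma \circ \Gamma$, the first-order change of the action under the flow of $\delta_{X_a} = -X_a^\sharp$ is $\delta_{X_a} S_\Gamma = \int_0^1 \gamma^\ast \mathcal{L}_{\delta_{X_a}} \Lag = - \int_0^1 \gamma^\ast \mathcal{L}_{X_a^\sharp} \Lag$, so the whole question hinges on computing $\mathcal{L}_{X_a^\sharp}\Lag$.

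The one nontrivial input is that the cotangent lift preserves the tautological $1$-form, $\mathcal{L}_{X_a^\sharp}\alpha = 0$ (equivalently, $X_a^\sharp$ is the Hamiltonian vector field of $-p_\mu X_a^\mu$, the expected moment map). I would verify this directly from \eqref{action coords}: using $\mathcal{L}_{X_a^\sharp} p_\mu = -p_\nu \parder{X_a^\nu}{q^\mu}$ and $\mathcal{L}_{X_a^\sharp}\d q^\mu = \d X_a^\mu = \parder{X_a^\mu}{q^\nu}\,\d q^\nu$, the two contributions to $\mathcal{L}_{X_a^\sharp}(p_\mu \d q^\mu)$ cancel after relabelling the summed indices. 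Because $X_a^\sharp$ acts trivially on $\RR^n$, both $X_a^\sharp \lrcorner \d t^i = 0$ and $\mathcal{L}_{X_a^\sharp}\d t^i = 0$, and hence $\mathcal{L}_{X_a^\sharp}\Lag = -(\mathcal{L}_{X_a^\sharp} H_i)\,\d t^i$. Pulling back, $\delta_{X_a} S_\Gamma = \int_0^1 (\mathcal{L}_{X_a^\sharp} H_i)\,\frac{\d t^i}{\d s}\,\d s$. The forward implication is now immediate: if \eqref{Hi invariant} holds then $\mathcal{L}_{X_a^\sharp}\Lag = 0$ and the action is manifestly invariant for every $\Sigma$ and $\Gamma$. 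For the converse, if $\delta_{X_a} S_\Gamma = 0$ for all immersions $\Sigma$ and all curves $\Gamma$, then since the integrand $(\mathcal{L}_{X_a^\sharp} H_i)(\gamma(s))\,\frac{\d t^i}{\d s}$ may be localised at an arbitrary point of $T^\ast M$ with the $\frac{\d t^i}{\d s}$ chosen independently, a fundamental-lemma argument forces each $\mathcal{L}_{X_a^\sharp} H_i$ to vanish identically.

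For the Noether charges I would apply Cartan's formula a second time. Setting $\mu_a \coloneqq \delta_{X_a} \lrcorner \Lag = -X_a^\sharp \lrcorner (\alpha - H_i \d t^i) = -X_a^\sharp \lrcorner \alpha = -p_\mu X_a^\mu$, the symmetry condition $\mathcal{L}_{\delta_{X_a}}\Lag = 0$ together with $\mathcal{L}_{\delta_{X_a}}\Lag = \delta_{X_a}\lrcorner \d\Lag + \d(\delta_{X_a}\lrcorner \Lag)$ gives $\delta_{X_a}\lrcorner \d\Lag = -\d\mu_a$. Contracting with the tangent vector $\gamma'$, using the Euler--Lagrange equations \eqref{EL equations ungauged} in the form $\gamma'\lrcorner \d\Lag = 0$ and the antisymmetry of the $2$-form $\d\Lag$, yields $\frac{\d}{\d s}(\mu_a \circ \gamma) = 0$, identifying $\mu_a = -p_\nu X_a^\nu$ as the conserved charge. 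As a cross-check one can instead verify, directly on solutions of \eqref{univar EL eq a}, that $\partial_{t^i}\mu_a = \mathcal{L}_{X_a^\sharp} H_i$, so that conservation of $\mu_a$ under every flow is precisely the invariance condition \eqref{Hi invariant}.

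The computations here are all routine; the only step demanding genuine care is the converse direction of the equivalence, where one must use the freedom in both the immersion $\Sigma$ and the parametrising curve $\Gamma$ to pass from off-shell invariance of the whole family $S_\Gamma[\Sigma]$ to the pointwise conditions \eqref{Hi invariant}. Everything else follows from $\mathcal{L}_{X_a^\sharp}\alpha = 0$ and the triviality of the $G$-action on the multi-time $\RR^n$.
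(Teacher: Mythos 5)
Your proof is correct, and while the equivalence part is essentially the paper's computation recast in intrinsic language, your treatment of the Noether charge follows a genuinely different route. For the equivalence, the paper works in coordinates: it computes the pointwise variation \eqref{inf var p q t}, observes the cancellation $\delta_X\big(p_\mu \partial_{u^j}q^\mu\big)=0$ in \eqref{delta p dq}, and arrives at \eqref{inf_invariance_S}; your identity $\mathcal{L}_{X_a^\sharp}\alpha=0$ is precisely this cancellation phrased as invariance of the tautological $1$-form, so the two arguments coincide up to language (and both are equally brief about the localisation step needed for the converse). The real divergence is in how the charge $\mu_a=-p_\nu X_a^\nu$ is obtained. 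The paper uses the trick of promoting the constant $\lambda^a$ to functions $\lambda^a:\RR^n\to\CC$ and reads off $\mu_a$ as the coefficient of $\d\lambda^a/\d s$; this choice is not incidental, since exactly this computation is reused in the proof of Proposition \ref{prop_local_symmetry}, where the term $\mu_a\,\partial_{u^j}\lambda^a$ is what the gauge-field coupling $\langle\mu,\mathcal A\rangle$ of \S\ref{gauged_univariational_principle} is designed to cancel. You instead define $\mu_a=\delta_{X_a}\lrcorner\Lag=-X_a^\sharp\lrcorner\alpha$ and prove on-shell conservation via Cartan's formula together with $\gamma'\lrcorner\,\d\Lag=0$ from \eqref{EL equations ungauged}. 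What your approach buys is an actual demonstration that $\mu_a$ is conserved along solutions (the paper's proposition identifies the charge but defers conservation), and the two facts you isolate, $\mathcal{L}_{X_a^\sharp}\alpha=0$ and $\mu_a=-X_a^\sharp\lrcorner\alpha$, are exactly the ingredients the paper invokes immediately after the proposition to verify the moment map equations \eqref{moment map eqs}. What it loses is the direct hook into the gauging construction that motivates the paper's presentation.
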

\begin{proof}
The variation of any given map $\Sigma : \RR^n \to T^\ast M\times \RR^n$, $(u^j) \mapsto \big( p_\mu(u), q^\mu(u), t^i(u) \big)$ under the infinitesimal left action of $G$ on $T^\ast M \times \RR^n$ generated by the vector fields \eqref{action coords} is given by $\delta_X \Sigma(u) = - X^\sharp \big( \Sigma(u) \big)$, or more explicitly in components
\begin{subequations} \label{inf var p q t}
\begin{align}
    \delta_X q^\mu(u) &= - \lambda^a X_a^\mu(q(u))\\
    \delta_X p_\mu(u) &= \lambda^a p_\nu(u)\parder{X_a^\nu}{q^\mu}(q(u))\\
    \delta_X t^i(u) &= 0
\end{align}
\end{subequations}
for arbitrary $X=\lambda^a X_a \in\g$. Noting that
\begin{equation} \label{delta p dq}
\delta_X \bigg( p_\mu \frac{\partial q^\mu}{\partial u^j} \bigg) = \lambda^a p_\nu \parder{X_a^\nu}{q^\mu} \frac{\partial q^\mu}{\partial u^j} - \lambda^a p_\nu \frac{\partial X_a^\nu}{\partial u^j} = 0\,,
\end{equation}
where the last step is by the chain rule for the function $X_a^\nu(q(u))$, the corresponding variation of the action $S_\Gamma[\Sigma]$ in \eqref{ungauged action} is then
\begin{align} \label{inf_invariance_S}
\delta_X S_\Gamma[\Sigma] = \lambda^a\int_0^1 \bigg(X_a^\mu\frac{\partial H_i}{\partial q^\mu}-p_\nu\parder{X_a^\nu}{q^\mu}\frac{\partial H_i}{\partial p_\mu}\bigg) \frac{\d t^i}{\d s} \d s
= \lambda^a\int_0^1 \big( \mathcal L_{X^\sharp_a} H_i \big) \frac{\d t^i}{\d s} \d s\,,
\end{align}
where the second step is by definition \eqref{action coords} of $X^\sharp_a$. Now the resulting expression in \eqref{inf_invariance_S} should be zero for any $\lambda^a$, any map $\Sigma$ and any curve $\Gamma$ from which \eqref{Hi invariant} follows.

The expression $\mu_a = - p_\nu X^\nu_a$ for the Noether charge associated with this global symmetry can be obtained by using the standard trick of promoting the constant parameters $\lambda^a \in \CC$ to functions $\lambda^a : \RR^n \to \CC$. This leads to an additional term $\frac{\partial \lambda^a}{\partial u^j} \mu_a$ on the right-hand side of \eqref{delta p dq} so that the variation of the action now reads $\delta_X S_\Gamma[\Sigma] = \int_0^1 \frac{\d \lambda^a}{\d s} \mu_a \d s$, as required.
\end{proof}

In the Hamiltonian formalism, the Noether charges $\mu_a$ from Proposition \ref{prop_global_symmetry} are encoded in a moment map $\mu : T^\ast M \to \g^\ast$ such that
\begin{equation}\label{moment_map}
\langle\mu(p,q),X_a\rangle=\mu_a(p,q)=-p_\nu X^\nu_a(q),
\end{equation}
where $\langle~,~\rangle : \g^\ast \times \g \to \CC$ denotes the canonical pairing.  In general, a moment map is required to satisfy the equations
\begin{equation} \label{moment map eqs}
\langle\d \mu, X_a\rangle = X_a^\sharp \lrcorner \omega \quad\text{and}\quad \mathcal{L}_{X_a^\sharp} \mu + \mathrm{ad}^\ast_{X_a} \mu = 0.
\end{equation}
In our case, the first of these follows by a direct calculation from \eqref{action coords}, \eqref{moment_map} and $\omega=\d p_\mu\wedge \d q^\mu$.  The second is proved as follows.  First we note that $\langle\mu,X_b\rangle=-X_b^\sharp\lrcorner\alpha$ and that $\mathcal{L}_{X_a^\sharp}\alpha=0$.  So
\begin{equation}
\langle\mathcal{L}_{X_a^\sharp}\mu,X_b\rangle=-\mathcal{L}_{X_a^\sharp}(X_b^\sharp\lrcorner\alpha)=-[X_a^\sharp,X_b^\sharp]\lrcorner\alpha = \langle\mu,[X_a,X_b]\rangle=-\langle\mathrm{ad}^\ast_{X_a} \mu,X_b\rangle\,,
\end{equation}
where in the second last step we used the fact that $[X_a^\sharp,X_b^\sharp] = [X_a,X_b]^\sharp$. Hence the second equation of \eqref{moment map eqs} holds.

\subsection{Gauged Lagrangian 1-form and gauged univariational principle}\label{gauged_univariational_principle}

When a physical system is invariant under the action of a Lie group $G$, such as in the context of \S\ref{sec: group actions}, this symmetry can be used to reduce the number of degrees of freedom. Indeed, since the components of the moment map $\mu : T^\ast M \to \g^\ast$ are preserved under all the Hamiltonian flows $\mathcal X_{H_i}$, i.e. $\mathcal X_{H_i} \mu_a = 0$, one may consistently restrict to the zero-level set $\mu^{-1}(0) \subset T^\ast M$ of this moment map by imposing the constraint $\mu(p,q) = 0$. Furthermore, by the $G$-equivariance property of the moment map, the zero-level set $\mu^{-1}(0)$ is invariant under the action of $G$ so that we may further reduce the number of degrees of freedom by working on the quotient space $\mu^{-1}(0)/G$.

In order to implement the above symplectic reduction procedure in the variational setting, and thereby construct a Lagrangian $1$-form and action for the reduced system, we can impose the constraint $\mu(p,q) = 0$ in $\g^\ast$ by introducing a $\g$-valued Lagrange multiplier.
Specifically, since we want to gauge the action of $G$ on the map $\Sigma : \RR^n \to T^\ast M \times \RR^n$ described infinitesimally in \eqref{inf var p q t},
we introduce a $\g$-valued Lagrange multiplier $1$-form on $\RR^n$ which we denote by
\begin{equation} \label{cal A def}
\mathcal A =\mathcal A_j \d u^j = \mathcal A^a_j X_a \d u^j \in \Omega^1(\RR^n, \g) \,.
\end{equation}

Using the canonical pairing $\langle~,~\rangle : \g^\ast \times \g \to \CC$ we can combine \eqref{cal A def} with the moment map $\mu : T^\ast M \to \g^\ast$, which we view as a map $\mu : T^\ast M \times \RR^n \to \g^\ast$ that is constant along $\RR^n$, to obtain a $1$-form $\langle\Sigma^\ast \mu, \mathcal A\rangle \in \Omega^1(\RR^n)$. The gauging procedure then simply consists in adding this term to the pullback $\Sigma^\ast \Lag \in \Omega^1(\RR^n)$ of the Lagrangian $1$-form $\Lag$. We obtain a family of actions for a map $\Sigma : \RR^n \to T^\ast M \times \RR^n$ and a $1$-form $\mathcal A \in \Omega^1(\RR^n, \g)$, parametrised by curves $\Gamma : (0,1) \to \RR^n$,
\begin{equation}\label{uni_gauged action}
S_\Gamma[\Sigma, {\cal A}] = \int_0^1 \Gamma^\ast \Big( \Sigma^\ast \Lag + \langle\Sigma^\ast \mu, {\cal A}\rangle \Big) \,.
\end{equation}
The constraint $\mu(p,q) = 0$ will now be enforced dynamically through the equations of motion for $\mathcal A$, see \S\ref{sec: gauged univar princ} below. Moreover, the effect of adding this new term to the Lagrangian is to promote the $G$-symmetry we started with, from \S\ref{sec: group actions}, to a gauge symmetry as we now show.

\subsubsection{Gauging a symmetry in a Lagrangian $1$-form} \label{sec: gauging symmetry}

We consider local gauge transformations parametrised by smooth maps $g: \RR^n \to G$, which act pointwise on the map $\Sigma : \RR^n \to T^\ast M \times \RR^n$ and on the Lagrange multiplier $\mathcal A \in \Omega^1(\RR^n, \g)$ as a gauge transformation, i.e.
\begin{subequations} \label{gauge_transfo_A}
\begin{align}
\label{gauge_transfo_A a} \Sigma(u) &\longmapsto g(u) \cdot \Sigma(u) \,,\\
\label{gauge_transfo_A b} \mathcal A &\longmapsto g \,\mathcal A \,g^{-1}- \d_{\RR^n} g \,g^{-1} \,,
\end{align}
\end{subequations}
where $\d_{\RR^n}$ denotes the de Rham differential on $\RR^n$.
Due to its transformation property \eqref{gauge_transfo_A b}, we will henceforth often refer to the Lagrange multiplier $1$-form $\mathcal A \in \Omega^1(\RR^n, \g)$ as a gauge field.

\begin{proposition}\label{prop_local_symmetry}
The action $S_\Gamma[\Sigma,\mathcal A]$ is invariant under an infinitesimal version of the $G$-valued local gauge transformation in \eqref{gauge_transfo_A} if and only if the functions $H_i$ are $G$-invariant, i.e.
\begin{equation}\label{moment map eqs 2}
\mathcal{L}_{X^\sharp_a}H_i=0\,\,,~~i=1,\dots,n\,,~~a=1,\dots, {\rm dim}~\g\,.
\end{equation}
\end{proposition}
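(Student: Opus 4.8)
The plan is to carry out the infinitesimal version of the gauge transformation \eqref{gauge_transfo_A} and compute the resulting variation of $S_\Gamma[\Sigma,\mathcal A]$, showing it equals a term proportional to $\mathcal{L}_{X^\sharp_a}H_i$ exactly as in the global case treated in Proposition \ref{prop_global_symmetry}. First I would write $g = \exp(\lambda^a X_a)$ with $\lambda^a : \RR^n \to \CC$ now a \emph{local} parameter (no longer constant), so that on the map $\Sigma$ the transformation \eqref{gauge_transfo_A a} reproduces the component formulas \eqref{inf var p q t}, while on the gauge field the transformation \eqref{gauge_transfo_A b} gives infinitesimally $\delta_X \mathcal A = \d_{\RR^n}\lambda^a X_a + \lambda^a f_{ab}{}^c \mathcal A^b X_c$, i.e. $\delta_X \mathcal A_j^a = \partial_{u^j}\lambda^a + \lambda^b f_{bc}{}^a \mathcal A_j^c$. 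The action \eqref{uni_gauged action} now has two pieces to vary: the pullback $\Sigma^\ast \Lag$ and the coupling $\langle \Sigma^\ast\mu, \mathcal A\rangle$.

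The key computational step is to reuse the calculation from Proposition \ref{prop_global_symmetry}. There, promoting $\lambda^a$ to a function was precisely the ``standard trick'' used to extract the Noether charge, and it produced $\delta_X(\Sigma^\ast\Lag) = (\mathcal{L}_{X^\sharp_a}H_i)\lambda^a\,\d t^i + \tfrac{\partial\lambda^a}{\partial u^j}\mu_a\,\d u^j$, where the first term comes from \eqref{inf_invariance_S} and the second is the anomalous term generated on the right of \eqref{delta p dq} when $\lambda^a$ is no longer constant. The point is that the second, $\d\lambda^a$-term is exactly what the new coupling term is designed to cancel. Indeed I would compute $\delta_X\langle\Sigma^\ast\mu,\mathcal A\rangle = \langle \delta_X(\Sigma^\ast\mu),\mathcal A\rangle + \langle \Sigma^\ast\mu, \delta_X\mathcal A\rangle$, using the equivariance relation $\mathcal{L}_{X_a^\sharp}\mu + \mathrm{ad}^\ast_{X_a}\mu = 0$ from \eqref{moment map eqs} for the first summand and the inhomogeneous piece $\d_{\RR^n}\lambda^a X_a$ of $\delta_X\mathcal A$ for the second. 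The inhomogeneous piece contributes $\langle\mu,X_a\rangle \d\lambda^a = \mu_a\,\d\lambda^a$, which up to an integration by parts against the boundary conditions on $\Gamma$ cancels the anomalous Noether term from $\delta_X(\Sigma^\ast\Lag)$, while the homogeneous pieces cancel pairwise by the $G$-equivariance of $\mu$ (the $\mathrm{ad}^\ast$ terms from $\delta_X(\Sigma^\ast\mu)$ matching the $f_{ab}{}^c$ terms from $\delta_X\mathcal A$).

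After these cancellations the only surviving contribution is $\lambda^a\int_0^1 \Gamma^\ast\big((\mathcal{L}_{X^\sharp_a}H_i)\,\d t^i\big)$, which must vanish for all $\lambda^a$, all $\Sigma$ and all $\Gamma$ precisely when $\mathcal{L}_{X^\sharp_a}H_i = 0$, giving \eqref{moment map eqs 2}. I expect the main obstacle to be the careful bookkeeping of the two $\mathrm{ad}^\ast$/$f_{ab}{}^c$ contributions and ensuring the $\d\lambda^a$ terms genuinely cancel rather than merely combining into a total derivative that survives at the endpoints; here I would lean on the boundary behaviour of $\Gamma$ (namely $\lim_{s\to 0,1}\lVert\Gamma(s)\rVert=\infty$ together with the fixed boundary data) to discard the boundary term. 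The conceptual content is simply that the coupling $\langle\Sigma^\ast\mu,\mathcal A\rangle$ upgrades the Noether current of the global symmetry into a covariantly-coupled gauge field, so that gauge invariance imposes exactly the same condition \eqref{Hi invariant} on the Hamiltonians as global invariance did.
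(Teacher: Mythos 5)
Your plan is essentially the paper's own proof: promote $\lambda^a$ to functions on $\RR^n$, reuse the computation of Proposition \ref{prop_global_symmetry} to produce the anomalous term $\mu_a\,\partial_{u^j}\lambda^a$ in the variation of the kinetic part, cancel it against the inhomogeneous piece of $\delta_X\mathcal A$ in the coupling term, kill the homogeneous remainder with the equivariance relation $\mathcal{L}_{X_a^\sharp}\mu + \mathrm{ad}^\ast_{X_a}\mu = 0$ from \eqref{moment map eqs}, and conclude that the surviving contribution $\lambda^a\big(\mathcal L_{X_a^\sharp}H_i\big)$ vanishes for all $\lambda^a$, $\Sigma$ and $\Gamma$ precisely when \eqref{moment map eqs 2} holds.

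One detail, however, would break the argument as literally written. Linearising \eqref{gauge_transfo_A b} with the same $g \approx 1 + \lambda^a X_a$ whose action on $\Sigma$ reproduces \eqref{inf var p q t} gives
\begin{equation*}
\delta_X\mathcal A^a_j = -\,\partial_{u^j}\lambda^a - f_{bc}{}^a\,\mathcal A^b_j\,\lambda^c \,,
\end{equation*}
i.e.\ the inhomogeneous term carries a \emph{minus} sign, opposite to the one you wrote. With your sign, the coupling term contributes $+\mu_a\,\partial_{u^j}\lambda^a$, which adds to (rather than cancels) the anomalous $+\mu_a\,\partial_{u^j}\lambda^a$ coming from \eqref{delta p dq}; integration by parts does not rescue this, since we are proving invariance off shell, where $\mu_a$ is not constant along $\Gamma$ and the resulting $\lambda^a\,\d\mu_a$ term does not vanish. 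With the correct sign the cancellation is pointwise and purely algebraic --- $\mu_a\,\delta_X\mathcal A^a_j$ supplies $-\mu_a\,\partial_{u^j}\lambda^a$, cancelling the anomalous term identically in the integrand --- so, contrary to your hedge, neither integration by parts nor the boundary behaviour of $\Gamma$ plays any role at this step. Once the sign is repaired, the rest of your bookkeeping (the $\mathrm{ad}^\ast$ terms from $\delta_X(\Sigma^\ast\mu)$ matching the structure-constant terms from the homogeneous part of $\delta_X\mathcal A$) reproduces the paper's computation \eqref{variaton action gauged} exactly.
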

\begin{proof}
Let $\Sigma:\RR^n \to T^\ast M\times \RR^n$, $(u^j) \mapsto \big( p_\mu(u), q^\mu(u), t^i(u) \big)$ be a given map as in the proof of Proposition \ref{prop_global_symmetry}. The action \eqref{uni_gauged action} for this $\Sigma$ and any curve $\Gamma : (0,1) \to \RR^n$ then explicitly reads
\begin{equation} \label{uni_gauged action explicit}
S_\Gamma[\Sigma,\mathcal A] = \int_0^1 \bigg(p_\mu \frac{\partial q^\mu}{\partial u^j} -H_i(p,q)\frac{\partial t^i}{\partial u^j} + \mu_a(p,q) \mathcal A^a_j \bigg) \frac{\d u^j}{\d s} \d s\,,
\end{equation}
where $p_\mu$, $q^\mu$, $t^i$ and $A^a_j$ all depend on $s$ through their dependence on the functions $u^j(s)$.
Consider an infinitesimal gauge transformation parametrised by a $\g$-valued function $X:\RR^n \to\g$, which we write in components as $X = \lambda^a(u) X_a$. The variations of the components of $\Sigma$ and $\mathcal A$ under the infinitesimal left action of $G$ on $T^\ast M \times \RR^n$ then read (cf. the variations \eqref{inf var p q t} where $\lambda^a$ was constant)
\begin{subequations}
\begin{align}
    \delta_X q^\mu(u) &= - \lambda^a(u)X_a^\mu\big(q(u)\big)\\
    \delta_X p_\mu(u) &= \lambda^a(u) p_\nu(u)\parder{X_a^\nu}{q^\mu}\big(q(u)\big)\\
    \delta_X t^i(u) &= 0 \\
    \delta_X \mathcal A^a_j(u) &= - \frac{\partial \lambda^a}{\partial u^j} - f_{bc}{}^a \mathcal A^b_j (u) \lambda^c(u)\,.
\end{align}
\end{subequations}
By the exact same computation as in the proof of Proposition \ref{prop_global_symmetry}, where in the end of that proof $\lambda^a$ was already treated as a function $\lambda^a : \RR^n \to \CC$, the corresponding variation of $S_\Gamma[\Sigma, \mathcal A]$ reads
\begin{equation} \label{variaton action gauged}
\delta_X S_\Gamma[\Sigma, \mathcal A] = \int_0^1\left[ \lambda^a \big( \mathcal L_{X^\sharp_a} H_i \big) \frac{\partial t^i}{\partial u^j} - \lambda^a \mathcal A^b_j \Big(\mathcal L_{X^\sharp_a}\mu_b -f_{ab}{}^c\mu_c\Big) \right] \frac{\d u^j}{\d s} \d s\,.
\end{equation}
We note, in particular, that the term $- \mu_a \frac{\partial \lambda^a}{\partial u^j}$ appearing in the variation $\mu_a \delta_X A^a_j$ cancels with the term $- p_\nu X^\nu_a \frac{\partial \lambda^a}{\partial u^j}$ from the variation \eqref{delta p dq} when $\lambda^a$ is not constant.
The right-hand side of \eqref{variaton action gauged} needs to vanish for all functions $\lambda^a$, $\mathcal A^a_i$ and all maps $\Sigma$ and curves $\Gamma$, but the last bracketed term vanishes by the second relation in \eqref{moment map eqs} since
\begin{equation}
    \langle\mathcal{L}_{X^\sharp_a}\mu, X_b\rangle+ \langle\mathrm{ad}^\ast_{X_a}\mu, X_b\rangle = \mathcal{L}_{X^\sharp_a}\mu_b - \langle\mu, [X_a,X_b]\rangle = \mathcal{L}_{X^\sharp_a}\mu_b -f_{ab}{}^c\mu_c \,.
\end{equation}
Therefore gauge invariance is equivalent to the condition \eqref{moment map eqs 2}.
\end{proof}

By combining Propositions \ref{prop_global_symmetry} and \ref{prop_local_symmetry} we see that the gauged action $S_\Gamma[\Sigma, \mathcal A]$ in \eqref{uni_gauged action} is invariant under the gauge group $C^\infty(\RR^n, G)$ if and only if the original action $S_\Gamma[\Sigma]$ in \eqref{ungauged action} is invariant under the Lie group $G$.

\subsubsection{Gauged univariational principle} \label{sec: gauged univar princ}

Recall from \S\ref{sec: univar principle} that the univariational principle seeks an immersion $\Sigma : \RR^n \to T^\ast M\times \RR^n$ such that for all curves $\Gamma : (0,1) \to \RR^n$ we have $\delta_\Sigma S_\Gamma[\Sigma] = 0$.
We introduce the gauged version in the following definition.  To do so we need to introduce the gauge-covariant derivative of $\Sigma$.  This is given by the formula
\begin{equation}\label{eq:covariant derivative Sigma}
    D^{\A}_u\Sigma=\d_u\Sigma-\A^\sharp_u\,:\,T_u\RR^n\to T_{\Sigma(u)}(T^\ast M\times \RR^n),
\end{equation}
in which $\d_u\Sigma$ is the usual differential of $\Sigma$ and $\A^\sharp_u$ is defined by
\begin{equation}
\A^\sharp_u :T_u\RR^n\to T_{\Sigma(u)}(T^\ast M\times \RR^n), \quad \parder{}{u^i}\mapsto \A_i^a(u)X_a^\sharp(\Sigma(u))\,.
\end{equation}
where we recall that $X_a^\sharp$ is given in \eqref{action coords}.  Put differently, $\A^\sharp_u$ is the composition of $\A:T_u\RR^n\to \g$, the map $\g\to \Gamma(T(T^\ast M\times \RR^n))$ induced by the action of $G$ on $T^\ast M\times \RR^n$, and the map $\Gamma(T(T^\ast M\times \RR^n))\to T_{\Sigma(u)}(T^\ast M\times \RR^n)$ given by evaluation at $\Sigma(u)$.  The derivative \eqref{eq:covariant derivative Sigma} is gauge-covariant in the sense that
\begin{equation}
    D_u^{g\A g^{-1}- \d gg^{-1}}(g\cdot \Sigma)= \d_{\Sigma(u)}g(u) \circ D_u^\A\Sigma,
\end{equation}
where $\d_{\Sigma(u)}g(u):T_{\Sigma(u)}(T^\ast M\times\RR^n)\to T_{g(u)\cdot \Sigma(u)}(T^\ast M\times\RR^n)$ is the differential of $g(u):T^\ast M\times\RR^n\to T^\ast M\times\RR^n$.
\begin{definition}
The \emph{gauged univariational principle} seeks a map
\begin{equation*}
\Sigma : \RR^n \longrightarrow T^\ast M\times \RR^n
\end{equation*}
and a gauge field ${\cal A} \in \Omega^1(\RR^n, \g)$ such that the linear map $ D^{\A}_u\Sigma=\d_u\Sigma-\A^\sharp_u$ is injective for every $u \in \RR^n$,
and such that
the pair $(\Sigma, {\cal A})$ is simultaneously a critical point of the family of actions $S_\Gamma[\Sigma, {\cal A}]$ for all curves $\Gamma : (0,1) \to \RR^n$, namely such that
\begin{equation*}
\delta_\Sigma S_\Gamma[\Sigma, \mathcal A] = 0 \quad \text{and} \quad \delta_{\mathcal A} S_\Gamma[\Sigma, \mathcal A] = 0 \,.
\end{equation*}
\end{definition}
\begin{remark}
The condition that $\d_u\Sigma-\A^\sharp_u$ is injective is the natural gauge-covariant analogue of the requirement that the map $\Sigma$ in \eqref{map Sigma} is an immersion.
\end{remark}

\begin{theorem} \label{thm: gauged univar}
The gauged univariational principle applied to the gauge invariant action $S_\Gamma[\Sigma, \mathcal A]$ in \eqref{uni_gauged action} gives rise to the following set of equations
\begin{subequations} \label{EL gauged univar thm}
\begin{align} \label{A variation}
\mu(p,q) &= 0\,,\\
\label{univariational eqs1}
\parder{q^\mu}{t^i} - \frac{\partial H_i}{\partial p_\mu} &= \widetilde{\mathcal A}^a_i X_a^\mu\,, \\
\label{univariational eqs2}
\parder{p_\mu}{t^i}+ \frac{\partial H_i}{\partial q^\mu} &=- \widetilde{\mathcal A}^a_i p_\nu \parder{X_a^\nu}{q^\mu}\,, \\
\label{univariational eqs3} \{H_i,H_j\} &= 0 \,,
\end{align}
\end{subequations}
where the composition $\pi_{\RR^n} \circ \Sigma : \RR^n \to \RR^n$, $(u^j) \mapsto \big( t^i(u) \big)$ with the projection $\pi_{\RR^n} : T^\ast M \times \RR^n \to \RR^n$, $( p_\mu, q^\mu,t^i )\mapsto (t^i)$ to the second factor is a (local) diffeomorphism and is used to define $\widetilde{\mathcal A}^a_i = \frac{\partial u^j}{\partial t^i} \mathcal A^a_j$.

Moreover, the $\g$-valued connection $\d_{\RR^n} + \mathcal A$ is flat, i.e. it satisfies the zero-curvature equation $F = \d_{\RR^n}\mathcal A + \frac 12 [\mathcal A, \mathcal A] = 0$ or in components
\begin{equation}
\label{flatness}
    F_{ij}^a \coloneqq \parder{\widetilde{\mathcal A}_j^a}{t^i} - \parder{\widetilde{\mathcal A}_i^a}{t^j}+f_{bc}{}^a \widetilde{\mathcal A}_i^b \widetilde{\mathcal A}_j^c=0\,,~~a=1,\dots,{\rm dim}~\g\,,~~i,j=1,\dots,n\,.
\end{equation}
\end{theorem}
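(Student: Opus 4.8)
The plan is to peel off the four groups of equations in \eqref{EL gauged univar thm} by varying $\mathcal A$ and $\Sigma$ in turn, and then to obtain the flatness \eqref{flatness} as the integrability condition of the flows they define. The two structural facts I will lean on throughout are that gauge invariance forces the $G$-invariance \eqref{moment map eqs 2}, and that the fundamental vector fields are Hamiltonian for the moment map, namely $X_a^\sharp=\mathcal X_{\mu_a}$ by the first relation in \eqref{moment map eqs}. Varying $\mathcal A$ is immediate: in \eqref{uni_gauged action explicit} the components $\mathcal A^a_j$ appear only through the linear term $\mu_a\mathcal A^a_j$, so $\delta_{\mathcal A}S_\Gamma=\int_0^1\mu_a\,\delta\mathcal A^a_j\,\tfrac{\d u^j}{\d s}\,\d s$, and demanding that this vanish for all $\Gamma$ and $\delta\mathcal A$ yields the constraint \eqref{A variation}.

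Varying $\Sigma$ is the substantive step. After inserting $\mu_a=-p_\nu X^\nu_a$ and integrating by parts in $s$ (the boundary terms dropping by the conditions on $\delta q,\delta t$ at infinity), the variation organises as $\delta_\Sigma S_\Gamma=\int_\Gamma(\delta p_\mu\,\theta^\mu_{p}+\delta q^\mu\,\theta_{q}^{\mu}+\delta t^i\,\theta_{t}^{i})$ with three $1$-forms $\theta^\mu_{p},\theta^\mu_{q},\theta^i_t\in\Omega^1(\RR^n)$. Since the multipliers are arbitrary functions and $\Gamma$ sweeps out every direction at every point, criticality for all $\Gamma$ is equivalent to the pointwise vanishing of each of these $1$-forms; this is the multiform upgrade of the ordinary single-curve Euler--Lagrange equations, and is where the univariational principle does its work. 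Concretely the three conditions read, in the coordinates $u^j$, as gauged versions of Hamilton's equations for $\partial q^\mu/\partial u^j$ and $\partial p_\mu/\partial u^j$, together with $\tfrac{\partial H_i}{\partial p_\mu}\tfrac{\partial p_\mu}{\partial u^j}+\tfrac{\partial H_i}{\partial q^\mu}\tfrac{\partial q^\mu}{\partial u^j}=0$.

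To rewrite these in the stated $t$-form I would use the injectivity hypothesis. Feeding the first two conditions into \eqref{action coords} gives $(\d_u\Sigma-\mathcal A^\sharp_u)(\partial_{u^j})=\tfrac{\partial t^i}{\partial u^j}\big(\partial_{t^i}-\mathcal X_{H_i}\big)$; as the $n$ vectors $\partial_{t^i}-\mathcal X_{H_i}$ are linearly independent, injectivity of $\d_u\Sigma-\mathcal A^\sharp_u$ forces $\det(\partial t^i/\partial u^j)\neq0$, so $\pi_{\RR^n}\circ\Sigma$ is a local diffeomorphism and $p,q$ become honest functions of $t$. Contracting the three conditions with the inverse Jacobian $\partial u^j/\partial t^i$ and setting $\widetilde{\mathcal A}^a_i=\tfrac{\partial u^j}{\partial t^i}\mathcal A^a_j$ produces \eqref{univariational eqs1} and \eqref{univariational eqs2} at once, while the third becomes $\partial H_i/\partial t^k=0$. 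Substituting \eqref{univariational eqs1}--\eqref{univariational eqs2} into this last equation splits it into a Poisson bracket $\{H_i,H_k\}$ plus a remainder $\widetilde{\mathcal A}^a_k\,\mathcal L_{X^\sharp_a}H_i$ that vanishes by \eqref{moment map eqs 2}, leaving the closure relation \eqref{univariational eqs3}.

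Finally, for \eqref{flatness} I would read \eqref{univariational eqs1}--\eqref{univariational eqs2} as saying that on functions of $(p,q)$ the time flow is the derivation $\partial_{t^i}=-\mathcal X_{H_i}+\widetilde{\mathcal A}^a_i X_a^\sharp$, whose coefficients $\widetilde{\mathcal A}^a_i$ depend on $t$ alone. Because $p,q$ are genuine functions of $t$, the mixed partials commute, $[\partial_{t^i},\partial_{t^j}]=0$, and expanding this commutator gives
\begin{equation*}
0=\Big(\tfrac{\partial\widetilde{\mathcal A}^a_j}{\partial t^i}-\tfrac{\partial\widetilde{\mathcal A}^a_i}{\partial t^j}\Big)X_a^\sharp+\big[-\mathcal X_{H_i}+\widetilde{\mathcal A}^b_i X_b^\sharp,\,-\mathcal X_{H_j}+\widetilde{\mathcal A}^a_j X_a^\sharp\big].
\end{equation*}
In the bracket, $[\mathcal X_{H_i},\mathcal X_{H_j}]$ is the Hamiltonian vector field of $\{H_i,H_j\}$ and vanishes by \eqref{univariational eqs3}; the cross-terms $[\mathcal X_{H_i},X_a^\sharp]=[\mathcal X_{H_i},\mathcal X_{\mu_a}]$ are Hamiltonian for $\{H_i,\mu_a\}=\mathcal L_{X^\sharp_a}H_i$ and vanish by \eqref{moment map eqs 2}; and $[X_b^\sharp,X_a^\sharp]=f_{ba}{}^cX_c^\sharp$ supplies the structure-constant term, so what remains is exactly $F^a_{ij}X_a^\sharp=0$ with $F^a_{ij}$ as in \eqref{flatness}. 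Since the $G$-action is free the $X_a^\sharp$ are pointwise linearly independent, giving $F^a_{ij}=0$. I expect this last step to be the main obstacle: it is the only place where the closure relation and the $G$-invariance (the latter repackaged via $X_a^\sharp=\mathcal X_{\mu_a}$) must be combined, and the collapse of the commutator onto the pure curvature relies on $\widetilde{\mathcal A}^a_i$ being a function of $t$ only, so that the Hamiltonian and fundamental vector fields annihilate it.
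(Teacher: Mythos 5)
Your proposal is correct and follows essentially the same route as the paper's proof: the constraint $\mu=0$ from varying $\mathcal A$, the Euler--Lagrange equations in the $u^j$ coordinates from varying $\Sigma$, invertibility of the Jacobian $\big(\tfrac{\partial t^i}{\partial u^j}\big)$ from the injectivity hypothesis on $\d_u\Sigma-\A^\sharp_u$, the closure relation from substituting the flow equations and using $\{H_i,\mu_a\}=\mathcal L_{X^\sharp_a}H_i=0$, and flatness from commuting mixed $t$-derivatives, the identities $[X^\sharp_a,\mathcal X_{H_i}]=\mathcal X_{X^\sharp_a H_i}$ and $[\mathcal X_{H_i},\mathcal X_{H_j}]=\mathcal X_{\{H_i,H_j\}}$, and freeness of the $G$-action. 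The only cosmetic difference is your invertibility step, which verifies the on-shell identity $(\d_u\Sigma-\A^\sharp_u)(\partial_{u^j})=\tfrac{\partial t^i}{\partial u^j}\big(\partial_{t^i}-\mathcal X_{H_i}\big)$ by direct computation, whereas the paper reaches the same conclusion by contracting the compact form of the Euler--Lagrange equations with $\d\Lag$ and invoking nondegeneracy of $\omega$; both arguments rest on the same hypothesis and are interchangeable.
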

\begin{proof}
Let $\Sigma : \RR^n \to T^\ast M\times \RR^n$, $(u^j) \mapsto \big( p_\mu(u), q^\mu(u), t^i(u) \big)$ and $\Gamma : (0,1) \to \RR^n$ be arbitrary maps. The action \eqref{uni_gauged action} for these can be written as in \eqref{uni_gauged action explicit}.

We first perform a variation of the action \eqref{uni_gauged action explicit} with respect to the functions $\mathcal A^a_j : \RR^n \to \RR$
\begin{equation}
\delta_{\mathcal A} S_\Gamma[\Sigma, \mathcal A] = \int_0^1 \mu_a(p, q) \delta \mathcal A^a_j\big( u(s) \big) \frac{\d u^j}{\d s} \d s\,.
\end{equation}
In order for this to vanish for arbitrary variations $\delta \mathcal A^a_j$ we must have
\begin{equation} \label{EL constraint gauged univar}
\mu_a(p, q) = 0\,,
\end{equation}
which is the expected constraint equation $\mu = 0$.

Next, consider the variation of the action \eqref{uni_gauged action explicit} with respect to the map $\Sigma : \RR^n \to T^\ast M \times \RR^n$. Under an arbitrary variation $\delta \Sigma(u) = \big( \delta p_\mu(u), \delta q^\mu(u), \delta t^i(u) \big)$ satisfying the boundary conditions $0=\lim_{\lVert u \rVert\to\infty} \delta q^\mu(u)=\lim_{\lVert u \rVert\to\infty} \delta t^i(u)$, the action varies as
\begin{multline} \label{delta gamma gauged action}
\delta_\Sigma S_\Gamma[\Sigma, \mathcal A] =\int_0^1\frac{\d}{\d s}\bigg[p_\mu\delta q^\mu-H_i\delta t^i\bigg]\d s + \int_0^1\bigg[ \bigg( \frac{\partial q^\mu}{\partial u^j} - \frac{\partial t^i}{\partial u^j} \frac{\partial H_i}{\partial p_\mu} + \frac{\partial \mu_a}{\partial p_\mu} \mathcal A^a_j \bigg) \, \delta p_\mu \\
- \bigg( \frac{\partial p_\mu}{\partial u^j} + \frac{\partial t^i}{\partial u^j} \frac{\partial H_i}{\partial q^\mu} - \frac{\partial \mu_a}{\partial q^\mu} \mathcal A^a_j \bigg) \, \delta q^\mu 
+ \bigg( \frac{\partial H_i}{\partial p_\mu} \frac{\partial p_\mu}{\partial u^j} + \frac{\partial H_i}{\partial q^\mu} \frac{\partial q^\mu}{\partial u^j} \bigg) \, \delta t^i \bigg] \frac{\d u^j}{\d s} \d s.
\end{multline}
The total derivative term on the right-hand side of this equation vanishes due to the boundary conditions on $\delta\Sigma$ and the boundary condition $\lim_{s\to 0,1} \lVert \Gamma(s) \rVert = \infty$ on $\Gamma$. We now want the above variation to vanish for all $\delta \Sigma$ and all curves $\Gamma$, and hence for arbitrary $\frac{\d u^j}{\d s}$, which is equivalent to the system of equations
\begin{subequations} \label{eom gauged action}
\begin{align}
\label{eom gauged action a} \frac{\partial q^\mu}{\partial u^j} &= \frac{\partial t^i}{\partial u^j} \frac{\partial H_i}{\partial p_\mu} - \frac{\partial \mu_a}{\partial p_\mu} \mathcal A^a_j \,,\\
\label{eom gauged action b} \frac{\partial p_\mu}{\partial u^j} &= - \frac{\partial t^i}{\partial u^j} \frac{\partial H_i}{\partial q^\mu} + \frac{\partial \mu_a}{\partial q^\mu} \mathcal A^a_j \,,\\
\label{eom gauged action c} \frac{\partial H_i}{\partial p_\mu} \frac{\partial p_\mu}{\partial u^j} + \frac{\partial H_i}{\partial q^\mu} \frac{\partial q^\mu}{\partial u^j} &= 0 \,.
\end{align}
\end{subequations}

Next, we claim that these equations of motion imply that the Jacobian matrix $\big( \frac{\partial t^j}{\partial u^j} \big)$ must be invertible, thus ensuring that the composition $\pi_{\RR^n} \circ \Sigma : \RR^n \to \RR^n$ is a (local) diffeomorphism. To show this we adapt the similar argument \cite[\S2.2]{CH} to the present case with gauge field.  
The differential of $\Sigma$ at $u\in\RR^n$ reads
\begin{equation}
\label{push_forward}
\d_u\Sigma:T_u\RR^n\to T_{\Sigma(u)}(T^\ast M\times \RR^n), \quad \parder{}{u^i}\mapsto \parder{p_\mu}{u^i}\parder{}{p_\mu} + \parder{q^\mu}{u^i}\parder{}{q^\mu} + \parder{t^j}{u^i}\parder{}{t^j}\,,
\end{equation}
and therefore proving the claim is equivalent to showing that the tangent map of the composition $\pi_{\RR^n} \circ\Sigma:\RR^n\to\RR^n$, $u^j\mapsto t^i(u)$ at every $u \in \RR^n$, namely
\begin{equation}
\d_u (\pi_{\RR^n} \circ \Sigma)= \d_{\Sigma(u)} \pi_{\RR^n} \circ \d_u \Sigma:T_u \RR^n\to T_{\pi_{\RR^n}(\Sigma(u))}\RR^n,\quad \parder{}{u^i}\mapsto \parder{t^j}{u^i}\parder{}{t^j} \,,
\end{equation}
is injective when the univariational equations \eqref{eom gauged action a}-\eqref{eom gauged action c} hold. And to do this we will use the fact that the latter can equivalently be written as 
\begin{equation}
\label{compact_univ_eqs}
\d_u\Sigma\left(\parder{}{u^i}\right)\lrcorner \d\Lag -\A^\sharp_u\left(\parder{}{u^i}\right)\lrcorner\omega=0\quad\forall i,
\end{equation}
in which
\begin{equation}
\label{form_dL}
\d\Lag = \omega-\d H_i\wedge \d t^i = \d p_\mu\wedge \d q^\mu - \parder{H_i}{p_\mu}\d p_\mu\wedge \d t^i - \parder{H_i}{q^\mu}\d  q^\mu\wedge \d t^i.
\end{equation}

So let $V\in\ker \big( \d_{\Sigma(u)} \pi_{\RR^n} \circ \d_u \Sigma \big)$. On the one hand, \eqref{compact_univ_eqs} and \eqref{form_dL} imply
\begin{equation}
    0= (\d_u \Sigma-\A^\sharp_u)(V)\lrcorner\omega - [\d_u \Sigma(V)\lrcorner \d H_i]\d t^i+[\d_u \Sigma(V)\lrcorner \d t^i]\d H_i.
\end{equation}
On the other hand, $\d_u \Sigma(V)\lrcorner \d t^i=(\d_{\Sigma(u)} \pi_{\RR^n} \circ \d_u \Sigma)(V)\lrcorner \d u^i=0$ since $V\in\ker \big( \d_{\Sigma(u)} \pi_{\RR^n} \circ \d_u \Sigma \big)$.
Thus we are left with
\begin{equation}
(\d_u \Sigma-\A^\sharp_u)(V)\lrcorner\omega - [\d_u \Sigma(V)\lrcorner \d H_i]\d t^i=0
\end{equation}
and each term must individually vanish since the first belongs to $T^*(T^\ast M)$ and the second belongs to $T^*\RR^n$. In particular, since $\omega$ is nondegenerate, we must have $(\d_u \Sigma-\A^\sharp_u)(V)=0$, from which it follows that $V=0$ since $\d_u \Sigma-\A^\sharp_u$ is injective.
Note that since $\A^\sharp_u\left(\parder{}{u^i}\right)\lrcorner \d H_k=0$ (invariant Hamiltonians), \eqref{compact_univ_eqs} can be compactly written using the gauge-covariant derivative \eqref{eq:covariant derivative Sigma}
\begin{equation}
  D^{\A}_u\Sigma\left(\parder{}{u^i}\right)\lrcorner \d\Lag =0\quad\forall i\,.
\end{equation}
As a consequence, performing a suitable coordinate transformation on $\RR^n$, we may bring $\Sigma$ to the canonical form $(f_\Sigma, \id_{\RR^n})$ for some function $f_\Sigma : \RR^n \to T^\ast M$, $t^i \mapsto \big( p_\mu(t), q^\mu(t) \big)$.
We may then rewrite \eqref{eom gauged action} as
\begin{subequations} \label{EL eqs gauged univar}
\begin{align}
\label{EL eqs gauged univar a} \frac{\partial q^\mu}{\partial t^i} &= \frac{\partial H_i}{\partial p_\mu} - \frac{\partial \mu_a}{\partial p_\mu} \widetilde{\mathcal A}^a_i\,, \\
\label{EL eqs gauged univar b} \frac{\partial p_\mu}{\partial t^i} &= - \frac{\partial H_i}{\partial q^\mu} + \frac{\partial \mu_a}{\partial q^\mu} \widetilde{\mathcal A}^a_i\,, \\
\label{EL eqs gauged univar c} \frac{\partial H_i}{\partial p_\mu} \frac{\partial p_\mu}{\partial t^j} + \frac{\partial H_i}{\partial q^\mu} \frac{\partial q^\mu}{\partial t^j} &= 0\,,
\end{align}
\end{subequations}
where $\widetilde{\mathcal A}^a_i \coloneqq \frac{\partial u^j}{\partial t^i} \mathcal A^a_j$ is defined using the inverse $\big( \frac{\partial u^j}{\partial t^i} \big)$ of the matrix $\big( \frac{\partial t^i}{\partial u^j} \big)$.  Inserting the expression \eqref{moment_map} for the moment map into \eqref{EL eqs gauged univar a} and \eqref{EL eqs gauged univar b} gives \eqref{univariational eqs1} and \eqref{univariational eqs2}.  To obtain \eqref{univariational eqs3} we substitute \eqref{EL eqs gauged univar a} and \eqref{EL eqs gauged univar b} into the left-hand side of \eqref{EL eqs gauged univar c}, yielding
\begin{equation*}
\{ H_i, H_j \} = \{ H_i, \mu_a \} \widetilde{\mathcal A}^a_j\,.
\end{equation*}
The right-hand side vanishes using Proposition \ref{prop_global_symmetry} since $\{ H_i, \mu_a \} = \mathcal L_{X^\sharp_a} H_i = 0$, where the first equality follows using the definitions \eqref{action coords}, \eqref{Ham vec def} and \eqref{moment_map}. It follows that $\{ H_i, H_j \} = 0$, which completes the derivation of the Euler-Lagrange equations \eqref{EL gauged univar thm}.

Finally, relation \eqref{flatness} is a consequence of the vanishing of the Lie bracket of the coordinate vector fields on $\RR^n$ pushed forward by the map $\Sigma$ and restricted to the solution space. More explicitly, under the map $\Sigma$ a coordinate vector field $\parder{}{u^j}$ is sent to the vector field $Z_j=\parder{q^\mu}{u^j}\parder{}{q^\mu}+\parder{p_\mu}{u^j}\parder{}{p_\mu}+\parder{t^k}{u^j}\parder{}{t^k}$ on $T^\ast M\times \RR^n$ (see \eqref{push_forward}) whose restriction to the solution space reads, recalling \eqref{Ham vec def} and \eqref{action coords}, 
\begin{align}
    Z_j&=\left(\frac{\partial t^i}{\partial u^j} \frac{\partial H_i}{\partial p_\mu} - \frac{\partial \mu_a}{\partial p_\mu} \mathcal A^a_j\right)\parder{}{q^\mu}+\left(  - \frac{\partial t^i}{\partial u^j} \frac{\partial H_i}{\partial q^\mu} + \frac{\partial \mu_a}{\partial q^\mu} \mathcal A^a_j\right)\parder{}{p_\mu}+\parder{t^i}{u^j}\parder{}{t^i}\nonumber\\
    &=\frac{\partial t^i}{\partial u^j}\left(-\mathcal X_{H_i} +\widetilde{\mathcal A}_i^aX^\sharp_a+\parder{}{t^i}   \right)\,.
\end{align}
Now $\Big[\parder{}{u^j},\parder{}{u^k}\Big]=0$ implies $[Z_j,Z_k]=0$ which in turn yields
\begin{align}
    0&=\Big[ \mathcal X_{H_i} -\widetilde{\mathcal A}_i^aX^\sharp_a-\parder{}{t^i}  ,  \mathcal X_{H_j} -\widetilde{\mathcal A}_j^aX^\sharp_a-\parder{}{t^j} \Big]\nonumber\\
    &=[\mathcal X_{H_i},\mathcal X_{H_j}]- \widetilde{\mathcal A}_i^a[X^\sharp_a,\mathcal X_{H_j}] + \widetilde{\mathcal A}_j^a[X^\sharp_a,\mathcal X_{H_i}] + \left( \parder{}{t^i}\widetilde{\mathcal A}_j^a-\parder{}{t^j}\widetilde{\mathcal A}_i^a+ f_{bc}{}^a\widetilde{\mathcal A}_i^b\widetilde{\mathcal A}_j^c\right) X^\sharp_a  \nonumber\\
  \label{FX}  &=[\mathcal X_{H_i},\mathcal X_{H_j}]- \widetilde{\mathcal A}_i^a[X^\sharp_a,\mathcal X_{H_j}] + \widetilde{\mathcal A}_j^a[X^\sharp_a,\mathcal X_{H_i}] + F_{ij}^a X^\sharp_a 
\end{align}
where $F^a_{ij}$ are the components of the curvature of $\mathcal A$ defined in the statement of the theorem.
The second and third terms in \eqref{FX} vanish because $[X^\sharp_a,\mathcal X_{H_i}] = \mathcal X_{X^\sharp_a H_i}$ and $X^\sharp_a H_i = \mathcal{L}_{X^\sharp_a} H_i=0$, where the last step uses the condition \eqref{moment map eqs 2} from Proposition \ref{prop_local_symmetry} since we are assuming that $S_\Gamma[\Sigma, \mathcal A]$ is gauge invariant. The first term in \eqref{FX} vanishes because $[\mathcal X_{H_i}, \mathcal X_{H_j}] = \mathcal X_{\{H_i, H_j\}} = 0$ using the fact that $\{H_i,H_j\}=0$, as we have already established. Therefore $F_{ij}^a X^\sharp_a=0$. Since the action of $G$ is free, the tangent vectors $X^\sharp_a$ are linearly independent at each point $(p_\mu,q^\mu)$ of $T^\ast M$, and so we deduce the relation \eqref{flatness}, as required.

\end{proof}

The equation \eqref{A variation} together with \eqref{univariational eqs1}-\eqref{univariational eqs2} for any $i = 1, \ldots, n$ represent Hamiltonian flow equations for a constrained Hamiltonian system. Indeed, the effect of the first equation \eqref{A variation} is to impose the set of constraints $\mu^a(p,q) = 0$ on the phase space $T^\ast M$, restricting the dynamics to the submanifold $\mu^{-1}(0) \subset T^\ast M$. On the other hand, the equations \eqref{univariational eqs1}-\eqref{univariational eqs2} written in the form \eqref{EL eqs gauged univar a}-\eqref{EL eqs gauged univar b} describe, for each $i=1,\ldots, n$, the flow equation for the time-dependent Hamiltonian
\begin{equation}
\widetilde{H}_i(p,q, t) \coloneqq H_i(p,q) - \mu_a(p,q) \widetilde{\mathcal A}^a_i(t) \,.
\end{equation}
This flow is a linear combination of the Hamiltonian flow $\mathcal X_{H_i}$ of the unconstrained system and an arbitrary time-dependent linear combination of the flows $\mathcal X_{\mu_a}$ for $a = 1, \ldots, \dim \g$ along the orbits of the $G$-action, which implements the quotienting by $G$ to $\mu^{-1}(0)/G$. The last equation \eqref{univariational eqs3} encodes the Poisson commutativity of the flows for different $i=1,\ldots, n$ and was obtained in our variational setting (it relates to the so-called closure relation of Lagrangian multiform theory, see e.g. \cite{CDS} for a detailed exposition).

\section{3d mixed BF theory and Hitchin's system} \label{sec: 3d BF}

We will now generalise the construction of \S\ref{sec: univar finite-dim} to an infinite-dimensional context, where the role of the underlying manifold $M$ in \S\ref{sec: univar finite-dim} will be played here by the space $\M$ of holomorphic structures on a principle $G$-bundle $\mathcal P \to C$ over a compact Riemann surface $C$, for some connected Lie group $G$, and the role of the symmetry group $G$ in \S\ref{sec: univar finite-dim} will be played here by the group $\mathcal G = \Aut\mathcal P$ of (fibre-preserving) automorphisms of $\mathcal P$.

The cotangent bundle $T^\ast \mathcal M$ equipped with its free action of $\mathcal G$ is the well-known setup for the famous Hitchin system \cite{H} and by selecting the Hamiltonians of this system as our invariant functions $H_i$ on $T^\ast \mathcal M$, we will arrive in \S\ref{sec: Lag for Hitchin} at a variational description of the Hitchin system. The first main result of this section is that the Lagrangian $1$-form for the Hitchin system on the symplectic quotient $\mu^{-1}(0)/\mathcal G$ is given by the 3d mixed BF Lagrangian $1$-form with type B line defects, in the terminology of \cite{VW}, associated with each of the Hitchin Hamiltonians.
We then go on in \S\ref{sec: adding punctures} to extend the construction to the case of Hitchin systems on a compact Riemann surface $C$ with marked points. Our second main result is to show that this leads to a 3d mixed BF Lagrangian $1$-form with type B \textit{and} type A defects, in the terminology of \cite{VW}.

Since the construction of examples discussed later in \S\ref{sec: Hitchin holomorphic} and \S\ref{sec: examples} will crucially rely on some intricacies of the definition of $T^\ast \mathcal M$ and $\mathcal G$, we start by recalling the relevant details in \S\ref{sec: M and G details} and \S\ref{sec: T*M and G Hitchin}, thereby also setting our conventions.

\subsection{Geometric setup for the Hitchin system}\label{sec: geometric setup}

Let $G$ be a complex connected Lie group with Lie algebra $\g$. Let $C$ be a compact Riemann surface and fix a holomorphic atlas $\{ (U_I, z_I) \}_{I \in \mathcal I}$ of $C$ with $z_I : U_I \to \CC$ local holomorphic coordinates on each open subset $U_I \subset C$ and $\mathcal I$ some indexing set. 

\subsubsection{Holomorphic structures on a principal $G$-bundle} \label{sec: M and G details}

We fix a smooth principal $G$-bundle $\pi :\mathcal P\to C$ which is specified relative to the open cover $\{ U_I \}$ of $C$ by local trivialisations $\psi_I : \pi^{-1}(U_I) \SimTo U_I \times G$, $p \mapsto (\pi(p), f_I(p))$. The principal bundle is equipped with a free right action $G \times \mathcal P \to\mathcal P$, $p \mapsto p \cdot g$ and the local trivialisations should be $G$-equivariant, i.e. $f_I(p \cdot g) = f_I(p) g$ for any $g \in G$. The transition between local trivialisations $\psi_J$ and $\psi_I$ on overlapping charts $U_I \cap U_J \neq \emptyset$ is given by
\begin{equation*}
\psi_I \circ \psi_J^{-1} : (U_I \cap U_J) \times G \, \longrightarrow \, (U_I \cap U_J) \times G \,,\quad (x, g) \,\longmapsto\, \big( x, g_{IJ}(x) g \big)
\end{equation*}
with smooth transition functions $g_{IJ} : U_I \cap U_J \to G$, given by $g_{IJ}(x) = f_I(p) f_J(p)^{-1}$ for any $p \in\mathcal P$ with $\pi(p) = x \in U_I \cap U_J$, satisfying the \v{C}ech cocycle condition $g_{IJ} g_{JK} = g_{IK}$ on triple overlaps $U_I \cap U_J \cap U_K \neq \emptyset$.

A \emph{change of local trivialisation} of $\mathcal P$ is specified by a family of smooth maps $h_I : U_I \to G$, i.e. a \v{C}ech $0$-cochain $h = (h_I)_{I \in \mathcal I} \in \check{C}^0(C, G)$.
Indeed, given local trivialisations $\psi_I : \pi^{-1}(U_I) \SimTo U_I \times G$, $p \mapsto (\pi(p), f_I(p))$ we can define new local trivialisations by
\begin{equation} \label{new trivialisations}
\tilde \psi_I : \pi^{-1}(U_I) \overset{\cong}\longrightarrow U_I \times G \,, \quad p \longmapsto \big( \pi(p), h_I(\pi(p)) f_I(p) \big)\,.
\end{equation}
The transition functions of $\mathcal P$ relative to these new local trivialisations are the smooth maps
\begin{equation} \label{gauge transf compat}
\tilde g_{IJ} = h_I g_{IJ} h_J^{-1} : U_I \cap U_J \to G \,.
\end{equation}
Said differently, the change of local trivialisations from $\{ \psi_I \}_{I \in \mathcal I}$ to $\{ \tilde \psi_I \}_{I \in \mathcal I}$ on the fixed bundle $\mathcal P$ can be seen as producing a new principal $G$-bundle $\tilde {\mathcal P} \to C$ that is smoothly isomorphic to $\mathcal P$.

An automorphism of $\mathcal P$, or more precisely a fibre-preserving automorphism of $\mathcal P$ which we will sometimes refer to as a \emph{gauge transformation}, is a \v{C}ech $0$-cochain $g = (g_I)_{I \in \mathcal I} \in \check{C}^0(C, G)$ which preserves the transition functions of $P$ in the sense that
\begin{equation} \label{bundle morphism compat}
g_I = g_{IJ} g_J g_{IJ}^{-1}
\end{equation}
on any overlap $U_I \cap U_J \neq \emptyset$.
We can describe the action of $g$ on $\mathcal P$ relative to a fixed choice of local trivialisations $\{\psi_I \}_{I \in \mathcal I}$ as sending $\psi_I(p) = (\pi(p), f_I(p))$ to $\psi_I(g \cdot p) \coloneqq \big( \pi(p), g_I(\pi(p)) f_I(p) \big)$. The compatibility condition \eqref{bundle morphism compat} ensures that this is well-defined on $\mathcal P$, in the sense that we can either perform the gauge transformation directly in the local trivialisation $\psi_I$ or we can first move to the local trivialisation $\psi_J$, perform the gauge transformation there and then move back to the local trivialisation $\psi_I$. Both give the same result. We let
\begin{equation} \label{cal G def}
\mathcal G \coloneqq \Aut \mathcal P \subset \check{C}^0(C, G)
\end{equation}
denote the infinite-dimensional group of automorphisms of the principal $G$-bundle $\mathcal P$. Note that, by the condition \eqref{bundle morphism compat}, we can equally describe automorphisms of $\mathcal P$ as sections of the fibre bundle $\mathcal P \times_{\Ad} G$ associated with the adjoint representation of $G$ on itself.

A \emph{holomorphic structure} on $\mathcal P$ is a choice of local trivialisations $\{ \psi_I \}_{I \in \mathcal I}$ with respect to which the transition functions $g_{IJ} : U_I \cap U_J \to G$ are holomorphic.
It can equally be described \cite{AB} as a family of $\g$-valued $(0,1)$-forms $A''_{\bar z_I}(z_I,\bar z_I)\d \bar z_I \in \Omega^{0,1}(U_I, \g)$ relative to a choice of local trivialisations $\{ \psi_I \}_{I \in \mathcal I}$, denoted collectively as $A''$, such that 
\begin{equation} \label{01-form compat}
A''_{\bar z_I}\d\bar z_I = g_{IJ} A''_{\bar z_J} g_{IJ}^{-1}\d\bar z_J - \bar\partial g_{IJ} g_{IJ}^{-1} \,.
\end{equation}
on $U_I \cap U_J \neq \emptyset$. We let $\mathcal M$ denote the infinite-dimensional space of holomorphic structures on $\mathcal P$.

Under a change of local trivialisations $h \in \check{C}^0(C, G)$, the holomorphic structure $A''$ is described in the new local trivialisations \eqref{new trivialisations} by the family of $\g$-valued $(0,1)$-forms 
\begin{equation}
\label{change_triv_on_A}
\tilde A''_{\bar z_I}\d\bar z_I  = h_I A''_{\bar z_I} h_I^{-1}\d\bar z_I  - \bar\partial h_I h_I^{-1} \in \Omega^{0,1}(U_I, \g)\,.
\end{equation}
In particular, by solving the equations $A''_{\bar{z}_I} = h_I^{-1} \partial h_I/\partial\bar{z}_I$, which is always possible locally \cite[Section 5]{AB}, we obtain smooth maps $h_I : U_I \to G$ which define a new local trivialisation where
\begin{equation} \label{A zbar 0 gauge}
\tilde A''_{\bar z_I}(z_I,\bar z_I) = 0\,.
\end{equation}
This represents the same holomorphic structure $A''$ of $\mathcal P$ but now in an adapted local trivialisation of $\mathcal P$ where its components vanish. In particular, it now follows from \eqref{01-form compat} that in this new local trivialisation the transition functions $\tilde g_{IJ} : U_I \cap U_J \to G$ of the bundle are holomorphic.

Under a gauge transformation by $g \in \mathcal G$, a holomorphic structure $A'' \in \mathcal M$ is transformed to a new holomorphic structure on $\mathcal P$ given by the family of $\g$-valued $(0,1)$-forms 
\begin{equation} \label{gauge transformed Aalpha}
{}^{g_I} A''_{\bar z_I}\d\bar z_I  \coloneqq g_I  A''_{\bar z_I} g_I^{-1} \d\bar z_I- \bar\partial g_I g_I^{-1} \in \Omega^{0,1}(U_I, \g)\,.
\end{equation}
Let ${}^g A'' \coloneqq g A'' g^{-1} - \bar\partial g g^{-1} \in \mathcal M$ denote this transformed holomorphic structure. We have a left action of $\mathcal G$ on $\mathcal M$ given by
\begin{equation} \label{G action on M}
\mathcal G \times \mathcal M \,\longrightarrow\, \mathcal M\,, \qquad (g, A'') \,\longmapsto\, g\cdot A''\coloneqq {}^g A''=g A'' g^{-1} - \bar\partial g g^{-1}\,.
\end{equation}
Any $A'' \in \mathcal M$ determines a Dolbeault operator $\bar\partial^{A''}$ on any vector bundle $V_{\mathcal P} \coloneqq \mathcal P \times_\rho V$ associated with $\mathcal P$ in some representation $\rho : G \to \Aut V$, which acts on local sections over $U_I$ as $\bar\partial + \rho( A''_{\bar z_I})\d\bar z_I$.
In terms of Dolbeault operators, the left action \eqref{G action on M} reads $(g, \bar\partial^{A''}) \mapsto g \bar\partial^{A''} g^{-1}$.

The Lie algebra $\mathfrak G \coloneqq \text{Lie}(\mathcal G)$ of $\mathcal{G}$ consists of sections $X$ of the vector bundle $\g_{\mathcal P} = {\mathcal P}\times_{{\rm ad}} \g$ associated with ${\mathcal P}$ in the adjoint representation. Explicitly, this is given by a family of $\g$-valued functions $X^I \in C^\infty(U_I, \g)$ in each local trivialisation such that on each overlap $U_I \cap U_J \neq \emptyset$ we have the relation $X^I = g_{IJ} X^J g_{IJ}^{-1}$. The left action \eqref{G action on M} of the group $\mathcal G$ induces an infinitesimal left action of a Lie algebra element $X \in \mathfrak G$ on $A'' \in \mathcal M$ given by
\begin{equation} \label{g left action on M}
\delta_X A'' = -\bar\partial^{A''} X = - \bar\partial X - [A'', X]\,,
\end{equation}
or in local trivialisations by $\delta_{X^I} A''^{I} = -\bar\partial X^I - [A''^{I}, X^I]$.

\subsubsection{Cotangent bundle $T^\ast \mathcal M$ and action of $\mathcal G$} \label{sec: T*M and G Hitchin}

The tangent space $T_{A''} \mathcal M$ at any point $A'' \in \mathcal M$ is given by the space of sections of the bundle $\bigwedge^{0,1}C\otimes\g_{\mathcal P}$.  Likewise, the cotangent space $T^\ast_{A''} \mathcal M$ is the space of sections of $\bigwedge^{1,0}C\otimes\g_{\mathcal P}^\ast$, where $\g_{\mathcal P}^\ast \coloneqq {\mathcal P} \times_{{\rm ad}^\ast} \g^\ast$ is the vector bundle associated with ${\mathcal P}$ in the coadjoint representation. Concretely, any $X \in T_{A''} \mathcal M$ is described by a family of $\g$-valued $(0,1)$-forms $X^I = X_{\bar z_I}(z_I, \bar z_I) \d \bar z_I \in \Omega^{0,1}(U_I, \g)$ and any $Y \in T^\ast_{A''} \mathcal M$ by a family of $\g^\ast$-valued $(1,0)$-form $Y^I = Y_{z_I}(z_I,\bar z_I)\d z_I \in \Omega^{1,0}(U_I, \g^\ast)$ such that on any overlap $U_I \cap U_J \neq \emptyset$ we have the relations
\begin{equation} \label{10 and 01-form compat}
X^I = g_{IJ} X^J g_{IJ}^{-1}\,,\qquad Y^I = \Ad^\ast_{g_{IJ}} Y^J \,,
\end{equation}
respectively. Using the canonical pairing $\langle~,~\rangle : \g^\ast \times \g \to \CC$ we obtain a family of local $(1,1)$-forms $\langle Y^I, X^I\rangle \in \Omega^{1,1}(U_I)$, where we suppress a wedge product between the $1$-forms $Y^I$ and $X^I$. It follows from \eqref{10 and 01-form compat} that these local $(1,1)$-forms agree on overlaps, i.e. $\langle Y^I, X^I\rangle = \langle Y^J, X^J\rangle$, and hence define a global $(1,1)$-form on $C$ which we denote by $\langle Y, X\rangle \in \Omega^{1,1}(C)$. In particular, we can integrate the latter over the compact Riemann surface $C$ to obtain a pairing
\begin{equation} \label{TM T*M pairing}
T_{A''}^\ast \mathcal M \times T_{A''} \mathcal M \longrightarrow \CC \,, \qquad  (Y,X)\,\longmapsto\, \frac{1}{2 \pi i}\int_C \langle Y, X\rangle \,.
\end{equation}

A point in the cotangent bundle $T^\ast \mathcal M$ is given by a pair $(B, A'')$, with $A'' \in \mathcal M$ a holomorphic structure on ${\mathcal P}$ parametrising the base and $B$ a section of $\bigwedge^{1,0}C\otimes\g_{\mathcal P}^\ast$ parametrising the fibre. To describe vector fields on $T^\ast \mathcal M$ we note that we have the canonical isomorphism
\begin{equation} \label{T of T*M}
T_{(B, A'')} ( T^\ast \mathcal M ) \cong T_{A''}^\ast \mathcal M \oplus T_{A''} \mathcal M \,.
\end{equation}
The differential at $(B, A'') \in T^\ast \mathcal M$ of the projection $\pi_{\mathcal M} : T^\ast \mathcal M \to \mathcal M$, $(B, A'') \mapsto A''$ is a linear map $\delta_{(B, A'')} \pi_{\mathcal M}$ from the tangent space of $T^\ast \mathcal M$ at $(B, A'')$ to the tangent space of $\mathcal M$ at $A''$. Under the isomorphism \eqref{T of T*M}, it is given simply by the projection onto the second summand. By a standard abuse of notation, we will identify the map $\pi_{\mathcal M}$ with its value $A''$ at a generic point $(B, A'')$ and denote this differential by
\begin{equation} \label{delta A def}
\delta A'' : T_{(B, A'')}(T^\ast \mathcal M ) \,\longrightarrow\, T_{A''} \mathcal M \,, \qquad (Y,X) \,\longmapsto\, X\,.
\end{equation}
The tautological $1$-form on $T^\ast \mathcal M$ is then defined using the pairing \eqref{TM T*M pairing} as
\begin{equation} \label{tauto}
\alpha_{(B, A'')} \coloneqq \frac{1}{2 \pi i}\int_C \langle B, \delta A''\rangle \,.
\end{equation}
More explicitly, we can describe this as a map $\alpha_{(B, A'')} : T_{(B, A'')}(T^\ast \mathcal M ) \to \CC$ given by $\alpha_{(B, A'')}(Y, X) \coloneqq \frac{1}{2 \pi i}\int_C\langle B,X\rangle$.
The corresponding symplectic form $\omega \coloneqq \delta \alpha$ is given by 
\begin{equation} \label{omega Hitchin}
\omega_{(B, A'')}\big( (Y_1,X_1), (Y_2,X_2) \big) = \frac{1}{2 \pi i}\int_C\langle Y_1,X_2\rangle - \frac{1}{2 \pi i}\int_C\langle Y_2,X_1\rangle\,.
\end{equation}

Recall the left action \eqref{G action on M} of the group of gauge transformations $\mathcal G$ on the space of holomorphic structures $\mathcal M$. We can lift this to an action of $\mathcal G$ on $T^\ast \mathcal M$ as follows. In a local trivialisation, an element $g \in \mathcal G$ is represented by smooth maps $g_I : U_I \to G$ and a section $B \in T^\ast_{A''} \mathcal M$ of the bundle $\bigwedge^{1,0} C \otimes \g^\ast_{\mathcal P}$ is described by a family of $\g^\ast$-valued $(1,0)$-forms $B^I \in \Omega^{1,0}(U_I, \g^\ast)$. Since
\begin{equation} \label{10-form compat}
\Ad^\ast_{g_I} B^I = \Ad^\ast_{g_I g_{IJ}} B^J = \Ad^\ast_{g_{IJ} g_J} B^J = \Ad^\ast_{g_{IJ}} \big( \Ad^\ast_{g_J} B^J \big) \,,
    \end{equation}
we obtain a well-defined left action of $g = (g_I)_{I \in \mathcal I} \in \mathcal G$ on the fibres 
\begin{equation}\label{G act on B}
T^\ast_{A''} \mathcal M \,\longrightarrow\, T^\ast_{g \cdot A''} \mathcal M \,, \qquad B \,\longmapsto\, g \cdot B\coloneq \Ad^\ast_g B
\end{equation}
given explicitly in the local trivialisation over $U_I$ by $B^I \mapsto \Ad^\ast_{g_I} B^I$. Combining this with the left action of $\mathcal G$ on the base $\mathcal M$, we obtain the desired left action of $\mathcal G$ on $T^\ast \mathcal M$ given by
\begin{equation} \label{cal G action on T*M}
\mathcal G \times T^\ast \mathcal M \,\longrightarrow\, T^\ast \mathcal M \,, \qquad  \big(g, (B, A'') \big) \,\longmapsto\, g \cdot \big( B, A'' \big) \coloneqq \big( \Ad^\ast_g B, {}^g A'' \big) \,.
\end{equation}
This induces an infinitesimal left action of a Lie algebra element $X \in \mathfrak G$ on $(B, A'') \in T^\ast \mathcal M$ given by
\begin{equation} \label{global_inf_action}
\big( \delta_X B, \delta_X A'' \big) \coloneqq \big( {\rm ad}^*_X B, -\bar{\partial}^{A''} X \big) \,,
\end{equation}
where $\delta_X B = {\rm ad}^*_X B$ is given in local trivialisations by $\delta_{X^I} B^I = {\rm ad}^*_{X^I} B^I \in \Omega^{1,0}(U_I, \g^\ast)$.

As in section \ref{sec: univar finite-dim}, we will need a notion of freeness for the action of the group $\mathcal G$ on $T^\ast \mathcal M$.  We will say that $(B, A'')\in T^\ast\mathcal{M}$ is \emph{stable} if
\begin{equation}\label{inf_freeness}
\bar\partial X+[A'',X]=0,\quad {\rm ad}^*_X B=0\implies X=0.
\end{equation}
for all $X \in \mathfrak{G}$.  This means that the stabiliser of $(B, A'')$ is not a continuous subgroup of $\mathcal{G}$, so it is an infinitesimal version of freeness.

In the literature on Higgs bundles, the notion of a stable principal Higgs bundle has been introduced in \cite{BO}.  We expect that any Higgs bundle that is stable in the sense of \cite{BO} satisfies \eqref{inf_freeness}.  This was proved for semisimple irreducible reductive algebraic groups $G$ in the case $B=0$ in \cite{Ram}, Proposition 3.2.  In the appendix, we prove this statement for $B\neq0$ and $G=SL_m(\CC)$ using the Kobayashi--Hitchin correspondence of Simpson.  It would be interesting to prove it for more general $G$.

\subsection{Lagrangian 1-form on \texorpdfstring{$T^\ast \mathcal M$}{TM}} \label{sec: Lag for Hitchin}

In the present infinite-dimensional setting, the analogue of the Lagrangian $1$-form \eqref{L} is
\begin{equation} \label{L_Phi_A}
\Lag \coloneqq \alpha_{(B, A'')} - H_i( B,A'')\d t^i = \frac{1}{2 \pi i}\int_C \langle B,\delta A''\rangle - H_i( B,A'')\d t^i\,.
\end{equation}

To write down the corresponding action, i.e. the analogue of \eqref{ungauged action}, let $\Sigma : \RR^n \to T^\ast \mathcal M \times \RR^n$ be an immersion. For any $u \in \RR^n$ we write $\Sigma(u) = \big( B(u), A''(u), t(u) \big)$. It is helpful to take a moment to describe each component of $\Sigma(u)$ in more detail. Firstly, $B(u)$ describes an $\RR^n$-dependent section of $\bigwedge^{1,0} C \otimes \g^\ast_{\mathcal P}$ given in each local trivialisation of $\mathcal P$ over $U_I$ by $\g^\ast$-valued $(1,0)$-forms $B_{z_I}(z_I, \bar z_I, u) \d z_I \in \Omega^{1,0}(U_I \times \RR^n, \g^\ast)$. These are related in chart overlaps $U_I \cap U_J \neq \emptyset$ by the second relation in \eqref{10 and 01-form compat}, explicitly
\begin{equation} \label{B(s) transition}
B_{ z_I}(z_I, \bar z_I,u)\d z_I = \Ad^\ast_{g_{IJ}} B_{ z_J}(z_J, \bar z_J,u)\d z_J\,.
\end{equation}
Secondly, $A''(u)$ describes an $\RR^n$-dependent element of $\mathcal M$ given in each local trivialisation of ${\mathcal P}$ over $U_I$ by $\g$-valued $(0,1)$-forms $A''_{\bar z_I}(z_I, \bar z_I, u) \d \bar z_I \in \Omega^{0,1}(U_I, \g)$. These are related in chart overlaps $U_I \cap U_J \neq \emptyset$ by \eqref{01-form compat}, explicitly
\begin{equation} \label{A(s) transition}
A''_{\bar z_I}(z_I, \bar z_I,u)\d\bar z_I = g_{IJ} A''_{\bar z_J}(z_J, \bar z_J,u) g_{IJ}^{-1}\d\bar z_J - \bar\partial g_{IJ} g_{IJ}^{-1} \,.
\end{equation}
Finally, $t(u)$ describes an $\RR^n$-dependent point in $\RR^n$ with components $t^i(u)$ for $i=1,\ldots, n$.

Note that since the transition functions $g_{IJ} : U_I \cap U_J \to G$ of $\mathcal P$ obviously do not depend on the parameter $u = (u^j) \in \RR^n$, differentiating the relation \eqref{A(s) transition} with respect to $u^j$ we obtain
\begin{equation} \label{dA(s) transition}
\partial_{u^j} A''_{\bar z_I}(z_I, \bar z_I,u)\d\bar z_I = g_{IJ} \partial_{u^j} A''_{\bar z_J}(z_J, \bar z_J,u) g_{IJ}^{-1}\d\bar z_J \,.
\end{equation}
Thus, for every $j = 1, \ldots, n$, the family of $\g$-valued $(0,1)$-forms $\partial_{u^j} A''_{\bar z_I}\d\bar z_I \in \Omega^{0,1}(U_I, \g)$ defines an $\RR^n$-dependent section of $\bigwedge^{0,1} C \otimes \g_{\mathcal P}$ which we denote by $\partial_{u^j} A''(u)$. Similarly, we denote by $\partial_{u^j} B(u)$ the family of $\g^\ast$-valued $(1,0)$-forms $\partial_{u^j} B_{ z_I}\d z_I \in \Omega^{1,0}(U_I, \g^\ast)$.
Now let $\Sigma_1 : \RR^n \to T^\ast \mathcal M$, $u \mapsto \big( B(u), A''(u) \big)$ be the component of the map $\Sigma$ in $T^\ast \mathcal M$. Its differential at $u \in \RR^n$ is
\begin{equation*}
\d_u \Sigma_1 : T_u \RR^n \,\longrightarrow\, T_{\Sigma_1(u)}(T^\ast \mathcal M ) \cong T^\ast_{A''(u)} \mathcal M \oplus T_{A''(u)} \mathcal M \,, \qquad
\frac{\partial}{\partial u^j} \,\longmapsto\, \big( \partial_{u^j} B(u), \partial_{u^j} A''(u) \big)\,.
\end{equation*}
The pullback of the differential $\delta A''$ defined in \eqref{delta A def} by the map $\Sigma_1$ is then given by the composition $\big( \Sigma^\ast (\delta A'') \big)(u) = \delta_{\Sigma_1(u)} A'' \circ \d_u \Sigma_1$ and hence $\Sigma^\ast (\delta A'') = \partial_{u^j} A''(u) \wedge\d u^j = -\d_{\RR^n} A''(u)$.
We now find that the pullback $\Sigma^\ast \Lag$ of the Lagrangian \eqref{L_Phi_A} by the map $\Sigma$ is given by
\begin{equation} \label{Sigma pull L}
\Sigma^\ast \Lag = \frac{1}{2 \pi i}\int_C \big\langle B(u), \d_{\RR^n} A''(u) \big\rangle - H_i\big( B(u), A''(u) \big) \d_{\RR^n} t^i \,.
\end{equation}
Given an arbitrary curve $\Gamma : (0,1) \to \RR^n$, $s \mapsto \big( u^j(s) \big)$ can now finally write down the analogue of the action \eqref{ungauged action} in the present case, which reads
\begin{equation} \label{ungauged action Phi A}
S_\Gamma[\Sigma] = \int_0^1 (\Sigma \circ \Gamma)^\ast \Lag = \int_0^1 \bigg(\! - \frac{1}{2 \pi i} \int_C \big\langle B(u), \partial_{u^j} A''(u) \big\rangle - H_i\big( B(u), A''(u) \big)\frac{\partial t^i}{\partial u^j} \bigg) \frac{\d u^j}{\d s} \d s \,.
\end{equation}
\begin{remark} \label{rem: gauge in Hitchin}
At this stage, a few important remarks on notations and terminology are in order to avoid confusion. Note that the kinetic part $\frac{1}{2 \pi i}\int_C\langle B, \delta A''\rangle$ of \eqref{L_Phi_A} is the direct analogue of $p_\mu \d q^\mu$ in \eqref{L}. In particular, the integration over $C$ in the present setting is the analogue of the summation over $\mu \in \{ 1, \ldots, m \}$ in the finite-dimensional setting of \S\ref{sec: univar finite-dim}. This means that, although the first term in the action \eqref{ungauged action Phi A} involves the integral of a $3$-form over $C \times (0,1)$, from the point of view of Lagrangian multiform theory we should really regard the whole action $S_\Gamma[\Sigma]$ as the integral of a $1$-form on $(0,1)$, namely the pullback of \eqref{L_Phi_A} along $\Sigma \circ \Gamma : (0,1) \to T^\ast \mathcal M \times \RR^n$.

The terminology ``gauge group'' for the group ${\cal G}$ of fibre-preserving automorphisms is standard in the geometric formulation of the Hitchin system and that is why we used it here. However, it is crucial to note that, at this stage of the construction, the group $\mathcal G$ is the analogue of what we called the {\it global} symmetry group $G$ in \S\ref{sec: univar finite-dim}, which we will later gauge by considering transformations parametrised by maps $g:\RR^n \to {\cal G}$ by analogy with the finite-dimensional setting of \S\ref{sec: gauging symmetry}; see \S\ref{sec: Lag for Hitchin mod G} for details. A crude way to say this is that we will ``gauge the gauge group $\mathcal{G}$''.
\end{remark}

With this in mind, we shall now prove the analogue of Proposition \ref{prop_global_symmetry} in the present infinite-dimensional context. In fact, the setting of \S\ref{sec: group actions} was very generic and the infinitesimal action of a Lie algebra element $X \in \g$ on $M$ was specified only implicitly through the vector fields $X^\sharp$. By contrast, in the present context, we have an explicit description of the action of the Lie algebra $\mathfrak G$ on $T^\ast \mathcal M$ in \eqref{global_inf_action} and even of the action of the group $\mathcal G$ on $T^\ast \mathcal M$ in \eqref{cal G action on T*M}. We can therefore prove a stronger statement than Proposition \ref{prop_global_symmetry} in the present case. We first need to lift the action of $\mathcal G$ to $T^\ast \mathcal M \times \RR^n$ by letting it act trivially on $\RR^n$, i.e. we set $g \cdot t = t$ for any $g \in \mathcal G$ and $t \in \RR^n$.

\begin{proposition} \label{prop: Hi invariance Hitchin}
The action \eqref{ungauged action Phi A} is invariant under the action of $\mathcal G$ on $T^\ast \mathcal M \times \RR^n$ given in \eqref{cal G action on T*M} if and only if each $H_i$ for $i=1, \ldots, n$ is invariant under the group action, i.e.
\begin{equation} \label{Hi invariance prop}
H_i\big( \Ad^\ast_g B, {}^g A'' \big) = H_i(B, A'')
\end{equation}
for any $(B, A'') \in T^\ast \mathcal M$ and $g \in \mathcal G$. Moreover, the Noether charge associated with an infinitesimal bundle morphism $X \in \mathfrak G$ is given by
\begin{equation} \label{moment_map_Hitchin}
\mu_{(B, A'')}(X) = \frac{1}{2 \pi i}\int_C \big\langle B,\bar{\partial}^{A''} X \big\rangle \,.
\end{equation}
\begin{proof}
We closely follow the proof of Proposition \ref{prop_global_symmetry}. Let $\Sigma : \RR^n \to T^\ast \mathcal M \times \RR^n$ be given by $\Sigma(u) = \big( B(u), A''(u), t(u) \big)$. By \eqref{cal G action on T*M}, its pointwise image under the left action of any $g \in \mathcal G$ is $g \cdot \Sigma : \RR^n \to T^\ast \mathcal M \times \RR^n$ given by
\begin{equation} \label{g act on gamma Hitchin}
(g \cdot \Sigma)(u) = \big( \Ad^\ast_g B(u), {}^g A''(u), t(u) \big)\,.
\end{equation}
Note, in particular, that this implies $\partial_{u^j} (g \cdot \Sigma)(u) = \big( \Ad^\ast_g \partial_{u^j} B(u), g \partial_{u^j} A''(u) g^{-1}, \partial_{u^j} t(u) \big)$. The action for the transformed map $g \cdot \Sigma$ therefore reads
\begin{align*}
S_\Gamma[g \cdot \Sigma] &= \int_0^1\left(-\frac{1}{2 \pi i}\int_C \big\langle \Ad^\ast_g B(u), g \partial_{u^j} A''(u) g^{-1} \big\rangle - H_i\big( \Ad^\ast_g B(u), {}^g A''(u) \big)\frac{\partial t^i}{\partial u^j}\right) \frac{\d u^j}{\d s} \d s\\
&= \int_0^1\left(-\frac{1}{2 \pi i}\int_C \big\langle B(u), \partial_{u^j} A''(u) \big\rangle - H_i\big( \Ad^\ast_g B(u), {}^g A''(u) \big)\frac{\partial t^i}{\partial u^j}\right) \frac{\d u^j}{\d s} \d s\\
&= S_\Gamma[\Sigma] + \int_0^1 \Big( H_i\big( B(u), A''(u) \big) - H_i\big( \Ad^\ast_g B(u), {}^g A''(u) \big) \Big) \frac{\d t^i}{\d s} \d s
\end{align*}
The result now follows since for $g \in \mathcal G$ to be a symmetry means that $S_\Gamma[g \cdot \Sigma] = S_\Gamma[\Sigma]$ for any map $\Sigma$ and any curve $\Gamma$ and hence the integral on the right-hand side must vanish for any curve $\Gamma$ but this, in turn, is equivalent to the condition \eqref{Hi invariance prop}.

To work out the Noether charge associated with the infinitesimal symmetry generated by a Lie algebra element $X \in \mathfrak G$, we introduce an arbitrary smooth function $\lambda : \RR^n \to \CC$ and consider now the pointwise variations of the map $\Sigma(u) = \big( B(u), A''(u), t(u) \big)$ given by 
\begin{align*}
\delta_{\lambda(u) X} B(u) &= \lambda(u) {\rm ad}^*_X B(u) \,,\\
\delta_{\lambda(u) X} A''(u) &= - \lambda(u) \bar\partial^{A''(u)} X = - \lambda(u) \big( \bar\partial X + [A''(u), X] \big)\,.
\end{align*}
The variation of the action \eqref{ungauged action Phi A} then only has a contribution from the kinetic term, which reads
\begin{align*}
\delta_{\lambda(u) X} S_\Gamma[\Sigma] &= \delta_{\lambda(u) X} \int_0^1 \frac{-1}{2 \pi i}\int_C \big\langle B(u), \partial_{u^j} A''(u) \big\rangle \frac{\d u^j}{\d s} \d s \\
&=  \int_0^1 \partial_{u^j} \lambda(u) \left(\frac{1}{2 \pi i}\int_C \big\langle B(u), \bar\partial^{A''(u)} X \big\rangle \right)\frac{\d u^j}{\d s} \d s \,,
\end{align*}
cf. the end of the proof of Propostion \ref{prop_global_symmetry}. From this we read off the desired expression \eqref{moment_map_Hitchin}. 
\end{proof}
\end{proposition}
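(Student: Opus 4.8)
The plan is to mirror the proof of Proposition~\ref{prop_global_symmetry}, but to exploit the fact that here we have at our disposal the \emph{explicit} group action \eqref{cal G action on T*M} together with its infinitesimal version \eqref{global_inf_action}; this lets me handle the full group $\mathcal G$ directly rather than only its Lie algebra.

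For the equivalence, I would apply a fixed $g\in\mathcal G$ to the map $\Sigma$ via \eqref{g act on gamma Hitchin} and compute $S_\Gamma[g\cdot\Sigma]$ straight from \eqref{ungauged action Phi A}. The key point is that the kinetic term is invariant: since $g$ is independent of the multi-time $u$, differentiating ${}^gA''=gA''g^{-1}-\bar\partial g\,g^{-1}$ annihilates the inhomogeneous piece, so that $\partial_{u^j}({}^gA'')=g\,\partial_{u^j}A''\,g^{-1}$ and hence $\langle\Ad^\ast_gB,\partial_{u^j}({}^gA'')\rangle=\langle B,\partial_{u^j}A''\rangle$ by $\Ad$-invariance of the pairing. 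Consequently $S_\Gamma[g\cdot\Sigma]-S_\Gamma[\Sigma]$ collapses to $\int_0^1\big(H_i(B,A'')-H_i(\Ad^\ast_gB,{}^gA'')\big)\frac{\d t^i}{\d s}\,\d s$, and demanding this to vanish for \emph{all} curves $\Gamma$ is equivalent to the invariance condition \eqref{Hi invariance prop}.

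For the Noether charge I would use the standard device from the end of Proposition~\ref{prop_global_symmetry}: promote the infinitesimal parameter to a function $\lambda:\RR^n\to\CC$ and vary along \eqref{global_inf_action}, namely $\delta B=\lambda\,\mathrm{ad}^\ast_XB$ and $\delta A''=-\lambda\,\bar\partial^{A''}X$. By the first part the Hamiltonian term drops out once the $H_i$ are invariant, so only the kinetic term survives. Expanding $\delta\langle B,\partial_{u^j}A''\rangle$ and using $\partial_{u^j}(\bar\partial^{A''}X)=[\partial_{u^j}A'',X]$ (valid because $X$ is $u$-independent), the $\lambda$-proportional terms assemble into $\lambda\big(\langle\mathrm{ad}^\ast_XB,\partial_{u^j}A''\rangle-\langle B,[\partial_{u^j}A'',X]\rangle\big)$, which cancel pointwise on $C$ by the defining identity $\langle\mathrm{ad}^\ast_XB,Y\rangle=\langle B,[Y,X]\rangle$. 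Only the term proportional to $\partial_{u^j}\lambda$ then remains, with coefficient $\frac{1}{2\pi i}\int_C\langle B,\bar\partial^{A''}X\rangle$, which is precisely \eqref{moment_map_Hitchin}.

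The hard part will be the cancellation of the $\lambda$-linear (non-derivative) terms: it rests on correctly transporting the ad-invariance of the finite-dimensional pairing $\langle\,,\,\rangle:\g^\ast\times\g\to\CC$ to an identity between $(1,1)$-form densities on $C$, and on observing that, because both $g$ and $X$ are independent of $u$, the inhomogeneous $\bar\partial$-pieces disappear under $\partial_{u^j}$ so that $\partial_{u^j}A''$ transforms homogeneously in the adjoint. Conceptually this cancellation is simply the statement that the tautological $1$-form $\alpha$ in \eqref{tauto} is $\mathcal G$-invariant, from which $\mu_{(B,A'')}(X)=-X^\sharp\lrcorner\alpha$ already reproduces \eqref{moment_map_Hitchin}; I would note that no integration by parts over $C$ is needed at this stage, the more familiar expression $\mu=\bar\partial^{A''}B$ being recovered only afterwards.
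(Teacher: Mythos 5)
Your proposal is correct and follows essentially the same route as the paper: the finite-$g$ computation showing the kinetic term $\big\langle \Ad^\ast_g B, g\,\partial_{u^j}A''\,g^{-1}\big\rangle = \big\langle B, \partial_{u^j}A''\big\rangle$ (using $u$-independence of $g$ and $\Ad$-invariance of the pairing) so that invariance for all $\Gamma$ is equivalent to \eqref{Hi invariance prop}, followed by the promoted-parameter trick with the pointwise cancellation of the $\lambda$-linear terms via $\langle \mathrm{ad}^\ast_X B, Y\rangle = \langle B,[Y,X]\rangle$ and $\partial_{u^j}(\bar\partial^{A''}X) = [\partial_{u^j}A'',X]$, leaving only the $\partial_{u^j}\lambda$ coefficient \eqref{moment_map_Hitchin}. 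Your closing gloss that the cancellation expresses $\mathcal G$-invariance of the tautological $1$-form $\alpha$ is a nice addition consistent with \S\ref{sec: group actions}, though note that the formula $\mu(X) = -X^\sharp\lrcorner\alpha$ reproduces \eqref{moment_map_Hitchin} with the correct sign only if $X^\sharp$ is taken to be the \emph{left}-action generator \eqref{global_inf_action} (given the paper's sign convention in the pullback \eqref{ungauged action Phi A}), rather than the right-action generator used for $X^\sharp$ in the finite-dimensional setting.
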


\begin{remark} \label{rem: mu well defined}
We make a trivial but important comment concerning the expression \eqref{moment_map_Hitchin}, to help avoid potential confusion later. Note that this expression is well-defined since $\bar\partial^{A''} X$ is a section of $\bigwedge^{0,1} C \otimes \g_{\mathcal P}$. Indeed, using the relations \eqref{01-form compat} and $X^I = g_{IJ} X^J g_{IJ}^{-1}$, respectively, between the local expressions of $A'' \in \mathcal M$ and $X \in \mathfrak G$ in overlapping charts $U^I \cap U^J \neq \emptyset$, we find that
\begin{equation*}
\bar\partial X^I + [A''_{\bar z_I}, X^I]\d\bar z_I = g_{IJ} \big( \bar\partial X_J + [ A''_{\bar z_J}, X_J]\d\bar{z}^J \big) g_{IJ}^{-1} \,.
\end{equation*}
We thus have $\langle B^I, \bar\partial^{A''} X^I \rangle = \langle B^J, \bar\partial^{A''} X^J \rangle$ on overlaps $U_I \cap U_J \neq \emptyset$ so that these define a global $(1,1)$-form $\langle B, \bar\partial^{A''} X\rangle \in \Omega^{1,1}(C)$ which can be integrated over the compact Riemann surface $C$. 
Likewise, we have a well-defined global $(1,1)$-form $\langle\bar\partial^{A''} B, X\rangle \in \Omega^{1,1}(C)$ given by the expression $\langle \bar\partial^{A''} 
B^I, X^I \rangle$ in each local chart $U_I$. Moreover, these are related by
\begin{equation} \label{three well defined sections}
\langle B, \bar\partial^{A''} X\rangle - \langle\bar\partial^{A''} B, X\rangle = - \d_C \langle B, X\rangle
\end{equation}
where $\langle B,X\rangle \in \Omega^{1,0}(C)$ is a well-defined $(1,0)$-form on $C$. The relative sign on the left-hand side comes from the fact that $B$ is a $1$-form and the operator $\bar\partial^{A''}$ has cohomological degree $1$. Integrating both sides over $C$ and using Stokes's theorem on the right-hand side, noting that $C$ has no boundary, we deduce that the Noether charge \eqref{moment_map_Hitchin} can equivalently be rewritten as
\begin{equation} \label{moment_map_Hitchin 2}
\mu_{(B, A'')}(X) = \frac{1}{2 \pi i}\int_C \big\langle \bar{\partial}^{A''} B, X \big\rangle\,.
\end{equation}
In other words, the value of the corresponding moment map $\mu : T^\ast \mathcal M \to \mathfrak G^\ast$ at $(B, A'') \in T^\ast \mathcal M$ is the element of $\mathfrak G^\ast$ given by the linear map $X \mapsto \frac{1}{2 \pi i}\int_C ( \bar{\partial}^{A''} B, X )$ which takes in any vector $X \in \mathfrak G$, i.e. a section of $\g_{\mathcal P}$, and integrates it against the section $\bar{\partial}^{A''} B$ of $\bigwedge^{1,1} C \otimes \g^\ast_{\mathcal P}$.
\end{remark}

The construction of the Hamiltonians $H_i : T^\ast \mathcal M \to \CC$ satisfying \eqref{Hi invariance prop} will be inspired by that of the Hitchin map \cite{H}. Recall that the latter is constructed from a choice of
\begin{itemize}
  \item[$(i)$] $G$-invariant homogeneous polynomials $P_r:\g^\ast \to\CC$ for $r = 1,\ldots, \text{rk}\,\g$ of degree $d_r+1$, where $E = \{ d_r \}_{r=1}^{\text{rk} \, \g}$ is to the set of exponents of $\g$.
\end{itemize}
Given such data, the Hitchin map 
\begin{equation} \label{Hitchin map}
P \coloneqq (P_1, \ldots, P_{\text{rk}\, \g}) : H^0 \Big( C, {\textstyle \bigwedge^{1,0}} C \otimes \g^\ast_{\mathcal P} \Big) \,\longrightarrow\, \bigoplus_{r=1}^{\text{rk}\,\g} H^0 \Big( C, \big( {\textstyle \bigwedge^{1,0}} C \big)^{\otimes (d_r+1)} \Big)\,,
\end{equation}
takes as input a holomorphic section $B$ of the bundle $\bigwedge^{1,0} C \otimes \g^\ast_{\mathcal P}$.
Recall that this is given by a family of $\g^\ast$-valued $(1,0)$-forms $B^I = B_{z_I}(z_I) \d z_I \in \Omega^{1,0}(U_I, \g^\ast)$ in the local trivialisation over the chart $(U_I, z_I)$, where here the function $B_{z_I}$ depends holomorphically on $z_I$, satisfying the second relation in \eqref{10 and 01-form compat}, explicitly $B^I = \Ad^\ast_{g_{IJ}} B^J$ on $U_I \cap U_J \neq \emptyset$.
Since $P_r$ is $G$-invariant, it follows that we have
\begin{equation*}
P_r(B^I) = P_r\big( \Ad^\ast_{g_{IJ}} B^J \big) = P_r(B^J)
\end{equation*}
on non-trivial overlaps $U_I \cap U_J \neq \emptyset$. In this way, we obtain a holomorphic section of the bundle $(\bigwedge^{1,0} C)^{\otimes (d_r+1)}$ for each $r =1, \ldots, \text{rk}\, \g$, which we denote by $P_r(B)$, and the Hitchin map returns a holomorphic section of $\bigoplus_{r =1}^{\text{rk}\,\g} (\bigwedge^{1,0} C)^{\otimes (d_r+1)}$. To obtain individual complex-valued Hamiltonians one can then expand each component $P_r(B)$ of the Hitchin map in a basis of holomorphic $(d_r+1,0)$-differentials on $C$.

However, in our present setting, $B \in T^\ast_{A''} \mathcal M$ is only a \emph{smooth} section of $\bigwedge^{1,0} C \otimes \g^\ast_{\mathcal P}$. We can still form a smooth section $P_r(B)$ of $(\bigwedge^{1,0} C)^{\otimes (d_r+1)}$ for every $r = 1, \ldots, \text{rk}\,\g$, however we can no longer expand it in a basis of holomorphic sections of $(\bigwedge^{1,0} C)^{\otimes (d_r+1)}$. Instead, in order to produce complex-valued Hamiltonians we will proceed along the lines of \cite{VW} by introducing a marked point on $C$ for each Hamiltonian $H_i$ (note that in \cite{VW} only one marked point was needed since a single Hamiltonian was considered). We therefore introduce the following additional data:
\begin{itemize}
  \item[$(ii)$] Points $\mathsf q_{rl} \in C$ labelled by pairs $(r,l)$ with $r = 1,\ldots, \text{rk}\,\g$ and $l = 1, \ldots, m_r$, where
\begin{equation*}
m_r \coloneqq \dim \Big( H^0 \Big( C, \big( {\textstyle \bigwedge^{1,0}} C \big)^{\otimes (d_r+1)} \Big) \Big) = \left\{
\begin{array}{ll}
(2 d_r+1) (g-1) \,, & \quad \text{for}\; g \geq 2 \,,\\
g \,, & \quad \text{for} \; g=0,1
\end{array}
\right.
\end{equation*}
and a set of holomorphic tangent vectors $V_{\mathsf q_{rl}} \in T^{1,0}_{\mathsf q_{rl}} C$.
\end{itemize}

We can now define $H_i : T^\ast \mathcal M \to \CC$ by evaluating the smooth section $P_r(B)$ of $(\bigwedge^{1,0} C)^{\otimes (d_r+1)}$ at $\mathsf q_{rl} \in C$ and pairing the resulting element $P_r\big( B(\mathsf q_{rl}) \big) \in (\bigwedge^{1,0}_{\mathsf q_{rl}} C)^{\otimes (d_r+1)}$ with $V_{\mathsf q_{rl}}^{d_r+1} \in (T^{1,0}_{\mathsf q_{rl}} C)^{\otimes (d_r+1)}$, that is
\begin{equation} \label{pre Hitchin Hamiltonians}
H_i(B) \coloneqq \big\langle P_r\big( B(\mathsf q_{rl}) \big), V_{\mathsf q_{rl}}^{ d_r+1} \big\rangle \,.
\end{equation}
Concretely, if $(U_I, z_I)$ is a local chart around one of the points $\mathsf q_{rl} \in C$ then we can pick $V_{\mathsf q_{rl}} = \partial_{z_I}$ and the above geometric construction amounts to writing $P_r(B) = P_r\big( B_{z_I}(z_I, \bar z_I) \big) \d z_I^{\otimes (d_r+1)}$ locally in the coordinate $z_I$ and then evaluating its component at $\mathsf q_{rl} \in \CC$, i.e.
\begin{equation} \label{Hitchin Hamiltonians}
H_i(B) = P_r\big( B_{z_I}(\mathsf q_{rl}) \big)\,.
\end{equation}
The understanding in \eqref{pre Hitchin Hamiltonians} and \eqref{Hitchin Hamiltonians} is that  the label $i$ on the Hamiltonians runs over pairs $(r, l)$ with $r = 1, \ldots, \text{rk}\, \g$ and $l = 1, \ldots, m_r$.
Note that $m_r$ being the dimension of the space of holomorphic sections of $(\bigwedge^{1,0} C)^{\otimes (d_r+1)}$ ensures that, when the points $\mathsf q_{rl}$ are generic, the number of Hamiltonians we produce coincides with the number of Hamiltonians obtained via the construction of the Hitchin map \eqref{Hitchin map} when the smooth section $B$ of the bundle $\bigwedge^{1,0} C \otimes \g^\ast_{\mathcal P}$ becomes holomorphic. Indeed, our evaluation prescription \eqref{pre Hitchin Hamiltonians} defines a bijection
\begin{equation*}
H^0 \Big( C, \big( {\textstyle \bigwedge^{1,0}} C \big)^{\otimes (d_r+1)} \Big) \,\overset{\cong}\longrightarrow\, \CC^{m_r} \,,
\end{equation*}
for generic points $\mathsf q_{rl} \in C$, $l = 1, \ldots, m_r$. It is enough to show this is injective, which follows from the fact that divisors $D_r \coloneqq \sum_{l=1}^{m_r} \mathsf q_{rl}$ on $C$ for which $\deg(D_r) = m_r = \dim \big( H^0 \big( C, ( {\textstyle \bigwedge^{1,0}} C \big)^{\otimes (d_r+1)} \big) \big)$ and $\dim \big( H^0 \big( C, ( {\textstyle \bigwedge^{1,0}} C \big)^{\otimes (d_r+1)} \otimes \mathcal O(-D_r) \big) \big) = 0$ are generic.

Since the label $i$ on the Hamiltonians runs over pairs $(r, l)$ with $r = 1, \ldots, \text{rk}\, \g$ and $l = 1, \ldots, m_r$, it runs from $1$ to
\begin{equation}
n \coloneqq \sum_{r=1}^{\text{rk}\, \g} m_r = (g-1) \sum_{r=1}^{\text{rk}\, \g} (2 d_r + 1) = (g-1) \dim \g 
\end{equation}
when $g \geq 2$ and from $1$ to $n \coloneqq \text{rk}\, \g$ when $g=1$. This number coincides with half the dimension of the phase space of Hitchin's integrable system when $g \geq 2$ and $g=1$, respectively. When $g=0$, however, there are no Hamiltonians since the set of points $\mathsf q_{rl} \in C$ introduced in condition $(ii)$ above is empty.
Producing non-trivial integrable systems in the case of genus $g=0$ will require introducing additional marked points on $C$ which we will turn to in \S\ref{sec: adding punctures} below.

{\bf Label notations:} From now on, for notational convenience, we will simply use the common label $i$ for the Hamiltonians $H_i$, the polynomials $P_r$, the points $\mathsf q_{rl}$ and the times $t^i$ associated to the Hamiltonians. This amounts to relabelling $P_r$ as $P_{rl}$, with the understanding that $P_{rl} = P_{rl'}$ for any $l, l' =1, \ldots, m_r$, so that we can write simply $H_i(B) = P_i\big( B_{z_I}(\mathsf q_i) \big)$. Accordingly, we can keep denoting by $t^i$ the time associated to $H_i$, rather than the cumbersome $t^{rl}$ or $t^{(r,l)}$.

\subsection{Lagrangian 1-form for the Hitchin system on \texorpdfstring{$\mu^{-1}(0)/\mathcal G$}{mu G}} \label{sec: Lag for Hitchin mod G}

So far we have introduced the action $S_\Gamma[\Sigma]$ in \eqref{ungauged action Phi A} for an immersion $\Sigma : \RR^n \to T^\ast \mathcal M \times \RR^n$ and an arbitrary curve $\Gamma : (0,1) \to \RR^n$, and showed in Proposition \ref{prop: Hi invariance Hitchin} that it is invariant under the left action of the group $\mathcal G$ provided that the Hamiltonians $H_i$ themselves are $\mathcal G$-invariant in the sense that \eqref{Hi invariance prop} holds. Note that the latter condition clearly holds for the Hamiltonians introduced in \eqref{Hitchin Hamiltonians} by virtue of the $G$-invariance of the polynomials $P_i : \g^\ast \to \CC$. Moreover, we identified the moment map $\mu : T^\ast \mathcal M \to \mathfrak G^\ast$ associated with this symmetry as given by \eqref{moment_map_Hitchin}.

We now generalise the gauging procedure of \S\ref{gauged_univariational_principle} to the present infinite-dimensional setting.  This requires introducing two elements: gauge transformations $g$ and a gauge field $\mathcal{A}$.  We begin with the gauge transformations.

For the Hitchin system, the group by which we wish to quotient is the group $\mathcal{G}$ of automorphisms of the bundle ${\mathcal P}\to C$.  In the language of \S \ref{sec: univar finite-dim}, the group $\mathcal{G}$ corresponds to the \textit{global} symmetry group $G$. Thus here, in order to gauge $\mathcal G$, we consider the group of local transformations $g:\RR^n\to \mathcal{G}$, \ie automorphisms of the bundle ${\mathcal P}\to C$ that depend smoothly on $u\in\RR^n$.  In a local trivialisation of ${\mathcal P}$, this is represented by $G$-valued functions $g_I(z_I,\bar{z}_I,u)$. On overlaps $U_I \cap U_J \neq \emptyset$ they satisfy \eqref{bundle morphism compat}, in which $g_{IJ}$ are the transition functions relative to the chosen trivialisation of ${\mathcal P}\to C$.

The action of a local gauge transformation $g$ on the map $\Sigma:u\mapsto(A''(u),B(u),t(u))$ is exactly as in \S \ref{sec: geometric setup}, except that $g$, $A''$, $B$ and $t$ depend on the parameters $u$ (in addition to depending on local coordinates $z_I$).  Explicitly, we write
\begin{equation}
g\cdot\Sigma :u\longmapsto \big( {}^{g(u)}A''(u),\Ad^\ast_{g(u)}B(u),t(u) \big),
\end{equation}
where ${}^{g(u)}A''(u)$ and $\Ad^\ast_{g(u)}B(u)$ are defined as in \eqref{G action on M} and \eqref{G act on B} for each $u$.

Next, we introduce the gauge field $\mathcal{A}=\mathcal{A}_i \d u^i$.  This consists of functions $\mathcal{A}_i$ from $\RR^n$ to the Lie algebra $\mathfrak{G}$ of $\mathcal{G}$.  The Lie algebra $\mathfrak{G}$ is the space of sections of the bundle $\mathfrak{g}_{\mathcal P}$, so each $\mathcal{A}_i$ is a section of $\mathfrak{g}_{\mathcal P}$ that depends smoothly on $u\in\mathbb{R}^n$.  In any local trivialisation, $\mathcal{A}_i$ is represented by functions $\mathcal{A}^I_{i}(z_I,\bar{z}_I,u)$ that take values in $\mathfrak{g}$, and on the overlap $U_I \cap U_J \neq \emptyset$ these are related by
\begin{equation}\label{Ai compat}
\mathcal{A}^I_{i}=g_{IJ}\mathcal{A}^J_{i}g_{IJ}^{-1} \,.
\end{equation}
Under a local transformation $g:\RR^n\to\mathcal{G}$, these transform as
\begin{equation}\label{G act on Ai}
\mathcal{A}^I_{i}\longmapsto g_I\mathcal{A}^I_{i}g_{I}^{-1} - \frac{\partial g_{I}}{\partial u^i}g_I^{-1}\,.
\end{equation}

Having introduced the local transformations and the gauge field, we are now able to write down the analogue of the gauged action \eqref{uni_gauged action} for the Hitchin system. Recall that this consists in adding to the ungauged action \eqref{ungauged action Phi A} a term that couples the moment map \eqref{moment_map_Hitchin} to the gauge field as
\begin{equation}\label{gauged Hitchin action}
S_\Gamma[\Sigma,\mathcal{A},t] = 
\int_0^1 \left( -\frac{1}{2 \pi i}\int_C \left\langle B(u), \frac{\partial A''}{\partial u^j}- \bar{\partial}^{A''}\mathcal{A}_j \right\rangle - H_i\big( B(u) \big)\frac{\partial t^i}{\partial u^j} \right) \frac{\d u^j}{\d s} \d s \,.
\end{equation}

We now show that the first term in this action is exactly a multiform version of the $3$d mixed BF Lagrangian. To do so, let
\begin{equation}
A \coloneqq A'' + \mathcal{A}\,.
\end{equation}
This is a partial connection on the pullback bundle $\pi_C^\ast {\mathcal P}={\mathcal P}\times\RR^n$ over $C\times\mathbb{R}^n$ along the projection $\pi_C:C\times\RR^n\to C$.  By definition, the bundle $\pi_C^\ast {\mathcal P}$ is trivialised over open sets $U_I\times \RR^n$, and the transition functions between these sets are the transition functions $g_{IJ}$ of ${\mathcal P}$, which obviously satisfy
\begin{equation}\label{dgdu}
\frac{\partial g_{IJ}}{\partial u^j}=0 \,.
\end{equation}
In these local trivialisations, $A$ takes the form
\begin{equation}\label{A definition}
A^I = A''_{\bar{z}_I}(z_I,\bar{z}_I,u)\d\bar{z}_I + \mathcal{A}^I_{i}(z_I,\bar{z}_I,u)\d u^i\,.
\end{equation}
This looks like the local expression for a connection, except that it is missing a $\d z_I$-component, which is why we refer to it as a ``partial'' connection.  A partial connection is a connection that can only take derivatives in certain directions (the holomorphic structure $A''$ of $\mathcal P$ is also an example of a partial connection).  To show that $A$ is a well-defined partial connection we must check that it satisfies
\begin{equation}\label{partial connection compat}
A^I = g_{IJ}A^Jg_{IJ}^{-1} - \bar{\partial}g_{IJ}g_{IJ}^{-1} - \d_{\RR^n} g_{IJ} g_{IJ}^{-1}
\end{equation}
on overlaps $U_I \cap U_J \neq \emptyset$, and that it is independent of the choice of local trivialisations of ${\mathcal P}\to C$. Equations \eqref{01-form compat}, \eqref{Ai compat} and \eqref{dgdu} show that \eqref{partial connection compat} is satisfied. Independence of the choice of local trivialisation follows from \eqref{partial connection compat}: if we use different local trivialisations of ${\mathcal P}\to C$ over open sets $\{U_I\}_{I\in\mathcal{I}'}$ to construct a connection $A'$, then $A$ and $A'$ will be related as in \eqref{partial connection compat} on the overlaps $U_I\cap U_J$ for $I\in\mathcal{I}$ and $J\in\mathcal{I}'$.  Therefore $A'$ and $A$ represent the same connection.

The curvature $F_A$ of the partial connection $A$ is defined in local trivalisations by
\begin{equation}\label{FA def}
F_A^{I}=\left(\frac{\partial \mathcal A^I_{i}}{\partial \bar{z}_I}-\frac{\partial A''_{\bar{z}_I}}{\partial u^i}+[A''_{\bar{z}_I},\mathcal A^I_{i}]\right)\d\bar{z}_I\wedge \d u^i + \frac12\left(\frac{\partial \mathcal A^I_{j}}{\partial u^i}-\frac{\partial \mathcal A^I_{i}}{\partial u^j}+[\mathcal A^I_{i},\mathcal A^I_{j}]\right)\d u^i \wedge \d u^j\,.
\end{equation}
The family $B(u)$ of sections of $\bigwedge^{1,0}C\otimes\g_{\mathcal P}^\ast$ naturally determines a section of $\pi_C^\ast\bigwedge^{1,0}C\otimes\g_{\pi_C^\ast \mathcal P}^\ast$, which we will denote by the same symbol $B$.
The following theorem is a direct consequence of our definitions of $B$ and $A$, with corresponding curvature $F_A$.
\begin{theorem} \label{thm: BF Lagrangian}
The gauged multiform action \eqref{gauged Hitchin action} for Hitchin's system is equivalent to the multiform action for $3$d mixed BF theory on $C \times \RR^n$ for the pair $(B, A)$ with a type B line defect along each coordinate $t^i$ determined by the Hitchin Hamiltonian $H_i$ given in \eqref{Hitchin Hamiltonians}, namely
\begin{equation} \label{3d BF action}
S_\Gamma[B, A, t] = \frac{1}{2 \pi i}\int_{C\times\Gamma} \big\langle B, F_A \big\rangle - \int_0^1 H_i\big( B(u(s)) \big) \frac{\d t^i}{\d s} \d s\,.
\end{equation}
for an arbitrary curve $\Gamma : (0,1) \to \RR^n$, $s \mapsto u(s)$.
\end{theorem}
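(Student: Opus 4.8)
The plan is to note first that the two actions \eqref{gauged Hitchin action} and \eqref{3d BF action} contain identical Hamiltonian contributions $-\int_0^1 H_i\big(B(u(s))\big)\frac{\d t^i}{\d s}\,\d s$, so the entire content of the theorem reduces to the single kinetic identity
\begin{equation*}
\frac{1}{2\pi i}\int_{C\times\Gamma}\langle B, F_A\rangle = -\int_0^1\frac{1}{2\pi i}\int_C\Big\langle B(u),\, \frac{\partial A''}{\partial u^j} - \bar\partial^{A''}\mathcal A_j\Big\rangle\,\frac{\d u^j}{\d s}\,\d s \,.
\end{equation*}
I would verify this by a local computation in a trivialisation over a chart $U_I$, using the explicit formula \eqref{FA def} for the curvature together with the fact that $B = B_{z_I}\d z_I$ is of type $(1,0)$ along $C$ while $A$, being a partial connection, carries no $\d z_I$ component.

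The key step is to decompose $F_A$ by bidegree into its mixed part proportional to $\d\bar z_I\wedge\d u^i$ and its purely-$\RR^n$ part proportional to $\d u^i\wedge\d u^j$, and to pair each with $B$. Wedging $B\sim\d z_I$ against the $\d u^i\wedge\d u^j$ term yields a form proportional to $\d z_I\wedge\d u^i\wedge\d u^j$; since $\Gamma$ is one-dimensional, the pullback of $\d u^i\wedge\d u^j$ along $\id_C\times\Gamma$ vanishes, so this contribution drops out of $\int_{C\times\Gamma}$. Only the mixed part survives, and reading off \eqref{FA def} its coefficient is exactly
\begin{equation*}
\frac{\partial\mathcal A^I_i}{\partial\bar z_I} - \frac{\partial A''_{\bar z_I}}{\partial u^i} + [A''_{\bar z_I},\mathcal A^I_i] = -\Big(\frac{\partial A''_{\bar z_I}}{\partial u^i} - \frac{\partial\mathcal A^I_i}{\partial\bar z_I} - [A''_{\bar z_I},\mathcal A^I_i]\Big)\,,
\end{equation*}
which is minus the $\d\bar z_I$-component of $\frac{\partial A''}{\partial u^i} - \bar\partial^{A''}\mathcal A_i$, on recalling $\bar\partial^{A''}\mathcal A_i = \big(\partial_{\bar z_I}\mathcal A^I_i + [A''_{\bar z_I},\mathcal A^I_i]\big)\d\bar z_I$. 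Pulling $\d u^i$ back along $\Gamma$ to $\frac{\d u^i}{\d s}\d s$ and integrating then produces precisely the right-hand side above, including the sign and the $\frac{1}{2\pi i}$ prefactor.

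Before concluding I would record that $\langle B, F_A\rangle$ is a globally well-defined $3$-form on $C\times\RR^n$: under a change of trivialisation $B$ transforms by $\Ad^\ast_{g_{IJ}}$ and $F_A$ by $\Ad_{g_{IJ}}$ (as follows from \eqref{partial connection compat} and \eqref{10 and 01-form compat}), so the invariant pairing makes the integrand patch together across overlaps, exactly as in Remark \ref{rem: mu well defined}. The main obstacle is not conceptual but bookkeeping: one must keep careful track of the orientation convention used to integrate over $C\times\Gamma$ and of the signs arising from commuting the $(1,1)$-form on $C$ past $\d u^i$, and one must confirm that the vanishing of the $\d u^i\wedge\d u^j$ term — the only place where the non-abelian part of $F_A$ along $\RR^n$ could have entered — is genuinely forced by $\dim\Gamma = 1$ rather than by any on-shell condition.
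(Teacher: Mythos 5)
Your proposal is correct and follows essentially the same route as the paper: the paper's proof consists precisely of the observation that, when pulled back along the one-dimensional curve $\Gamma$, the $\d u^i \wedge \d u^j$ part of $F_A$ in \eqref{FA def} vanishes (since $\d s \wedge \d s = 0$), leaving only the mixed $\d \bar z_I \wedge \d u^i$ term, which reproduces the kinetic term of \eqref{gauged Hitchin action} with the correct sign. Your additional checks (the sign bookkeeping against $\bar\partial^{A''}\mathcal A_i$ and the global well-definedness of $\langle B, F_A\rangle$) are consistent with the paper's conventions and merely make explicit what the paper leaves implicit.
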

The proof relies on the simple observation that, when pulled back along the one-dimensional curve $\Gamma$, only the first term in the expression \eqref{FA def} of the curvature is nonzero. That being said, the significance of Theorem \ref{thm: BF Lagrangian} is that it shows how the multiform version of $3$d mixed BF theory with type B defect is derived from our procedure of gauging a natural Lagrangian multiform on a cotangent bundle, applied to the Hitchin setup. As shown below, the similar derivation with the inclusion of marked points in the Hitchin picture corresponds to the inclusion of so-called type A defects in the $3$d mixed BF (multiform) picture.

In this new interpretation of the action \eqref{gauged Hitchin action}, local (gauge) transformations $g:\RR^n\to\mathcal{G}$ play the role of bundle automorphisms of $\pi_C^\ast {\mathcal P}$, with local expressions $g_I(z_I,\bar{z}_I,u)$ in the trivialisations over $U_I\times\RR^n$.  From \eqref{G action on M}, \eqref{G act on B}, \eqref{G act on Ai} and \eqref{A definition} it follows that these bundle automorphisms act on $A$ and $B$ in the expected way:
\begin{align}
A^I &\mapsto {}^{g_I}A^I=g_IA^Ig_I^{-1}- \bar\partial g_{I} g_{I}^{-1} - \d_{\RR^n} g_{I} g_{I}^{-1} \,,\\
B^I &\mapsto \Ad_{g_I}B^I\,.
\end{align}

\begin{remark}
The partial connection $A$ on $\pi_C^\ast {\mathcal P}$ was defined using the local trivialisations of $\pi_C^\ast {\mathcal P}$ obtained canonically from the local trivialisations of ${\mathcal P}$. In particular, in such a local trivialisation, the transition functions $g_{IJ}$ are independent of $u^j$ and $A$ transforms as in \eqref{partial connection compat} but without the last term. However, the resulting action \eqref{3d BF action} is well-defined in any local trivialisation of $\pi_C^\ast {\mathcal P}$, including those for which the transitions functions do depend on $u^j$, in which case $A$ transforms as in \eqref{partial connection compat} with the last term present. Indeed, with $A$ transforming as in \eqref{partial connection compat}, the curvature $F_A$ defines a section of $\bigwedge^2(C \times \RR^n) \otimes \g_{\pi_C^\ast {\mathcal P}}$ with $\g_{\pi_C^\ast {\mathcal P}}$ the vector bundle associated to $\pi_C^\ast {\mathcal P}$ in the adjoint representation and hence $(B, F_A)$ is a well-defined global $3$-form on $C \times \RR^n$. We will exploit such a change of local trivialisation leading to $u^j$-dependent transition functions later on in \S\ref{sec:unifyingmultiform} to prove Theorem \ref{thm: Lag 1-form for Hitchin}.
\end{remark}

To derive the equations of motion for \eqref{3d BF action}, we first introduce some notation.
Let $(U_I, z_I)$ be a coordinate chart containing the point $\mathsf q \in U_I$ with coordinate $w = z_I(\mathsf q)$. We denote by $\delta(z_I-w)$ the $2$-dimensional Dirac $\delta$-distribution on $U_I$ at the point $\mathsf q \in U_I$ in the local coordinate $z_I$. It has the defining property
\begin{equation}\label{eq:Dirac-rel}
\int_{U_I} f(z_I) \delta(z_I - w) \d z_I \wedge \d \bar z_I = f(w)
\end{equation}
for any function $f : U_I \to \CC$. Note that the distribution valued $2$-form $\delta(z_I - w) \d z_I \wedge \d \bar z_I$ is invariant under coordinate transformations. Specifically, if $(U_J, z_J)$ is another chart with $\mathsf q \in U_I \cap U_J \neq \emptyset$ and letting $w' \coloneqq z_J(\mathsf q)$, then we have $\delta(z_I - w) \d z_I \wedge \d \bar z_I = \delta(z_J - w') \d z_J \wedge \d \bar z_J$. It will therefore be convenient for later to introduce the notation
\begin{equation} \label{delta coord indep}
\delta_{\mathsf q} \coloneqq \delta\big( z_I - z_I(\mathsf q) \big) \d z_I \wedge \d \bar z_I
\end{equation}
for any $\mathsf q \in C$, which is independent of the coordinate chart $(U_I, z_I)$ used, so long as it contains the point $\mathsf q$.
We will also make use of the following useful fact
\begin{equation}\label{eq:del-Dirac-rel}
\partial_{\bar z_I} \frac{1}{z_I - w} = - 2 \pi i \, \delta (z_I - w).
\end{equation}

Given any polynomial $P : \g^\ast \to \CC$, its gradient $\nabla P : \g^\ast \to \g$ is defined through the first order term in the variation of $P(\phi)$ at any $\phi \in \g^\ast$, namely
\begin{equation} \label{gradient def}
P(\phi + \delta \phi) = P(\phi) + \big\langle \delta\phi, \nabla P(\phi) \big\rangle + O(\delta \phi^2) \,.
\end{equation}
Note, in particular, that for a coadjoint-invariant polynomial $P$, i.e. such that $P(\Ad^\ast_g \phi) = P(\phi)$ for any $g \in G$, it follows from considering $g = e^{h X}$ with $X \in \g$ and $h$ small that $\big\langle {\rm ad}^\ast_X \phi, \nabla P(\phi) \big\rangle = 0$.

\begin{theorem}\label{th_one_form_Hitchin}
The gauged univariational principle applied to the $3$d mixed BF multiform action $S_\Gamma[B,A,t]$ in \eqref{3d BF action} gives rise to a set of equations for the fields $B$, $A''$ and $\mathcal A$. By working in any chart $(U_I, z_I)$ of $C$ and working in terms of the components $B_{z_I}(z_I, \bar z_I, u) $, $A''_{\bar z_I}(z_I, \bar z_I, u) $ and $\mathcal A^I_j(z_I, \bar z_I, u)$ of these fields, these equations take the following form:
\begin{subequations} \label{UV set}
\begin{align}
\partial_{\bar z_I} \widetilde{\mathcal A}^I_i - \partial_{t^i} A''_{\bar z_I} + [A''_{\bar z_I}, \widetilde{\mathcal A}^I_{i}] &= 2 \pi i\,\nabla P_i\big( B_{z_I}(\mathsf q_i) \big) \delta\big( z_I-z_I(\mathsf q_i) \big)\,, \label{UV F}\\
\partial_{\bar z_I} B_{z_I} + {\rm ad}^*_{A''_{\bar z_I}} B_{z_I} &= 0 \label{UV Phi z}\,,\\
\partial_{t^j} B_{z_I}+{\rm ad}^*_{\widetilde{\mathcal A}^I_{j}} B_{z_I} &= 0 \label{UV Phi i}\,,\\
\partial_{t^j} H_i(B)&=0 \label{UV P}
\end{align}
\end{subequations}
where we used the invertibility of the map $(u^j) \mapsto \big( t^i(u) \big)$ to define $\widetilde{\mathcal A}^I_{i} = \frac{\partial u^j}{\partial t^i} \mathcal A^I_{j}$.
The equations associated to any pair of overlapping charts $(U_I, z_I)$ and $(U_J, z_J)$ are compatible on $U_I \cap U_J \neq \emptyset$.

Finally, if $B,A''$ satisfy the stability condition \eqref{inf_freeness} then the following zero-curvature equations also hold
\begin{equation}
\label{ZC_eqs}
\partial_{t^i} \widetilde{\mathcal A}^I_{j} - \partial_{t^j} \widetilde{\mathcal A}^I_{i} + \big[ \widetilde{\mathcal A}^I_{i}, \widetilde{\mathcal A}^I_{j} \big] =0\,.
\end{equation}
\end{theorem}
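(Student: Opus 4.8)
The plan is to transcribe the proof of the finite-dimensional Theorem~\ref{thm: gauged univar} into the geometric setting of $T^\ast\mathcal M$, reading off the four equations \eqref{UV set} from the four independent variations of the gauged action \eqref{gauged Hitchin action}: the variations of $\mathcal A$, of $B$, of $A''$ and of the times $t^i$. First I would vary $\mathcal A_j$. Since it enters \eqref{gauged Hitchin action} only through $\frac{1}{2\pi i}\int_C\langle B,\bar\partial^{A''}\mathcal A_j\rangle$, integrating by parts on the closed surface $C$ by means of the identity \eqref{three well defined sections} turns this into $\frac{1}{2\pi i}\int_C\langle\bar\partial^{A''}B,\delta\mathcal A_j\rangle$, so stationarity for arbitrary $\delta\mathcal A_j$ yields the moment-map constraint $\bar\partial^{A''}B=0$, which is \eqref{UV Phi z} in components. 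This is the infinite-dimensional analogue of \eqref{A variation}.

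Next I would vary $\Sigma=(B,A'',t)$. The $B$-variation produces \eqref{UV F}: the kinetic term contributes $-\partial_{u^j}A''+\bar\partial^{A''}\mathcal A_j$, while $H_i(B)=P_i(B_{z_I}(\mathsf q_i))$ contributes, through the gradient \eqref{gradient def} and the Dirac distribution \eqref{eq:Dirac-rel}, the delta-supported source $2\pi i\,\nabla P_i(B_{z_I}(\mathsf q_i))\,\delta(z_I-z_I(\mathsf q_i))$. The $A''$-variation, after integrating by parts in $u^j$ (converting $\langle B,\partial_{u^j}\delta A''\rangle$ into $\langle\partial_{u^j}B,\delta A''\rangle$) and using the invariance of the pairing to rewrite $\langle B,[\delta A'',\mathcal A_j]\rangle$, gives \eqref{UV Phi i}; and the $t^i$-variation (with the data fixed at $\lVert u\rVert\to\infty$) gives $\partial_{u^j}H_i=0$, which in canonical form is \eqref{UV P} and, exactly as in the finite-dimensional case, encodes involutivity. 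To pass from $u$-derivatives to $t$-derivatives and thereby introduce $\widetilde{\mathcal A}^I_i=\frac{\partial u^j}{\partial t^i}\mathcal A^I_j$, I would reproduce the argument of Theorem~\ref{thm: gauged univar}: in the compact form \eqref{compact_univ_eqs}, injectivity of $\d_u\Sigma-\mathcal A^\sharp_u$ together with nondegeneracy of the symplectic form \eqref{omega Hitchin} forces the Jacobian $(\partial t^i/\partial u^j)$ to be invertible, so $\Sigma$ may be brought to the canonical form $(f_\Sigma,\id_{\RR^n})$.

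For the overlap compatibility I would check that each of \eqref{UV F}--\eqref{UV P} is the local coordinate expression of a globally well-defined geometric equation on $C\times\RR^n$: equation \eqref{UV Phi z} is $\bar\partial^{A''}B=0$ (well defined by Remark~\ref{rem: mu well defined}), \eqref{UV Phi i} is an identity between sections of $\bigwedge^{1,0}C\otimes\g^\ast_{\mathcal P}$, and the source in \eqref{UV F} transforms correctly because $\delta_{\mathsf q_i}$ is coordinate independent \eqref{delta coord indep} and $\nabla P_i$ is $G$-equivariant. The required transformation laws then follow from the transition relations \eqref{10 and 01-form compat}, \eqref{Ai compat} and \eqref{dgdu}.

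The substantive step is the zero-curvature equation \eqref{ZC_eqs}, obtained as a consistency condition in the spirit of \eqref{flatness}. Writing $Z_{ij}\coloneqq\partial_{t^i}\widetilde{\mathcal A}_j-\partial_{t^j}\widetilde{\mathcal A}_i+[\widetilde{\mathcal A}_i,\widetilde{\mathcal A}_j]$, I would impose the integrability conditions $\partial_{t^i}\partial_{t^j}B=\partial_{t^j}\partial_{t^i}B$ and $\partial_{t^i}\partial_{t^j}A''=\partial_{t^j}\partial_{t^i}A''$ using the flows \eqref{UV Phi i} and \eqref{UV F}. The former, via the representation property $[{\rm ad}^\ast_X,{\rm ad}^\ast_Y]={\rm ad}^\ast_{[X,Y]}$, collapses to ${\rm ad}^\ast_{Z_{ij}}B=0$; the latter collapses to $\bar\partial^{A''}Z_{ij}=(\text{terms localised at }\mathsf q_i,\mathsf q_j)$. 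The main obstacle is to show these delta-supported terms cancel: they are precisely $\partial_{t^i}S_j-[S_j,\widetilde{\mathcal A}_i]$ (and $i\leftrightarrow j$), where $S_j$ is the source of \eqref{UV F}, and their vanishing follows from the infinitesimal equivariance $D(\nabla P_j)(\phi)\big[{\rm ad}^\ast_X\phi\big]=[X,\nabla P_j(\phi)]$ obtained by differentiating the $\Ad^\ast$-invariance of $P_j$ — a second-order refinement of the first-order identity $\langle{\rm ad}^\ast_X\phi,\nabla P_j(\phi)\rangle=0$ recorded after \eqref{gradient def}. With the sources eliminated I would have both $\bar\partial^{A''}Z_{ij}=0$ and ${\rm ad}^\ast_{Z_{ij}}B=0$, whereupon the stability hypothesis \eqref{inf_freeness} applied to $X=Z_{ij}\in\mathfrak{G}$ forces $Z_{ij}=0$, which is exactly \eqref{ZC_eqs}.
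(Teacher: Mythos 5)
Your proposal is correct and follows essentially the same route as the paper's proof: the same assignment of variations to equations ($\delta\mathcal A\to$ the constraint \eqref{UV Phi z}, $\delta B\to$ \eqref{UV F}, $\delta A''\to$ \eqref{UV Phi i}, $\delta t\to$ \eqref{UV P}), followed by the same consistency-condition argument (cross-differentiating the flows of $B$ and $A''$, cancelling the delta-supported sources, and invoking stability \eqref{inf_freeness}) for \eqref{ZC_eqs}. The only differences are that you spell out two points the paper leaves implicit — the equivariance identity $D(\nabla P_j)(\phi)\big[{\rm ad}^\ast_X\phi\big]=[X,\nabla P_j(\phi)]$ behind the cancellation of the localised terms, and the transcription of the Jacobian-invertibility argument from Theorem \ref{thm: gauged univar} — both of which are sound.
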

\begin{proof}
Since the action \eqref{3d BF action} is local in all the fields, and in particular in the fields $B$ and $A$, to work out the equations of motion for the latter at any point $\mathsf p \in C$ it is sufficient to restrict the integration over $C$ in the action to any chart $(U_I, z_I)$ for which $\mathsf p \in U_I$. In other words, to work out the equations of motion for the $\g$-valued fields $B^I = B_{z_I}(z_I, \bar z_I, u) \d z_I$, $A''^I = A''_{\bar z_I}(z_I, \bar z_I, u) \d\bar z_I$ and $\mathcal A^I = \mathcal A^I_{j}(z_I, \bar z_I, u) \d u^j$ it is sufficient to consider the variation of the following action
\begin{align} \label{3d BF action explicit}
S_{I, \Gamma}[B_{z_I}, A''_{\bar z_I}, \mathcal A^I_{j}, t] &= \int_0^1 \Bigg( \frac{1}{2 \pi i} \int_{U_I} \big\langle B_{z_I}(z_I, \bar z_I, u(s)), F^I_{\bar z_Ij}(z_I, \bar z_I, u(s)) \big\rangle\, \d z_I \wedge \d\bar z_I \notag\\
&\qquad\qquad\qquad\qquad - {\sum_{\substack{i = 1\\\mathsf q_i \in U_I}}^n } P_i\big( B_{z_I}(\mathsf q_i, u(s)) \big) \frac{\partial t^i}{\partial u^j} \Bigg) \frac{\d u^j}{\d s} \d s
\end{align}
where we used the fact that the pullback of the curvature $F^I \in \Omega^2(U_I \times \RR^n, \g)$ along the curve $\Gamma : (0,1) \to \RR^n$ is given by
\begin{equation*}
F^I(u(s)) = F^I_{\bar z_Ij}(z_I, \bar z_I, u) \, \d\bar z_I \wedge\frac{\d u^j}{\d s} \d s \in \Omega^2\big( U_I \times (0,1), \g \big) \,,
\end{equation*}
noting that the pullback of $F^I_{ij}(z_I, \bar z_I, u) \d u^i \wedge \d u^j$ vanishes since $\d s \wedge \d s = 0$.

We now work out the variation of \eqref{3d BF action explicit} with respect to all four fields $B_{z_I}(z_I, \bar z_I, u)$, $A''_{\bar z_I}(z_I, \bar z_I, u)$, $\mathcal A^I_{j}(z_I, \bar z_I, u)$ and $t^i(u)$. Firstly, we have
\begin{align*}
&\delta_{B_{z_I}} S_{I, \Gamma}[B_{z_I}, A''_{\bar z_I}, \mathcal A^I_{j}, t]\\
&\quad =  \int_0^1 \int_{U_I} \Bigg( \frac{1}{2 \pi i}\Big\langle \delta B_{z_I}(z_I, \bar z_I, u(s)), F^I_{\bar z_Ij}(z_I, \bar z_I, u(s)) \Big\rangle \\
&\qquad - \sum_{\substack{i = 1\\\mathsf q_i \in U_I}}^n \Big\langle \delta B_{z_I}(z_I, \bar z_I, u(s)), \nabla P_i\big( B_{z_I}(z_I, \bar z_I, u(s)) \big) \Big\rangle \delta\big( z_I - z_I(\mathsf q_i) \big) \frac{\partial t^i}{\partial u^j} \Bigg) \frac{\d u^j}{\d s} \d z_I \wedge \d\bar z_I \wedge \d s
\end{align*}
where in the second term on the right-hand side we used both the defining properties \eqref{eq:Dirac-rel} of the Dirac $\delta$-distribution and \eqref{gradient def} of the gradient of $P_i$. Since the above variation should vanish for all curves $\Gamma : (0,1) \to \RR^n$, $s \mapsto \big( u^j(s) \big)$ and for all variations $\delta^I B_{z_I}(z_I, \bar z_I, u)$ we deduce that
\begin{subequations}
\begin{align} \label{3dBF eq a}
&\partial_{\bar z_I} \mathcal A^I_{j}(z_I, \bar z_I, u(s)) - \partial_{u^j} A''_{\bar z_I}(z_I, \bar z_I, u(s)) + \big[ A''_{\bar z_I}(z_I, \bar z_I, u(s)), \mathcal A^I_{j}(z_I, \bar z_I, u(s)) \big] \notag\\
&\qquad\qquad\qquad\qquad\qquad\qquad = 2 \pi i\sum_{\substack{i = 1\\\mathsf q_i \in U_I}}^n \nabla P_i\big( B_{z_I}(z_I, \bar z_I, u(s)) \big) \delta\big( z_I - z_I(\mathsf q_i) \big) \frac{\partial t^i}{\partial u^j}\,.
\end{align}
The first equation \eqref{UV F} now follows by multiplying both sides by $\frac{\partial u^j}{\partial t^k}$, summing over $j = 1, \ldots, n$ and then relabelling $k$ as $j$, and using the definition of $\widetilde{\mathcal A}^I_{j}$ given in the statement of the theorem. Next, for the variation of \eqref{3d BF action explicit} with respect to $A''_{\bar z_I}(z_I, \bar z_I, u)$ we find
\begin{align*}
&\delta_{A''_{\bar z_I}} S_{I, \Gamma}[B_{z_I}, A''_{\bar z_I}, \mathcal A^I_{j}, t]\\
&\quad = - \int_0^1 \frac{1}{2 \pi i}\int_{U_I} \Big\langle B_{z_I}(z_I, \bar z_I, u(s)),\\
&\qquad\qquad\qquad\qquad \partial_{u^j} \delta A''_{\bar z_I}(z_I, \bar z_I, u(s)) + \big[ \mathcal A^I_{j}(z_I, \bar z_I, u(s)), \delta A''_{\bar z_I}(z_I, \bar z_I, u(s)) \big] \Big\rangle \frac{\d u^j}{\d s} \d z_I \wedge \d\bar z_I \wedge \d s\,.
\end{align*}
Since this should vanish for all curves $\Gamma : (0,1) \to \RR^n$, $s \mapsto \big( u^j(s) \big)$ and all variations $\delta A''_{\bar z_I}(z_I, \bar z_I, u)$ we deduce that \eqref{UV Phi i} must hold, again after multiplying by $\frac{\partial u^j}{\partial t^k}$, summing over $j = 1, \ldots, n$ and relabelling $k$ as $j$, and also using the definition of $\widetilde{\mathcal A}^I_{j}$. And by a very similar computation, varying \eqref{3d BF action explicit} with respect to $\mathcal A^I_{j}(z_I, \bar z_I, u)$ we deduce that \eqref{UV Phi z} must hold.
Finally, varying \eqref{3d BF action explicit} with respect to $t^i(u)$ leads to
\begin{equation*}
\delta_{t^i(u)} S_{I, \Gamma}[B_{z_I}, A''_{\bar z_I}, \mathcal A^I_{j}, t] = - \int_0^1 \bigg( H_i\big( B(u(s)) \big) \frac{\partial \delta t^i}{\partial u^j} \bigg) \frac{\d u^j}{\d s} \d s
\end{equation*}
\end{subequations}
and since this should vanish for all curves $\Gamma : (0,1) \to \RR^n$, $s \mapsto \big( u^j(s) \big)$ and all variations $\delta t^i(u)$ we deduce that \eqref{UV P} must hold.

The derivation of \eqref{ZC_eqs} follows the same reasoning as in the proof of Theorem \ref{thm: gauged univar} for \eqref{flatness}. Using \eqref{UV Phi i}, and with $\widetilde{F}_{ij}^I=\partial_{t^i} \widetilde{\mathcal A}^I_{j} - \partial_{t^j} \widetilde{\mathcal A}^I_{i} + \big[ \widetilde{\mathcal A}^I_{i}, \widetilde{\mathcal A}^I_{j} \big]$, we deduce 
\begin{equation}
\left[\partial_{t^i},\partial_{t^j}\right] B_{z_I}+{\rm ad}^*_{\widetilde{F}_{ij}^I} B_{z_I} = 0\,.
\end{equation}
The first term on the left-hand side vanishes since partial derivatives commute and hence
\begin{equation} \label{Stab 1 for F=0}
{\rm ad}^*_{\widetilde{F}_{ij}^I} B_{z_I} = 0\,.     
\end{equation}
Similarly, using \eqref{UV F}, we deduce
\begin{align} \label{Deriving dF AF zero}
&\left[\partial_{t^i},\partial_{t^j}\right] A''_{\bar z_I}+\partial_{\bar z_I}\widetilde{F}_{ji}^I+\big[ A''_{\bar z_I}, \widetilde{F}_{ji}^I\big] \notag\\
&\qquad\qquad=
 2 \pi i\left(\partial_{t^j}\,\nabla P_i\big( B_{z_I}(\mathsf q_i) \big)+ \big[ \widetilde{\mathcal A}^I_{j},\nabla P_i\big( B_{z_I}(\mathsf q_i) \big) \big] \right) \delta\big( z_I-z_I(\mathsf q_i) \big) \notag\\
&\qquad\qquad\qquad - 
 2 \pi i\left(\partial_{t^i}\,\nabla P_j\big( B_{z_I}(\mathsf q_j) \big)+ \big[ \widetilde{\mathcal A}^I_{i},\nabla P_j\big( B_{z_I}(\mathsf q_j) \big) \big] \right) \delta\big( z_I-z_I(\mathsf q_j) \big)\,.
\end{align}
Now note that \eqref{UV Phi i} implies 
\begin{equation*}
\partial_{t^j} \nabla P_i\big( B_{z_I}(\mathsf q_i) \big)+\Big[{\widetilde{\mathcal A}^I_{j}},\nabla P_i\big( B_{z_I}(\mathsf q_i) \big)\Big] = 0
\end{equation*}
so that both terms on the right-hand side of \eqref{Deriving dF AF zero} vanish. The first term on the left-hand side of \eqref{Deriving dF AF zero} also vanishes since partial derivatives commute and \eqref{Deriving dF AF zero} reduces to hence
\begin{align} \label{Stab 2 for F=0}
\partial_{\bar z_I}\widetilde{F}_{ji}^I+[A''_{\bar z_I}, \widetilde{F}_{ji}^I]=0\,.
\end{align}
Equation \eqref{ZC_eqs} now follows from \eqref{Stab 1 for F=0} and \eqref{Stab 2 for F=0} using the stability condition \eqref{inf_freeness}.
\end{proof}

\begin{remark}
In standard Lagrangian multiform terminology, the first three equations \eqref{UV F}-\eqref{UV Phi i} are the multitime Euler-Lagrange equations and the last equation \eqref{UV P} is (related to) the closure relation. Note that here the latter holds automatically as a consequence of \eqref{UV Phi i} and $G$-invariance of $P_i$, thus confirming that we indeed have a Lagrangian multiform ($1$-form here) for the Hitchin system. To see this, note that
\begin{equation}
\partial_{t^j} P_i\big( B_{z_I} \big) = \big\langle \partial_{t^j} B_{z_I},\nabla P_i( B_{z_I}) \big\rangle = \Big\langle \partial_{t^j} B_{z_I} + {\rm ad}^*_{{\mathcal A}^I_{j}} B_{z_I}\,,\nabla P_i( B_{z_I}) \Big\rangle = 0\,,
\end{equation}    
where we used the infinitesimal coadjoint-invariance of $P_i$ in the second equality, see the argument just after \eqref{gradient def}, and in the last equality we have used \eqref{UV Phi i}.
\end{remark}

\subsection{Adding type A defects at marked points on \texorpdfstring{$C$}{C}} \label{sec: adding punctures}

We will now extend the construction from \S\ref{sec: Lag for Hitchin} and \S\ref{sec: Lag for Hitchin mod G} in order to obtain a variational principle for Hitchin's system on a Riemann surface $C$ with $N \in \ZZ_{\geq 1}$ distinct marked points $\{ \mathsf p_\alpha \}_{\alpha=1}^N$, in particular to allow the Higgs field $B \in T^\ast_{A''} \mathcal M$ to have simple poles at these marked points. This requires introducing extra structure at the marked points $\mathsf p_\alpha$ which will encode the residues of $B$.

To each marked point $\mathsf p_\alpha$ we will associate a coadjoint orbit $\mathcal{O}_\alpha$ in the fibre of $\g^\ast_{\mathcal P}$ over $\mathsf p_\alpha$.  This coadjoint orbit $\mathcal{O}_\alpha$ is determined by an element $\Lambda_\alpha\in\g^\ast$.  In a local trivialisation of ${\mathcal P}$ over $U_I\ni \mathsf p_\alpha$, an element $X\in \mathcal{O}_\alpha$ is represented by
\begin{equation}\label{coadjoint orbit}
    X^I=\mathrm{Ad}^\ast_{\varphi_\alpha^I}\Lambda_\alpha\quad\text{ for some }\varphi_\alpha^I\in G.
\end{equation}
If $\mathsf p_\alpha\in U_I\cap U_J$ then we must have that $X^I=\mathrm{Ad}^\ast_{g_{IJ}(\mathsf p_\alpha)}X^J$, so we require that $\varphi_\alpha^I=g_{IJ}(\mathsf p_\alpha)\varphi_\alpha^J$.  This means that $\varphi_\alpha^I$ represents an element $\varphi_\alpha$ of the fibre of ${\mathcal P}$ over ${\mathsf p}_\alpha$.

The Kostant--Kirillov symplectic form on $\mathcal{O}_\alpha$ is given by $\omega_\alpha(X,Y) \coloneqq \langle\Lambda_\alpha,[X,Y]\rangle$ for every $X,Y\in\mathfrak{g}$ representing tangent vectors in $T_{\Lambda_\alpha} \mathcal{O}_\alpha$.  The Kostant-Kirillov form $\omega_\alpha$ can equally be described as follows. Consider the map
\begin{equation}
G\longrightarrow \mathcal O_\alpha \,, \qquad \varphi_\alpha^I \longmapsto \Ad^*_{\varphi_\alpha^I} \Lambda_\alpha \,.
\end{equation}
The pullback of $\omega_\alpha$ under this map, which by abuse of notation we also denote by $\omega_\alpha$, is the $2$-form given by $\omega_\alpha = \d\big\langle \Lambda_\alpha,(\varphi^I_\alpha)^{-1} \d\varphi^I_\alpha \big\rangle$. Note that this is exact, and that both $\omega_\alpha$ and $\big\langle \Lambda_\alpha,(\varphi^I_\alpha)^{-1} \d\varphi^I_\alpha \big\rangle$ are invariant under changes of local trivialisation.  We will denote the latter by $\big\langle \Lambda_\alpha,(\varphi_\alpha)^{-1} \d\varphi_\alpha \big\rangle$ to emphasise this.

Instead of working on the phase space $(T^\ast \mathcal M, \omega)$ as in \S\ref{sec: Lag for Hitchin} and \S\ref{sec: Lag for Hitchin mod G}, we will now work with the extended phase space
\begin{equation} \label{extended phase space}
\bigg( T^\ast \mathcal M \times \prod_{\alpha = 1}^N \mathcal{O}_\alpha\,, \; \omega + \sum_{\alpha = 1}^N \omega_\alpha \bigg).
\end{equation}
We extend the tautological $1$-form $\alpha$ on $T^\ast \mathcal M$, given in \eqref{tauto}, to a $1$-form on the extended phase space \eqref{extended phase space}, which we still denote by $\alpha$ and is given explicitly by
\begin{equation}
\label{ext_tauto}
\alpha_{(B, A'', (\varphi_\alpha))} \coloneqq \frac{1}{2 \pi i}\int_C\langle B,\delta A''\rangle + \sum_{\alpha = 1}^N \big\langle \Lambda_\alpha, (\varphi_\alpha)^{-1} \d \varphi_\alpha \big\rangle\,.
\end{equation}
To evaluate the second term, we choose (for each $\alpha$) an open set $U_I$ containing ${\mathsf p}_\alpha$, and evaluate $\big\langle \Lambda_\alpha, (\varphi_\alpha^I)^{-1} \d \varphi_\alpha^I \big\rangle$, where $\varphi_\alpha^I\in G$ is the representative of $\varphi_\alpha$ in our chosen trivialisation.

The action of the group $\mathcal{G}$ of bundle automorphisms on $T^\ast \mathcal M \times \prod_{\alpha = 1}^N \mathcal{O}_\alpha$ is as follows, cf. \eqref{cal G action on T*M},
\begin{align} \label{group_action_marked}
\mathcal G \times T^\ast \mathcal M \times \prod_{\alpha = 1}^N \mathcal{O}_\alpha &\longrightarrow T^\ast \mathcal M \times \prod_{\alpha = 1}^N \mathcal{O}_\alpha \notag\\
\Big( g, \big(B, A'', (\varphi_\alpha^I) \big) \Big) &\longmapsto g\cdot \big( B, A'', (\varphi^I_\alpha) \big) = \Big( \Ad^*_g B, {}^g A'', \big( g^I(\mathsf p_\alpha) \varphi^I_\alpha \big) \Big)\,.
\end{align}
This induces the following infinitesimal left action
\begin{equation}
\label{global_inf_action_X}
\delta_X B = {\rm ad}^*_X B\,, \quad \delta_X A'' = - \bar{\partial}^{A''} X \,, \quad \delta_X \big( \Ad_{\varphi^I_\alpha}^\ast \Lambda_\alpha \big) = {\rm ad}^*_{X^I(\mathsf p_\alpha)} \big( \Ad_{\varphi_\alpha^I}^\ast \Lambda_\alpha \big)
\end{equation}
of $X \in \mathfrak G$ on $\big( B, A'', (\varphi_\alpha^I) \big) \in T^\ast \mathcal M \times \prod_{\alpha = 1}^N \mathcal{O}_\alpha$.
\begin{remark}
The phase space \eqref{extended phase space} can be described more invariantly as follows.  Let $G_{\Lambda_\alpha}$ be the stabiliser of $\Lambda_\alpha\in\g^\ast$.  Then $\Ad^\ast_{\varphi_\alpha^I}\Lambda_\alpha$ corresponds to a left coset $\varphi_\alpha^IG_{\Lambda_\alpha}\in G/G_{\Lambda_\alpha}$.  This represents an element of the quotient $\mathcal P_{{\mathsf p}_\alpha}/G_{\Lambda_\alpha}$ in our chosen local trivialisation.  An element of $\mathcal P_{{\mathsf p}_\alpha}/G_{\Lambda_\alpha}$ is usually called a reduction of the structure group of $\mathcal P$ at ${\mathsf p}_\alpha$ to $G_{\Lambda_\alpha}$.  So our phase space is the product of the cotangent bundle of the space of holomorphic structures and the space of reductions of structure group to $G_{\Lambda_\alpha}$ at the subset of marked points $\{ {\mathsf p}_\alpha \}_{\alpha = 1}^N$.
\end{remark}

We proceed with generalising the construction of \S\ref{sec: Lag for Hitchin}. In the present setup, the Lagrangian $1$-form \eqref{L} and action \eqref{ungauged action} take the form
\begin{equation*}
\Lag =\frac{1}{2 \pi i}\int_C\langle B,\delta A''\rangle + \sum_{\alpha=1}^N\langle\Lambda_\alpha,\varphi_\alpha^{-1}\d \varphi_\alpha\rangle - H_i\big( B, A'', (\varphi_\alpha) \big) \d t^i\,,    
\end{equation*}
for some Hamiltonians $H_i : T^\ast \mathcal M \times \prod_{\alpha=1}^N \mathcal O_\alpha \to \CC$.
Its pullback along an arbitrary immersion $\Sigma : \RR^n \to T^\ast \mathcal M \times \prod_{\alpha=1}^N \mathcal O_\alpha \times \RR^n$, $\Sigma(u) = \big( B(u), A''(u), (\varphi_\alpha(u)), t(u) \big)$ is then
\begin{equation*}
\Sigma^\ast \Lag = \frac{1}{2 \pi i}\int_C \langle B,\d_{\RR^n} A'' \rangle + \sum_{\alpha=1}^N \langle \Lambda_\alpha,\varphi_\alpha^{-1}\d_{\RR^n} \varphi_\alpha \rangle - H_i\big( B, A'', (\varphi_\alpha) \big) \d_{\RR^n} t^i\,,
\end{equation*}
so that for any parametrised curve $\Gamma : (0,1) \to \RR^n$, $s \mapsto \big( u^j(s) \big)$ the corresponding action is given by $S_\Gamma[\Sigma] \coloneqq \int_0^1 (\Sigma \circ \Gamma)^\ast \Lag$, which explicitly reads
\begin{align} \label{ungauged action Phi A Lambda}
S_\Gamma[\Sigma] &= \int_0^1\Bigg(\! -\frac{1}{2 \pi i}\int_C \big\langle B(u),\partial_{u^j} A''(u) \big\rangle + \sum_{\alpha=1}^N \big\langle \Lambda_\alpha,\varphi_\alpha(u)^{-1} \partial_{u^j} \varphi_\alpha(u) \big\rangle \notag\\
&\qquad\qquad\qquad\qquad\qquad\qquad - H_i\Big( B(u(s)), A''(u(s)), \big( \varphi_\alpha(u(s)) \big) \Big) \frac{\partial t^i}{\partial u^j}\Bigg) \frac{\d u^j}{\d s} \d s\,.
\end{align}
Note that this action, with the Hamiltonians $H_i$ as in \eqref{Hitchin Hamiltonians}, bears some resemblance to the one in \cite[\S4.1.3]{O}. However, the latter differs from \eqref{ungauged action Phi A Lambda} in two key respects. First, it is given by a sum of individual actions on $\RR \times C$ for each time flow of the Hitchin system, rather than by a family of actions on $\RR^n \times C$. As recalled in \S\ref{sec: univar principle}, the latter formulation is crucial to encode integrability through the Euler–Lagrange equation \eqref{univar EL eq b}. Second, the action in \cite[\S4.1.3]{O} is treated as a gauge theory with ``gauge group'' $\mathcal G$. The invariance of the action \eqref{ungauged action Phi A Lambda} under the action of the group $\mathcal G$ on the extended phase space \eqref{extended phase space} is indeed the content of the next proposition, which is an extension of Proposition \ref{prop: Hi invariance Hitchin} to the case with type A defects. However, recall from \S\ref{sec: Lag for Hitchin mod G}, see also the second part of Remark \ref{rem: gauge in Hitchin}, that unlike \cite{O} we are treating $\mathcal G$ as a \emph{global} symmetry group. The \emph{gauging} of this global $\mathcal G$-symmetry is what led to the $3$d mixed BF action with type B defects in Theorem \ref{thm: BF Lagrangian}; see also Theorem \ref{thm: 3d BF with type A defects 1} below for the analogue in the presence of type A defects.

\begin{proposition} \label{prop: Hi invariance Hitchin type A}
The action \eqref{ungauged action Phi A Lambda} is invariant under the action of $\mathcal G$ on $T^\ast \mathcal M \times \prod_{\alpha=1}^N \mathcal O_\alpha \times \RR^n$ given by \eqref{group_action_marked} and extended trivially to the $\RR^n$ factor if and only if each $H_i$ for $i=1, \ldots, n$ is invariant under the group action, i.e.
\begin{equation} \label{Hi invariance prop 2}
H_i\Big( \Ad^*_g B, {}^g A'', \big( g(\mathsf p_\alpha) \varphi_\alpha \big) \Big) = H_i\big( B, A'', (\varphi_\alpha) \big)
\end{equation}
for any $\big( B, A'', (\varphi_\alpha) \big) \in T^\ast \mathcal M \times \prod_{\alpha=1}^N \mathcal O_\alpha$ and $g \in \mathcal G$. Moreover,
the value of the associated moment map $\mu : T^\ast \mathcal M \times \prod_{\alpha=1}^N \mathcal O_\alpha \to \mathfrak{G}^\ast$ on an infinitesimal bundle morphism $X \in \mathfrak G$ is given by
\begin{equation} \label{moment_map_Hitchin type A}
\mu_{(B, A'',(\varphi_\alpha))}(X) = \frac{1}{2 \pi i}\int_C \big\langle B, \bar\partial^{A''} X \big\rangle  +\sum_{\alpha=1}^N \int_C\big\langle \Ad^\ast_{\varphi_\alpha} \Lambda_\alpha, X \big\rangle\delta_{\mathsf p_\alpha} \,.
\end{equation}
\begin{proof}
For the invariance of the action under the group action \eqref{group_action_marked}, by Proposition \ref{prop: Hi invariance Hitchin} we only need to check the invariance of the new kinetic terms for the $\varphi_\alpha$ but this is immediate.

The extra term in the moment map \eqref{moment_map_Hitchin type A} compared to the expression in \eqref{moment_map_Hitchin} comes from generalising the argument in the last part of the proof of Proposition \ref{prop: Hi invariance Hitchin} to the present case and, again, the new contribution comes from the coadjoint orbit terms in the kinetic part. 
\end{proof}
\end{proposition}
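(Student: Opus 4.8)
The plan is to exploit the fact that the action \eqref{ungauged action Phi A Lambda} differs from the action \eqref{ungauged action Phi A} treated in Proposition \ref{prop: Hi invariance Hitchin} only through the additional coadjoint-orbit kinetic terms $\sum_{\alpha=1}^N \langle \Lambda_\alpha, \varphi_\alpha^{-1} \partial_{u^j}\varphi_\alpha\rangle$. Consequently both assertions reduce to analysing how these new terms transform under the $\mathcal G$-action \eqref{group_action_marked}, since everything concerning the $T^\ast \mathcal M$ factor is already supplied by Proposition \ref{prop: Hi invariance Hitchin}. First I would split the Lagrangian into its three pieces --- the $\langle B, \partial_{u^j} A''\rangle$ term, the orbit terms, and the Hamiltonian term --- and recall that the first is strictly invariant.

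For the invariance statement I would check directly that each orbit term is invariant under a \emph{global} (that is, $u$-independent) automorphism $g \in \mathcal G$. Since \eqref{group_action_marked} acts on the orbit variable by left multiplication $\varphi_\alpha^I \mapsto g^I(\mathsf p_\alpha)\varphi_\alpha^I$ and $g$ does not depend on $u$, the Maurer--Cartan combination $\varphi_\alpha^{-1}\partial_{u^j}\varphi_\alpha$ is invariant under left multiplication by the fixed element $g^I(\mathsf p_\alpha)$; this is the content of the author's remark that the check is immediate. Combined with the invariance of the $T^\ast\mathcal M$ kinetic term, the full action then transforms only through the $H_i$, so that invariance for all maps $\Sigma$ and all curves $\Gamma$ is equivalent to the pointwise invariance \eqref{Hi invariance prop 2}, exactly as in Proposition \ref{prop: Hi invariance Hitchin}.

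For the moment map I would apply the Noether trick used at the end of the proof of Proposition \ref{prop: Hi invariance Hitchin}: promote the constant Lie-algebra parameter $X$ to a function $\lambda(u) X$ and read off the coefficient of $\partial_{u^j}\lambda$ in the variation of the kinetic terms. The $T^\ast\mathcal M$ piece reproduces $\frac{1}{2\pi i}\int_C \langle B, \bar\partial^{A''} X\rangle$ verbatim. The one genuinely new computation is the variation of the orbit term: with $\delta_{\lambda X}\varphi_\alpha^I = \lambda(u)\, X^I(\mathsf p_\alpha)\varphi_\alpha^I$ from \eqref{global_inf_action_X}, a short first-order calculation (in which the undifferentiated-$\lambda$ terms cancel) gives $\delta_{\lambda X}(\varphi_\alpha^{-1}\partial_{u^j}\varphi_\alpha) = (\partial_{u^j}\lambda)\,\Ad_{(\varphi_\alpha^I)^{-1}} X^I(\mathsf p_\alpha)$. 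Pairing with $\Lambda_\alpha$ and using invariance of the canonical pairing yields the contribution $(\partial_{u^j}\lambda)\,\langle \Ad^\ast_{\varphi_\alpha^I}\Lambda_\alpha, X^I(\mathsf p_\alpha)\rangle$. Identifying the Noether charge as the total coefficient of $\partial_{u^j}\lambda$, and rewriting the evaluation at $\mathsf p_\alpha$ as the integral $\int_C \langle \Ad^\ast_{\varphi_\alpha}\Lambda_\alpha, X\rangle\,\delta_{\mathsf p_\alpha}$ against the distributional $2$-form of \eqref{delta coord indep}, then produces precisely \eqref{moment_map_Hitchin type A}.

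The steps needing care are bookkeeping rather than analytic. One must confirm that the orbit term and its variation are trivialisation-independent, which follows from the cocycle relation $\varphi_\alpha^I = g_{IJ}(\mathsf p_\alpha)\varphi_\alpha^J$ recorded before the proposition that makes $\langle \Lambda_\alpha, \varphi_\alpha^{-1}\d\varphi_\alpha\rangle$ well defined as in \eqref{ext_tauto}; and one must check that the localisation $\int_C \langle \Ad^\ast_{\varphi_\alpha}\Lambda_\alpha, X\rangle\,\delta_{\mathsf p_\alpha} = \langle \Ad^\ast_{\varphi_\alpha^I}\Lambda_\alpha, X^I(\mathsf p_\alpha)\rangle$ is consistent with the coordinate-independence of $\delta_{\mathsf p_\alpha}$ established in \S\ref{sec: 3d BF}. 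I anticipate no real obstacle, since the marked-point data are finite-dimensional and decouple cleanly from the infinite-dimensional $T^\ast\mathcal M$ factor.
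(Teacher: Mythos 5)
Your proposal is correct and follows exactly the route of the paper's (very terse) proof: reduce to Proposition \ref{prop: Hi invariance Hitchin} for the $T^\ast\mathcal M$ part, observe that the orbit kinetic terms $\langle\Lambda_\alpha,\varphi_\alpha^{-1}\partial_{u^j}\varphi_\alpha\rangle$ are invariant under left multiplication by the $u$-independent element $g(\mathsf p_\alpha)$, and then apply the Noether trick to the orbit terms to extract the extra contribution $\langle\Ad^\ast_{\varphi_\alpha}\Lambda_\alpha, X(\mathsf p_\alpha)\rangle$ to the moment map. Your explicit computation $\delta_{\lambda X}(\varphi_\alpha^{-1}\partial_{u^j}\varphi_\alpha) = (\partial_{u^j}\lambda)\,\Ad_{(\varphi_\alpha^I)^{-1}} X^I(\mathsf p_\alpha)$, with the undifferentiated-$\lambda$ terms cancelling, is precisely the detail the paper leaves implicit behind the phrases ``this is immediate'' and ``generalising the argument''.
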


Following the discussion in Remark \ref{rem: mu well defined}, we can use Stokes's theorem to rewrite the expression \eqref{moment_map_Hitchin type A} for the moment map as
\begin{equation*}
\mu_{(B, A'',(\varphi_\alpha))}(X) = \int_C \bigg\langle \frac{1}{2 \pi i}\bar\partial^{A''} B + \sum_{\alpha=1}^N \Ad^\ast_{\varphi_\alpha} \Lambda_\alpha \delta_{\mathsf p_\alpha} ,X \bigg\rangle\,.
\end{equation*}
In particular, we see that the vanishing of $\mu$ at any $\big( B, A'', (\varphi_\alpha) \big) \in T^\ast \mathcal M \times \prod_{\alpha=1}^N \mathcal O_\alpha$ corresponds to the condition
\begin{equation} \label{mu_with_punctures}
\bar{\partial}^{A''} B = -2 \pi i\sum_{\alpha=1}^N \Ad^\ast_{\varphi_\alpha} \Lambda_\alpha \delta_{\mathsf p_\alpha}\,,
\end{equation}
which reads locally as 
\begin{equation}
\label{local_mu_constraint}
    \partial_{\bar z_I} B_{z_I} + {\rm ad}^*_{A''_{\bar z_I}} B_{z_I} = 2 \pi i\sum_{\substack{\alpha=1\\ \mathsf p_\alpha \in U_I}}^N \Ad^\ast_{\varphi^I_\alpha} \Lambda_\alpha \delta\big( z_I - z_I(\mathsf p_\alpha) \big)\,.
\end{equation}

To produce a (non trivial) dynamical theory we must choose Hamiltonian functions $H_i$ on the symplectic manifold $T^\ast\M\times \prod_\alpha\mathcal{O}_\alpha$.  We choose these in a similar way as in the case without marked points, with Hamiltonians given in \eqref{Hitchin Hamiltonians} in terms of invariant polynomials $P_i$ on $\g^\ast$ and points ${\mathsf q}_i$. This ensures in particular their invariance property as in Proposition \ref{prop: Hi invariance Hitchin type A}. We will specify the number of points ${\mathsf q}_i$ more precisely when we look at specific examples in section \ref{sec: examples}.

Finally, we proceed with generalising the construction of \S\ref{sec: Lag for Hitchin mod G} to the case with marked points.
Recall that we introduced the $\g$-valued gauge field ${\cal A}$ along $\RR^n$ and $\langle \Sigma^\ast\mu,\mathcal A \rangle$ was added to the pullback $\Sigma^\ast \Lag$ of the Lagrangian $1$-form \eqref{L_Phi_A} to eventually produce the gauged action \eqref{gauged Hitchin action}.
In the present case we find the gauged action
\begin{multline}\label{gauged Hitchin action marked points}
S_\Gamma[\Sigma,\mathcal{A},t] = 
\int_0^1 \Bigg( -\frac{1}{2 \pi i}
\int_C \left\langle B(u), \frac{\partial A''}{\partial u^j}- \bar{\partial}^{A''}\mathcal{A}_j \right\rangle+\sum_{\alpha=1}^N \Big\langle \Lambda_\alpha, \varphi_\alpha^{-1} \big(\partial_{u^j} + \mathcal A_j(\mathsf p_\alpha) \big) \varphi_\alpha \Big\rangle\\
- H_i\big( B(u) \big)\frac{\partial t^i}{\partial u^j} \Bigg) \frac{\d u^j}{\d s} \d s \,.
\end{multline}
Rewriting this in terms of the connection $A=A''+\mathcal{A}$ over $C\times\RR^n$ yields the following analogue of Theorem \ref{thm: BF Lagrangian}.

\begin{theorem} \label{thm: 3d BF with type A defects 1}
The gauged multiform action \eqref{gauged Hitchin action marked points} for Hitchin's system with marked points is given by the multiform action for $3$d mixed BF theory on $C \times \RR^n$ for the collection of fields $\big( B, A, (\varphi_\alpha) \big)$ with a type B line defect along each coordinate $t^i$ determined by the Hitchin Hamiltonian $H_i$ defined as in \eqref{Hitchin Hamiltonians} and a type A line defect at each marked point $\mathsf p_\alpha$, namely 
\begin{align}\label{S Gamma 2}
S_\Gamma[B, A, (\varphi_\alpha), t] &= \frac{1}{2 \pi i}\int_{C\times \Gamma} \big\langle B,F_A \big\rangle + \sum_{\alpha=1}^N \int_0^1\Big\langle\Lambda_\alpha,\varphi_\alpha^{-1}\big(\partial_{u^j} + \mathcal A_j(\mathsf p_\alpha) \big)\varphi_\alpha \Big\rangle \frac{\d u^j}{\d s}\d s \notag\\
&\qquad\qquad\qquad\qquad\qquad\qquad - \int_0^1H_i\big(B(u(s)) \big) \frac{\d t^i}{\d s} \d s \,,
\end{align}
for an arbitrary curve $\Gamma : (0,1) \to \RR^n$, $s \mapsto u(s)$, where $F_A$ is the curvature introduced in \eqref{FA def}.
\end{theorem}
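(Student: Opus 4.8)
The plan is to prove the theorem by a direct term-by-term comparison of the two actions \eqref{gauged Hitchin action marked points} and \eqref{S Gamma 2}, treating the bulk contribution exactly as in Theorem \ref{thm: BF Lagrangian} and then observing that the marked-point contributions already coincide. The guiding observation is that passing to type A defects only introduces the new coadjoint-orbit kinetic terms $\sum_{\alpha} \langle \Lambda_\alpha, \varphi_\alpha^{-1}(\partial_{u^j} + \mathcal A_j(\mathsf p_\alpha))\varphi_\alpha \rangle$, which appear verbatim on both sides of the claimed identity; the only genuine analytic content is therefore confined to the bulk BF term, and this term is identical to the puncture-free case. In this sense the result is essentially a corollary of Theorem \ref{thm: BF Lagrangian}.

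First I would isolate the bulk term $\frac{1}{2\pi i}\int_{C\times\Gamma}\langle B, F_A\rangle$ and compute its pullback along the one-dimensional curve $\Gamma$. As noted after Theorem \ref{thm: BF Lagrangian}, the component $\frac12(\cdots)\,\d u^i\wedge\d u^j$ of the curvature \eqref{FA def} vanishes upon pullback to $\Gamma$ since $\d s\wedge\d s=0$, leaving only the mixed component $(\partial_{\bar z_I}\mathcal A^I_{j}-\partial_{u^j}A''_{\bar z_I}+[A''_{\bar z_I},\mathcal A^I_{j}])\,\d\bar z_I\wedge\d u^j$. Pairing this with $B=B_{z_I}\d z_I$ and recognising that $\partial_{\bar z_I}\mathcal A^I_{j}+[A''_{\bar z_I},\mathcal A^I_{j}]$ is precisely the local expression of $\bar\partial^{A''}\mathcal A_j$, I would identify the pulled-back bulk term with $\int_0^1\big(-\frac{1}{2\pi i}\int_C\langle B,\partial_{u^j}A''-\bar\partial^{A''}\mathcal A_j\rangle\big)\frac{\d u^j}{\d s}\d s$, matching the first term of \eqref{gauged Hitchin action marked points} with consistent sign and $\frac{1}{2\pi i}$ normalisation.

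Next I would dispatch the two remaining families of terms. The type A defect terms require no computation: they are literally identical in the two expressions, once one moves the single outer integral $\int_0^1(\cdots)\frac{\d u^j}{\d s}\d s$ of \eqref{gauged Hitchin action marked points} through the finite sum over $\alpha$ to recover the separate integrals of \eqref{S Gamma 2}. For the Hamiltonian (type B) terms, the chain rule $\frac{\partial t^i}{\partial u^j}\frac{\d u^j}{\d s}=\frac{\d t^i}{\d s}$ converts $-H_i(B(u))\frac{\partial t^i}{\partial u^j}$ inside the $u$-integral into $-H_i(B(u(s)))\frac{\d t^i}{\d s}$, which is exactly the last term of \eqref{S Gamma 2}.

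I do not anticipate a serious obstacle, since the statement is essentially a reformulation whose only nontrivial input is the curvature-pullback identity already established in Theorem \ref{thm: BF Lagrangian}. The point requiring the most care is bookkeeping: confirming that the defect coupling $\mathcal A_j(\mathsf p_\alpha)$ — the evaluation of the gauge field at the marked point — carries over unchanged, and verifying that the combined $3$-form $\langle B,F_A\rangle$ is globally well-defined on $C\times\RR^n$, which follows from the transformation law \eqref{partial connection compat} together with the well-definedness discussion preceding the statement.
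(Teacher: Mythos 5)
Your proposal is correct and matches the paper's own (implicit) argument: the paper states this theorem as a direct analogue of Theorem \ref{thm: BF Lagrangian}, whose proof is exactly the observation you use — upon pullback along the one-dimensional curve $\Gamma$ only the mixed $\d\bar z_I \wedge \d u^j$ component of $F_A$ in \eqref{FA def} survives, reproducing the bulk term $-\frac{1}{2\pi i}\int_C \langle B, \partial_{u^j}A'' - \bar\partial^{A''}\mathcal A_j\rangle$ of \eqref{gauged Hitchin action marked points}, while the coadjoint-orbit and Hamiltonian terms match verbatim after the chain rule $\frac{\partial t^i}{\partial u^j}\frac{\d u^j}{\d s} = \frac{\d t^i}{\d s}$.
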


Finally, we also have the following analogue of Theorem \ref{th_one_form_Hitchin}.

\begin{theorem} \label{thm: 3d BF with type A defects 2}
The gauged univariational principle applied to the $3$d mixed BF multiform action $S_\Gamma[B,A, (\varphi_\alpha), t]$ in \eqref{S Gamma 2} yields a set of equations for the fields $B$, $A$ and  $\varphi_\alpha$ for $\alpha =1,\ldots, N$. Working in any chart $(U_I, z_I)$ of $C$ and in terms of the components $B_{z_I}(z_I, \bar z_I, u) $, $A''_{\bar z_I}(z_I, \bar z_I, u)$, $\mathcal A^I_{j}(z_I, \bar z_I, u)$ of the various fields, these equations take the following form:
\begin{subequations}
\label{eqs_gauged_action}
\begin{align}
\partial_{\bar z_I} \widetilde{\mathcal A}^I_{i} - \partial_{t^i} A''_{\bar z_I} + [A''_{\bar z_I}, \widetilde{\mathcal A}^I_{i}] &= 2 \pi i\,\nabla P_i\big( B_{z_I}(\mathsf q_i) \big) \delta\big( z_I-z_I(\mathsf q_i) \big)\,, \label{F_B}\\
\partial_{\bar z_I} B_{z_I} + {\rm ad}^*_{A''_{\bar z_I}} B_{z_I} &= 2 \pi i\sum_{\substack{\alpha=1\\ \mathsf p_\alpha \in U_I}}^N \Ad^\ast_{\varphi^I_\alpha} \Lambda_\alpha \delta\big( z_I - z_I(\mathsf p_\alpha) \big) \label{A_defect_modif}\,,\\
\partial_{t^j} B_{z_I}+{\rm ad}^*_{\widetilde{\mathcal A}^I_{j}} B_{z_I} &= 0 \label{eq:B_z_punctures}\,,\\
\partial_{t^j} H_i(B)&=0 \,,\\
\partial_{t^j} \big( \Ad^\ast_{\varphi^I_\alpha} \Lambda_\alpha \big) + {\rm ad}^*_{\widetilde{\mathcal A}^I_{ j}(\mathsf p_\alpha)} \Ad^\ast_{\varphi^I_\alpha} \Lambda_\alpha &=0\label{eq:L_beta}\,,
\end{align}
\end{subequations}
where we have used the invertibility of the map $(u^j) \mapsto \big( t^i(u) \big)$ to define $\widetilde{\mathcal A}^I_{i} = \frac{\partial u^j}{\partial t^i} \mathcal A^I_{j}$.
The equations associated to any pair of overlapping charts $(U_I, z_I)$ and $(U_J, z_J)$ are compatible on $U_I \cap U_J \neq \emptyset$.
The following zero-curvature equations also hold
\begin{equation}\label{eq:zce}
\partial_{t^i} \widetilde{\mathcal A}^I_{j} - \partial_{t^j} \widetilde{\mathcal A}^I_{i} + \big[ \widetilde{\mathcal A}^I_{i}, \widetilde{\mathcal A}^I_{j} \big] =0\,.
\end{equation}
\end{theorem}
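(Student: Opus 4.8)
The plan is to follow the proof of Theorem \ref{th_one_form_Hitchin} line by line and to treat the coadjoint-orbit (type A) terms in \eqref{S Gamma 2} as the only source of new contributions. The key observation is that $\varphi_\alpha$ enters \eqref{S Gamma 2} \emph{only} through the Kirillov--Kostant terms $\sum_\alpha \int_0^1 \big\langle \Lambda_\alpha, \varphi_\alpha^{-1}(\partial_{u^j} + \mathcal A_j(\mathsf p_\alpha))\varphi_\alpha\big\rangle \frac{\d u^j}{\d s}\,\d s$, whereas the bulk term $\frac{1}{2\pi i}\int_{C\times\Gamma}\langle B, F_A\rangle$ and the Hamiltonian term are identical to those of Theorem \ref{th_one_form_Hitchin}. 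Consequently the variations with respect to $B$, $A''$ and $t^i$ carry over verbatim: varying $B$ reproduces \eqref{F_B}, varying $A''$ reproduces the flow equation \eqref{eq:B_z_punctures}, and varying $t^i$ reproduces $\partial_{t^j}H_i(B)=0$, since the Hamiltonians \eqref{Hitchin Hamiltonians} depend on $B$ alone.

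First I would treat the variation with respect to the gauge field $\mathcal A$. Besides the bulk contribution $\tfrac{1}{2\pi i}\langle \bar\partial^{A''}B, \delta\mathcal A_j\rangle$ obtained after integrating by parts over the closed surface $C$ (as in Remark \ref{rem: mu well defined}), the coadjoint terms now contribute a localised piece $\langle \Ad^\ast_{\varphi_\alpha}\Lambda_\alpha, \delta\mathcal A_j(\mathsf p_\alpha)\rangle = \int_C \langle \Ad^\ast_{\varphi_\alpha}\Lambda_\alpha\,\delta_{\mathsf p_\alpha}, \delta\mathcal A_j\rangle$. Requiring the sum to vanish for arbitrary $\delta\mathcal A_j$ gives precisely the constraint $\mu=0$ with type A sources, whose local form is \eqref{A_defect_modif} (equivalently \eqref{local_mu_constraint}), the sign being accounted for by the orientation $\d\bar z_I\wedge\d z_I = -\,\d z_I\wedge\d\bar z_I$.

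The genuinely new step is the variation with respect to $\varphi_\alpha$, which yields \eqref{eq:L_beta}. Writing $\delta\varphi_\alpha = \varphi_\alpha\,\xi$ with $\xi:\RR^n\to\g$ and setting $\theta_j \coloneqq \varphi_\alpha^{-1}(\partial_{u^j}+\mathcal A_j(\mathsf p_\alpha))\varphi_\alpha$, a direct computation gives $\delta\langle\Lambda_\alpha,\theta_j\rangle = \langle\Lambda_\alpha, [\theta_j,\xi]\rangle + \langle\Lambda_\alpha, \partial_{u^j}\xi\rangle$. Along the curve the second term assembles into the total derivative $\tfrac{\d}{\d s}\langle\Lambda_\alpha,\xi\rangle$ (since $\Lambda_\alpha$ is constant) and integrates to a boundary term that vanishes by the boundary conditions on $\xi$. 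Demanding that the remainder vanish for all curves $\Gamma$, i.e. for arbitrary $\d u^j/\d s$ and arbitrary $\xi$, forces ${\rm ad}^\ast_{\theta_j}\Lambda_\alpha = 0$ for each $j$. Conjugating by $\Ad^\ast_{\varphi_\alpha}$ and using the identity $\Ad^\ast_{\varphi}\,{\rm ad}^\ast_{\varphi^{-1}M\varphi} = {\rm ad}^\ast_{M}\,\Ad^\ast_{\varphi}$ together with $\partial_{u^j}(\Ad^\ast_{\varphi_\alpha}\Lambda_\alpha) = \Ad^\ast_{\varphi_\alpha}{\rm ad}^\ast_{\varphi_\alpha^{-1}\partial_{u^j}\varphi_\alpha}\Lambda_\alpha$ turns this into $\partial_{u^j}(\Ad^\ast_{\varphi_\alpha}\Lambda_\alpha) + {\rm ad}^\ast_{\mathcal A_j(\mathsf p_\alpha)}\Ad^\ast_{\varphi_\alpha}\Lambda_\alpha = 0$; reparametrising from $u$ to $t$ via $\widetilde{\mathcal A}^I_i = \tfrac{\partial u^j}{\partial t^i}\mathcal A^I_j$ yields \eqref{eq:L_beta}.

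The zero-curvature equation \eqref{eq:zce} is obtained exactly as in Theorem \ref{th_one_form_Hitchin}, and requires the stability condition \eqref{inf_freeness}: since the two equations used there, namely \eqref{F_B} and \eqref{eq:B_z_punctures}, are unchanged here, cross-differentiating them and using that mixed partials commute (the $\nabla P_i$ delta-sources at $\mathsf q_i$ dropping out by virtue of \eqref{eq:B_z_punctures}) gives ${\rm ad}^\ast_{\widetilde F^I_{ij}}B_{z_I}=0$ and $\partial_{\bar z_I}\widetilde F^I_{ij} + [A''_{\bar z_I},\widetilde F^I_{ij}]=0$, whereupon \eqref{inf_freeness} forces $\widetilde F^I_{ij}=0$. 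Finally, compatibility of the equations on overlaps $U_I\cap U_J$ follows from the transition rules of the fields; the only new verification is for \eqref{eq:L_beta}, where $\varphi_\alpha^I = g_{IJ}(\mathsf p_\alpha)\varphi_\alpha^J$ and \eqref{Ai compat} ensure that both sides transform covariantly under $\Ad^\ast_{g_{IJ}(\mathsf p_\alpha)}$. The main obstacle is the $\varphi_\alpha$-variation of the Kirillov--Kostant term: one must correctly isolate and discard the total-$s$-derivative piece within the univariational principle, and then perform the $\Ad^\ast$-conjugation to recast the pointwise condition ${\rm ad}^\ast_{\theta_j}\Lambda_\alpha=0$ into the manifestly covariant flow equation \eqref{eq:L_beta}; every remaining step reduces to Theorem \ref{th_one_form_Hitchin}.
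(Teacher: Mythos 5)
Your proposal is correct and follows essentially the same route as the paper's proof: the variations with respect to $B$, $A''$ and $t^i$ (and the derivation of \eqref{eq:zce}) are carried over from Theorem \ref{th_one_form_Hitchin}, the $\mathcal A$-variation picks up the localised type A sources giving \eqref{A_defect_modif}, and the only genuinely new computation is the $\varphi_\alpha$-variation of the Kirillov--Kostant term. The sole (cosmetic) difference is that you parametrise that variation by $\xi=\varphi_\alpha^{-1}\delta\varphi_\alpha$, obtain the pointwise condition $\mathrm{ad}^\ast_{\theta_j}\Lambda_\alpha=0$ and then conjugate by $\Ad^\ast_{\varphi_\alpha}$, whereas the paper works with $\delta\varphi_\alpha\varphi_\alpha^{-1}$ and integrates by parts to land on the covariant equation \eqref{eq:L_beta} directly; the two computations are equivalent.
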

\begin{proof}
The computation for the variation of the action with respect to the fields $B^I$, $A^I$,  and $t^i$ is exactly as in the proof of Theorem \ref{th_one_form_Hitchin} and so is the derivation of \eqref{eq:zce}. The new terms in the action only modify the equation obtained by varying ${\cal A}^I$ and this gives the right-hand side of \eqref{A_defect_modif}. Finally, the variation with respect to $\varphi_\beta$ gives
\begin{align*}
\delta_{\varphi_\beta} S_\Gamma[B, A, (\varphi_\alpha), t] &= \delta_{\varphi_\beta} \int_0^1\sum_{\alpha=1}^N \Big\langle\Lambda_\alpha,\varphi_\alpha^{-1}\big(\partial_{u^j} + \mathcal A_{j}(\mathsf p_\alpha) \big)\varphi_\alpha \Big\rangle \frac{\d u^j}{\d s} \d s\\
&= \int_0^1\Big\langle\Lambda_\beta,-(\varphi^I_\beta)^{-1}\delta \varphi^I_\beta(\varphi^I_\beta)^{-1}\big( \partial_{u^j} + \mathcal A^I_{j}(\mathsf p_{\beta}) \big) \varphi^I_\beta\\
&\qquad\qquad\qquad\qquad\qquad +(\varphi^I_\beta)^{-1}\big( \partial_{u^j}+ \mathcal A^I_{j}(\mathsf p_{\beta}) \big) \delta\varphi^I_\beta \Big\rangle \frac{\d u^j}{\d s} \d s\\
&= \int_0^1\Big\langle\Ad^\ast_{\varphi^I_\beta} \Lambda_\beta, \big[ \partial_{u^j} + \mathcal A^I_{j}(\mathsf p_\beta),\delta\varphi^I_\beta(\varphi^I_\beta)^{-1}\big] \Big\rangle \frac{\d u^j}{\d s} \d s\\
&= - \int_0^1\Big\langle \big( \partial_{u^j} + {\rm ad}^\ast_{ \mathcal A^I_{j}(\mathsf p_\beta)} \big)\big( \Ad^\ast_{\varphi^I_\beta} \Lambda_\beta \big), \delta\varphi^I_\beta(\varphi^I_\beta)^{-1} \Big\rangle \frac{\d u^j}{\d s}\d s
\end{align*}
so that we obtain \eqref{eq:L_beta}.
\end{proof}

\section{Hitchin system in Lax form and its variational formulation} \label{sec: Hitchin holomorphic}

In \S\ref{sec: 3d BF} we showed that Hitchin's completely integrable system on the symplectic quotient $\mu^{-1}(0) / \mathcal G$ is described variationally by the action for $3$d mixed BF theory with type B line defects associated with each of the Hamiltonians $H_i$ and corresponding times $t^i$ for $i=1,\ldots, n$, see Theorem \ref{thm: BF Lagrangian}. We further extended this result to Hitchin's integrable system with marked points in \S\ref{sec: adding punctures} by adding type A line defects at each marked point in $C$ to the action, see Theorem \ref{thm: 3d BF with type A defects 1}.

The purpose of this section is twofold. First, in \S\ref{sec: Lax matrices} we will show that the set of variational equations derived in Theorem \ref{thm: 3d BF with type A defects 2} encode the hierarchy of equations of the Hitchin system in the standard Lax form, namely $\partial_{t^i}L=[M_i,L]$ for a meromorphic Lax matrix $L$ and a collection of meromorphic Lax matrices $M_i$ associated with each time $t^i$ in the hierarchy for $i=1, \ldots, n$. To our knowledge, the earliest connection between the Hitchin system and the Lax formalism appeared in \cite{K,LeOZ1}. Second, in \S\ref{sec:unifyingmultiform} we will show that this hierarchy of Lax equations for the Hitchin system is itself variational and obtain the corresponding Lagrangian $1$-form/action, which we refer to as the \emph{unifying action}, directly from the $3$d mixed BF multiform action \eqref{S Gamma 2}. The key is to make use of the isomorphism \eqref{Hitchin phase space intro} mentioned in the introduction in order to pass from the description of the Hitchin system on $\mu^{-1}(0) / \mathcal G$ in terms of smooth $\g$-valued $(0,1)$-connections $A'' \in \mathcal M$ and smooth Higgs fields $B \in T_{A''}^\ast \mathcal M$, as in \S\ref{sec: Lag for Hitchin mod G} and \S\ref{sec: adding punctures}, to a description of the Hitchin system on $T^\ast \text{Bun}_G(C)$ in terms of holomorphic transition functions and holomorphic Higgs fields. We will thus establish the right-hand side of the following commutative diagram
\begin{equation*}
\begin{tikzcd}[row sep=30, column sep=60]
S_\Gamma[B,A, (\varphi_\alpha), t] \arrow[r, "\text{holomorphic}", "\text{description}"'] \arrow[d, "\text{univariational}"' pos=0.35, "\text{principle}"' pos=0.65] & \text{Unifying action} \arrow[d, "\text{univariational}" pos=0.35, "\text{principle}" pos=0.65] \\
\text{Variational equations \eqref{eqs_gauged_action}} \arrow[r, "\text{holomorphic}", "\text{description}"'] & \partial_{t^i}L=[M_i,L] 
\end{tikzcd}
\end{equation*}

\subsection{Hitchin system in Lax form} \label{sec: Lax matrices}

The key observation, following \cite{VW}, is that \eqref{eq:B_z_punctures} takes the form of a hierarchy of Lax equations for Lax matrices $B_{z_I}$ and $\widetilde{\mathcal A}^I_i$ defined on the coordinate patch $(U_I, z_I)$ that become meromorphic \textit{when solving \eqref{F_B}-\eqref{A_defect_modif} in a local trivialisation such that $A''_{\bar z_I} = 0$}. The compatibility of this hierarchy of Lax equations is ensured by \eqref{eq:zce}. We now spell this out in detail.

As usual, in order to bring \eqref{eq:B_z_punctures} to the standard Lax form, we will identify $\g^\ast$ with $\g$ using a nondegenerate invariant bilinear form on $\g$. We still denote the latter by $\langle~,~\rangle$ (like the pairing between $\g^\ast$ and $\g$) hoping that it will not confuse the reader. This allows us to identify adjoint and coadjoint actions. Moreover, from now on we only consider matrix Lie groups and Lie algebras.

Next, we pick a point $\mathsf p \in C$ distinct from all the marked points $\mathsf p_\alpha$ with $\alpha = 1,\ldots, N$ and $\mathsf q_i$ with $i = 1,\ldots, n$. Pick a neighbourhood $U_0$ of $\mathsf p$ not containing any of these other points and equipped with a local coordinate $z_0 : U_0 \to C$. We also let $U_1 \coloneqq C \setminus \{ \mathsf p \}$ so that $\{ U_0, U_1 \}$ forms an open cover of $C$. Let us stress that for $g \geq 1$ the open $U_1 \subset C$ is not a coordinate chart since in general it cannot be equipped with a holomorphic coordinate $U_1 \to \CC$. However, we can further refine the cover by choosing a holomorphic atlas $\{ (U_I, z_I) \}_{I \in \mathcal I}$ for $U_1$, where $\mathcal I$ is any indexing set not containing $0$ and $1$ so that we can also use the notation $U_I$ for $I \in \{ 0, 1 \}$. In other words, $U_I$ for $I \in \mathcal I$ are open subsets of $U_1$ such that $\bigcup_{I \in \mathcal I} U_I = U_1$ and equipped with coordinates $z_I : U_I \to \CC$. 

For the purpose of describing the smooth principal $G$-bundle $\pi : {\mathcal P} \to C$ with a holomorphic structure specified by $A'' \in \mathcal M$, the open cover $\{U_0, U_1\}$ of $C$ suffices. Indeed, since $U_1$ is a non-compact Riemann surface and $G$ is semi-simple, we can trivialise the holomorphic principal $G$-bundle $({\mathcal P}, A'')$ over $U_1$. Since we can also trivialise $({\mathcal P}, A'')$ over $U_0$, we obtain local trivialisations for the holomorphic principal $G$-bundle $({\mathcal P}, A'')$ relative to the open cover $\{ U_0, U_1 \}$ of $C$. More importantly for us, we also have local trivialisations of the pullback bundle $\pi_C^\ast {\mathcal P} \cong {\mathcal P} \times \RR^n$, see \S\ref{sec: Lag for Hitchin mod G}, which we write as
\begin{subequations} \label{local triv examples}
\begin{align}
\label{loc triv 0} \psi_0 : \pi^{-1}(U_0 \times \RR^n) &\overset{\cong}\longrightarrow U_0 \times \RR^n \times G \,, \quad p \longmapsto (\pi(p), f_0(p)) \,,\\
\label{loc triv 1} \psi_1 : \pi^{-1}(U_1 \times \RR^n) &\overset{\cong}\longrightarrow U_1 \times \RR^n \times G \,, \quad p \longmapsto (\pi(p), f_1(p))
\end{align}
\end{subequations}
relative to the open cover $\{ U_0 \times \RR^n, U_1 \times \RR^n \}$ of $C \times \RR^n$. Let $g_{01} : (U_0 \cap U_1) \times \RR^n \to G$ be the smooth transition function on the overlap $(U_0 \cap U_1) \times \RR^n \neq \emptyset$ so that (with $(x,t)=\pi(p)\in (U_0 \cap U_1) \times \RR^n$)
\begin{equation*}
\psi_0 \circ \psi_1^{-1} : (U_0 \cap U_1) \times \RR^n \times G \longrightarrow (U_0 \cap U_1) \times \RR^n \times G \,, \quad (x, t, g) \, \longmapsto\, \big( x, t, g_{01}(x,t) g \big) \,.
\end{equation*}
Since $U_I \subset U_1$ for each $I \in \mathcal I$, we will use the restriction of the local trivialisation \eqref{loc triv 1} on each $\pi^{-1}(U_I \times \RR^n)$ so that the transition functions $g_{IJ} : (U_I \cap U_J) \times \RR^n \to G$ and $g_{1I} : (U_1 \cap U_I) \times \RR^n \to G$ are trivial for all $I,J \in \mathcal I$ and $g_{0I} : (U_0 \cap U_I) \times \RR^n \to G$ is given by $g_{0I} = g_{01}$ for all $I \in \mathcal I$.

Consider the equations of motion \eqref{eq:B_z_punctures} derived in Theorem \ref{thm: 3d BF with type A defects 2}. Since we are now identifying $\g^\ast$ with $\g$ and the coadjoint action with the adjoint action, we can rewrite this set of equations as
\begin{equation}\label{eq:eom-B_zA_k-def-hitchin}
\partial_{t^i} B_{z_I} = \big[ \! - \widetilde{\mathcal A}^I_i, B_{z_I} \big]
\end{equation}
on the open chart $U_I \subset C$ for each $I \in \mathcal I \cup \{0\}$. We note following \cite{VW} that these clearly resemble a hierarchy of Lax equations. We can rewrite them without using coordinates by expressing them in terms of the $\g$-valued $(1,0)$-forms $B^I = B_{z_I}(z_I, \bar z_I, t) \d z_I \in \Omega^{1,0}(U_I \times \RR^n, \g)$ for $I \in \mathcal I \cup \{0\}$ as
\begin{subequations} \label{global Lax}
\begin{equation}\label{global Lax a}
\partial_{t^i} B^I = \big[ \! - \widetilde{\mathcal A}^I_{i}, B^I \big] \,.
\end{equation}
Since these equations for $I \in \mathcal I$ are compatible on overlaps $U_I \cap U_J \neq \emptyset$ and in fact $B^I = B^J$ and $\widetilde{\mathcal A}^I_{i} = \widetilde{\mathcal A}^J_{i}$ since the transition functions on these overlaps are trivial, we may consider the equations \eqref{global Lax a} simply for $I \in \{ 0, 1 \}$. In that case, $B^I \in \Omega^{1,0}(U_I \times \RR^n, \g)$ and $\widetilde{\mathcal A}^I_{i}$ is a $\g$-valued function on $U_I \times \RR^n$ for $I\in \{0, 1 \}$ which on the overlap $U_0 \cap U_1$ are related by
\begin{equation} \label{global Lax b}
B^0 = g_{01} B^1 g_{01}^{-1} \,, \qquad \widetilde{\mathcal A}^0_{i} = g_{01} \widetilde{\mathcal A}^1_{i} g_{01}^{-1} - \partial_{t^i} g_{01} g_{01}^{-1}\,.
\end{equation}
\end{subequations}
We could also further rewrite \eqref{global Lax a} more compactly as
\begin{equation}\label{eq:eom-B_zA_k-def-hitchin U1 geom}
\d_{\RR^n} B^I = \big[\! - \mathcal A^I, B^I \big]
\end{equation}
recalling that $\mathcal A^I = \mathcal A^I_{j} \d u^j = \widetilde{\mathcal A}^I_{i} \d t^i$, see Theorem \ref{thm: 3d BF with type A defects 2}, and correspondingly the second equation in \eqref{global Lax b} would read 
\begin{equation}
\label{transfo_A}
    \mathcal A^0 = g_{01} \mathcal A^1 g_{01}^{-1} - \d_{\RR^n} g_{01} g_{01}^{-1}\,.
\end{equation}

In order for \eqref{global Lax} to describe Lax equations on the Riemann surface $C$ for the hierarchy of the Hitchin system, however, we need $B^I$ and $\widetilde{\mathcal A}^I_{i}$ to be solutions of \eqref{F_B} and \eqref{A_defect_modif} (recall that the latter represent the moment map condition $\mu=0$).
In addition, to ensure that the resulting Lax matrices are meromorphic in the spectral parameter, we make a change of local trivialisation on $\mathcal P$, moving from \eqref{local triv examples} to new local trivialisations $\tilde \psi_0$ and $\tilde \psi_1$ with respect to which $A''^I = 0$ for $I \in \{ 0, 1\}$, or equivalently $A''_{\bar z_I} = 0$ in each local chart $(U_I, z_I)$ for $I \in \mathcal I \cup \{ 0 \}$, see \S\ref{sec: M and G details} for details. Let $h = (h_I) \in \check C^0(C,G)$ be the \v{C}ech $0$-cochain implementing this change of trivialisation and let
\begin{equation} \label{Lax definition}
L^I \coloneqq h_I B^I h_I^{-1} \,, \qquad - M^I_{i} \coloneqq h_I \widetilde{\mathcal A}^I_{i} h_I - \partial_{t^i} h_I h_I^{-1} 
\end{equation}
be the local expressions for $B^I$ and $\widetilde{\mathcal A}^I_{i}$ with respect to this new local trivialisation. The additional minus sign in the definition of $M^I_{i}$ is introduced so that the equations \eqref{global Lax a} take on the standard Lax form in this local trivialisation, namely
\begin{subequations} \label{global Lax def}
\begin{equation}\label{global Lax def a}
\partial_{t^i} L^I = [ M^I_{i}, L^I ]
\end{equation}
on $U_I$ for $I \in \{ 0, 1 \}$. On the overlap $(U_0 \cap U_1) \times \RR^n$ we have the same relations as in \eqref{global Lax b}, namely
\begin{equation} \label{global Lax def b}
L^0 = \gamma L^1 \gamma^{-1} \,, \qquad M^0_{i} = \gamma M^1_{i} \gamma^{-1} + \partial_{t^i} \gamma \gamma^{-1}\,,
\end{equation}
\end{subequations}
where the new transition function 
\begin{equation}
\label{def_gamma}
\gamma \coloneqq h_0 g_{01} h_1^{-1} : (U_0 \cap U_1) \times \RR^n \to G    
\end{equation} 
is now holomorphic on $U_0 \cap U_1$, see \S\ref{sec: M and G details}. We note here that the plus sign in the last term of the second equation in \eqref{global Lax def b} stems from the minus sign introduced in the second definition in \eqref{Lax definition}. In other words, $-M^0_{i}$ and $-M^1_{i}$ are related by an ordinary gauge transformation by $\gamma$. If we introduce also the $\g$-valued $1$-forms $M^I \coloneqq M^I_i \d_{\RR^n} t^i \in \Omega^1(U_I \times \RR^n, \g)$ then the Lax hierarchy \eqref{global Lax def a} can be rewritten more compactly as in \eqref{eq:eom-B_zA_k-def-hitchin U1 geom}, namely
\begin{subequations} \label{geometric Lax}
\begin{equation}\label{geometric Lax eq}
\d_{\RR^n} L^I = [ M^I, L^I ]
\end{equation}
and also the relations \eqref{global Lax def b} become
\begin{equation} \label{geometric Lax overlap}
L^0 = \gamma L^1 \gamma^{-1} \,, \qquad M^0 = \gamma M^1 \gamma^{-1} + \d_{\RR^n} \gamma \gamma^{-1}\,.
\end{equation}
\end{subequations}
In this adapted local trivialisation, 
the equations of motion \eqref{F_B} and \eqref{A_defect_modif} become 
\begin{subequations}
\begin{align}
\partial_{\bar z_I} M^I_i &=- 2 \pi i \nabla P_i\big( L_{z_I}(\mathsf q_i) \big) \delta\big( z_I - z_I(\mathsf q_i) \big)\,,\\
\label{moment map constraint L} \partial_{\bar z_I} L_{z_I} &=  2 \pi i \sum_{\substack{\alpha=1\\ \mathsf p_\alpha \in U_I}}^N \varphi_\alpha \Lambda_\alpha \varphi_\alpha^{-1} \delta\big( z_I - z_I(\mathsf p_\alpha) \big)
\end{align}
\end{subequations}
for each $I \in \mathcal I \cup \{ 0\}$, where $L^I = L_{z_I}(z_I, \bar z_I, t) \d z_I$ is the local expression of the Lax matrix $L^I$ in the coordinate chart $(U_I, z_I)$.
Equivalently, using the identity \eqref{eq:del-Dirac-rel}, we have
\begin{subequations}
\begin{align}
&\partial_{\bar z_I} \bigg( M^I_i - \frac{\nabla P_i \big( L_{z_I}(\mathsf q_i)\big)}{z_I - z_I(\mathsf q_i)} \bigg) = 0,\label{eq:a_k-gauged}\\
&\partial_{\bar z_I} \Bigg( L_{z_I} + \sum_{\substack{\alpha=1\\ \mathsf p_\alpha \in U_I}}^N \frac{\varphi_\alpha \Lambda_\alpha \varphi_\alpha^{-1}}{z_I - z_I(\mathsf p_\alpha)}\Bigg) = 0\,.\label{eq:b_z-gauged}
\end{align}
\end{subequations}
So $M^I_i$ and $L_{z_I}$ are meromorphic in the chosen local trivialisation, with a specific pole structure.

First, equation \eqref{eq:b_z-gauged} tells us that $L^1$ is a $\g$-valued meromorphic $(1,0)$-form on $U_1$ with simple poles at each $\mathsf p_\alpha$ with residue $-\varphi_\alpha \Lambda_\alpha \varphi_\alpha^{-1}$ there. Indeed, note that the residue of a meromorphic $(1,0)$-form is independent of the local coordinate used. Locally around the point $\mathsf p_\alpha$ for any $\alpha = 1, \ldots, N$, if $\mathsf p_\alpha \in U_I$ for some $I \in \mathcal I$ then in the coordinate chart $(U_I, z_I)$ we can write
\begin{subequations} \label{Lax behaviour}
\begin{equation}\label{Lax behaviour a}
L^1 = L^I = \bigg( -\frac{\varphi_\alpha \Lambda_\alpha \varphi_\alpha^{-1}}{z_I - z_I(\mathsf p_\alpha)} + J^I_\alpha \bigg) \d z_I
\end{equation}
where the first equality follows from the fact that the transition functions of the principal $G$-bundle ${\mathcal P}$ between the opens $U_1$ and $U_I \subset U_1$ were, by definition, trivial. In the last expression, $J^I_\alpha$ denotes the holomorphic part of $L_{z_I}$ in the neighbourhood of $\mathsf p_\alpha$. Also, since $\mathsf p_\alpha \not \in U_0$ by assumption, it follows that $L^0 = L_{z_0} \d z_0$ is holomorphic on $U_0$. Therefore, using the relation \eqref{global Lax def b} we have
\begin{equation} \label{Lax behaviour b}
\gamma L^1 \gamma^{-1} = L^0
\end{equation}
\end{subequations}
on $U_0 \cap U_1$, which expresses the fact that the $\g$-valued $(1,0)$-form $\gamma L^1 \gamma^{-1} \in \Omega^{1,0} \big( (U_1 \cap U_0) \times \RR^n, \g \big)$ extends holomorphically to $U_0 \times \RR^n$. We will refer to $L$ as the \emph{Lax matrix} of the Hitchin system.

Next, equation \eqref{eq:a_k-gauged} tells us that $M^1_i$ is a $\g$-valued meromorphic function on $U_1$ with a simple pole at the point $\mathsf q_i$, with the local expression
\begin{equation} \label{MIi explicit}
M^1_i = M^I_i =  \frac{\nabla P_i \big( L_{z_I}(\mathsf q_i)\big)}{z_I - z_I(\mathsf q_i)} + K^I_i
\end{equation}
in the chart $(U_I, z_I)$ which is such that $\mathsf q_i \in U_I$. Here $K^I_i$ denotes the holomorphic part of $M^I_i$ in the neighbourhood of $\mathsf q_i$. Moreover, since we have $\mathsf q_i \not\in U_0$ by assumption, $M^0_i$ is a holomorphic $\g$-valued function on $U_0$ which is related to $M^1_i$ on the overlap $(U_0 \cap U_1) \times \RR^n$ by the second relation in \eqref{global Lax def b}. That is, $\gamma M^1_i \gamma^{-1} + \partial_{t^i} \gamma \gamma^{-1} = M^0_i$ extends holomorphically from $U_0 \cap U_1$ to $U_0$.

To summarise the above discussion, the Hitchin system can indeed be presented as the hierarchy of Lax equations \eqref{global Lax def a}, as a direct consequence expressing our variational equations \eqref{eqs_gauged_action} in a local trivialisation with $A''^I = 0$, in terms of:
\begin{enumerate}
    \item the Lax matrix $L$ which is a meromorphic section of $\pi_C^\ast \bigwedge^{1,0} C \otimes \g_{\pi_C^\ast {\mathcal P}}$ fixed in terms of the following degrees of freedom:
\begin{itemize}
  \item[$(a)$] Maps $\varphi_\alpha : \RR^n \to G / G_{\Lambda_\alpha}$ into the coadjoint orbits $\mathcal O_\alpha \cong G / G_{\Lambda_\alpha}$ for each $\alpha =1,\ldots, N$,
  \item[$(b)$] The transition function $\gamma : (U_0 \cap U_1) \times \RR^n \to G$ holomorphic in $U_0 \cap U_1$ and encoding the holomorphic structure on the principal $G$-bundle ${\mathcal P}$. In particular $L$ satisfies \eqref{Lax behaviour b}.
\end{itemize}

\item The $\g$-valued meromorphic functions $M^0_i$ and $M^1_i$ for each $i =1, \ldots, n$ with pole structure dictated by \eqref{MIi explicit} and satisfying the second equation in \eqref{geometric Lax overlap}.
\end{enumerate}

\subsection{Unifying Lagrangian 1-form} \label{sec:unifyingmultiform}

The description of the Hitchin system as a hierarchy of Lax equations \eqref{global Lax def a} was obtained in \S\ref{sec: Lax matrices} from the set of equations of Theorem \ref{thm: 3d BF with type A defects 2} by moving to a local trivialisation in which $A''^I = 0$ and solving the moment map condition \eqref{A_defect_modif}. The main point is that these equations are \textit{variational}, deriving from the  multiform 3d BF action \eqref{S Gamma 2}, so it is natural to ask whether we can also derive \eqref{global Lax def a} directly as Euler-Lagrange equations of an appropriate action.

In Theorem \ref{thm: Lag 1-form for Hitchin} below we construct the \textit{unifying action} and associated Lagrangian $1$-form by writing the $3$d mixed BF action \eqref{S Gamma 2} in a local trivialisation with $A''^I = 0$ and explicitly solving the moment map condition \eqref{A_defect_modif}. In Theorem \ref{thm: eom for 1d action} we then prove that this unifying action does indeed reproduce the Lax equations \eqref{global Lax def a} variationally.

\begin{theorem} \label{thm: Lag 1-form for Hitchin}
The $3$d mixed BF action \eqref{S Gamma 2} written in a local trivialisation where $A''^0 = A''^1 = 0$ and with the moment map condition \eqref{A_defect_modif} explicitly solved, leads to the unifying action
\begin{subequations} \label{eq:unifyingaction}
\begin{equation} \label{eq:unifyingaction a}
S_{{\rm H}, \Gamma}\big[ L, \gamma, (\varphi_\alpha), t \big] = \int_\Gamma \Lag_{\rm H} \,,
\end{equation}
for any parametrised curve $\Gamma : (0,1) \to \RR^n$, where $\Lag_{\rm H} \in \Omega^1(\RR^n)$ is the Hitchin Lagrangian $1$-form defined using a small counter-clockwise oriented loop $c_{\mathsf p}$ in $U_0 \cap U_1$ around the point $\mathsf p \in C$ by
\begin{equation} \label{eq:unifyingaction b}
\Lag_{\rm H} \coloneqq  \frac{1}{2 \pi i} \int_{c_{\mathsf p}} \big\langle L^0, \d_{\RR^n} \gamma \gamma^{-1} \big\rangle + \sum_{\alpha=1}^N \big\langle \Lambda_\alpha, \varphi_\alpha^{-1} \d_{\RR^n} \varphi_\alpha \big\rangle - H_i \d_{\RR^n} t^i,\qquad H_i=P_i\big( L^1_{z_I}(\mathsf{q}_i) \big) \,.
\end{equation}
\end{subequations}

In particular, under a change of `residual' local trivialisation $h = (h_0, h_1) \in \check C^0(C, G)$ with $h_I$ holomorphic on $U_I$ for $I \in \{0,1\}$ so that the condition $A''^0 = A''^1 = 0$ is preserved, the action \eqref{eq:unifyingaction a} is invariant in the sense that
\begin{equation} \label{1d action invariance}
S_{{\rm H}, \Gamma}\big[ hLh^{-1}, h_0 \gamma h_1^{-1}, (h_1 \varphi_\alpha) \big] = S_{{\rm H}, \Gamma}\big[ L^0, \gamma, (\varphi_\alpha) \big] \,,
\end{equation}
where $hLh^{-1}$ stands for the Lax matrix given by $h_0 L^0 h_0^{-1}$ on $U_0$ and $h_1 L^1 h_1^{-1}$ on $U_1$.
\end{theorem}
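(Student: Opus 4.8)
The plan is to start from the $3$d mixed BF action \eqref{S Gamma 2}, pass to a local trivialisation in which $A''^0 = A''^1 = 0$ — so that $B$ becomes the meromorphic Lax matrix $L$, the gauge field $\mathcal A$ becomes $-M$ as in \eqref{Lax definition}, the transition function $g_{01}$ becomes the holomorphic, possibly $u$-dependent, $\gamma$, and each $\varphi_\alpha$ its representative in the new frame — and then to evaluate the bulk kinetic term $\frac{1}{2\pi i}\int_{C\times\Gamma}\langle B, F_A\rangle$ after solving the moment map condition \eqref{A_defect_modif}. Since $A''=0$, only the $\d\bar z_I\wedge\d u^i$ component of $F_A$ in \eqref{FA def} survives the pullback to $\Gamma$ (the $\d u^i\wedge\d u^j$ part dies, as in the proof of Theorem \ref{th_one_form_Hitchin}), and it reduces to $\partial_{\bar z_I}\mathcal A^I_i$; writing $\mathcal A=-M$ as a $1$-form on $\RR^n$, the kinetic term becomes $-\frac{1}{2\pi i}\int_{C\times\Gamma}\langle L, \bar\partial M\rangle$. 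The Hamiltonian term carries over verbatim since $P_i$ is invariant, giving $H_i = P_i\big(L^1_{z_I}(\mathsf q_i)\big)$.

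The heart of the proof is the localisation of $-\frac{1}{2\pi i}\int_C\langle L, \bar\partial M\rangle$ onto the loop $c_{\mathsf p}$ and the marked points. First I would note that after solving \eqref{A_defect_modif}, equivalently \eqref{moment map constraint L}, $L$ is holomorphic on $U_0$ and meromorphic on $U_1$ with $\mathrm{Res}_{\mathsf p_\alpha}L = -\varphi_\alpha\Lambda_\alpha\varphi_\alpha^{-1}$, whereas $M$ stays an arbitrary smooth field since its own equation of motion is not imposed. Splitting $C = D\cup(C\setminus D)$ with $D$ a small disc around $\mathsf p$ bounded by $c_{\mathsf p}$, and excising small discs around the $\mathsf p_\alpha$, the integrand equals $\partial_{\bar z}\langle L, M\rangle$ away from the poles because $L$ is holomorphic there; Stokes on each piece then produces a $c_{\mathsf p}$-contour term from both $D$ and $C\setminus D$, together with residue contributions $2\pi i\,\langle \varphi_\alpha\Lambda_\alpha\varphi_\alpha^{-1}, M(\mathsf p_\alpha)\rangle$ at the $\mathsf p_\alpha$. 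The residue terms cancel precisely the $\mathcal A(\mathsf p_\alpha)$-dependence of the coadjoint-orbit coupling in \eqref{S Gamma 2}, leaving $\sum_\alpha\langle\Lambda_\alpha, \varphi_\alpha^{-1}\d_{\RR^n}\varphi_\alpha\rangle$. For the two $c_{\mathsf p}$ pieces I would use the overlap relation $M^0 = \gamma M^1\gamma^{-1} + \d_{\RR^n}\gamma\gamma^{-1}$ of \eqref{geometric Lax overlap}: the $\langle L^0, M^0\rangle$ piece from $D$ and the $\langle L^1, M^1\rangle$ piece from $C\setminus D$ differ exactly by $\langle L^0, \d_{\RR^n}\gamma\gamma^{-1}\rangle$, so all $M$-dependence drops out and what remains is precisely $\frac{1}{2\pi i}\int_{c_{\mathsf p}}\langle L^0, \d_{\RR^n}\gamma\gamma^{-1}\rangle$. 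Assembling the three contributions yields $\Lag_{\rm H}$ as in \eqref{eq:unifyingaction b}.

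For the invariance \eqref{1d action invariance} I would substitute $L\mapsto hLh^{-1}$, $\gamma\mapsto h_0\gamma h_1^{-1}$, $\varphi_\alpha\mapsto h_1(\mathsf p_\alpha)\varphi_\alpha$ with $h_I$ holomorphic on $U_I$ (so $A''=0$ is preserved), and track the extra terms in each summand of $\Lag_{\rm H}$. Expanding $\d_{\RR^n}(h_0\gamma h_1^{-1})(h_0\gamma h_1^{-1})^{-1}$ produces, besides the original $\langle L^0, \d_{\RR^n}\gamma\gamma^{-1}\rangle$, a term $\langle L^0, h_0^{-1}\d_{\RR^n}h_0\rangle$ whose contour integral vanishes because the integrand is holomorphic inside $D$, and a term $-\langle L^1, h_1^{-1}\d_{\RR^n}h_1\rangle$; the latter, meromorphic on $U_1$ with holomorphic $h_1^{-1}\d_{\RR^n}h_1$, integrates over $c_{\mathsf p}$ to $-2\pi i$ times the sum of its residues at the $\mathsf p_\alpha$, namely $\langle\mathrm{Res}_{\mathsf p_\alpha}L, (h_1^{-1}\d_{\RR^n}h_1)(\mathsf p_\alpha)\rangle$. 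These cancel exactly the extra term $\sum_\alpha\langle\Lambda_\alpha, \varphi_\alpha^{-1}(h_1^{-1}\d_{\RR^n}h_1)(\mathsf p_\alpha)\varphi_\alpha\rangle$ generated in the coadjoint-orbit term by $\varphi_\alpha\mapsto h_1(\mathsf p_\alpha)\varphi_\alpha$, while $H_i$ is invariant since $P_i$ is. Hence the action is unchanged.

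I expect the main obstacle to be the careful two-patch Stokes/excision bookkeeping of the second paragraph — keeping orientations straight and recognising that it is the off-shell (non-meromorphic) nature of $M$ that forces one to impose only the moment-map constraint on $L$ and then to verify that every residual $M$-dependence cancels between the bulk term and the type A coadjoint-orbit coupling. The fact that $L^0$ is holomorphic throughout $D$ (no marked point lies near $\mathsf p$) is what collapses the $h_0$-contribution and half of the $c_{\mathsf p}$ data, and is worth isolating cleanly before assembling the three surviving terms.
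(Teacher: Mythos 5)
Your proposal is correct and follows essentially the same route as the paper's proof: pass to the trivialisation with $A''=0$ (so $\gamma$ becomes a $u$-dependent holomorphic transition function), split $C$ along $c_{\mathsf p}$ into the disc and its complement, integrate by parts on each piece so that the gauge-field ($M$, resp.\ $\mathcal A$) dependence cancels between the two boundary contributions via the overlap relation --- leaving $\int_{c_{\mathsf p}}\langle L^0,\d_{\RR^n}\gamma\gamma^{-1}\rangle$ --- while the marked-point contributions cancel the $\mathcal A(\mathsf p_\alpha)$ coupling in the type A term; the invariance argument (Cauchy for $h_0$, residue theorem for $h_1$, cancellation against the transformed coadjoint kinetic terms) is identical to the paper's. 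The only cosmetic difference is that you excise discs around the $\mathsf p_\alpha$ and invoke the residue theorem where the paper integrates the distributional identity $\bar\partial L^1=-2\pi i\sum_\alpha\varphi_\alpha\Lambda_\alpha\varphi_\alpha^{-1}\delta_{\mathsf p_\alpha}$ directly.
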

\begin{proof}
We choose local trivialisations over $U_0\times \RR^n$ and $U_1\times \RR^n$ in which the components $A''_{\bar z_0}$, $A''_{\bar z_1}$ of the partial connection $A$ vanish.  Then the local expression for the curvature \eqref{FA def} becomes
\begin{equation}
F^I_A = \partial_{\bar{z}_I}\A^I_j\d\bar{z}_I\wedge \d u^j+\frac12 (\partial_{u^i}\A^I_j-\partial_{u^j}\A^I_i+[\A^I_i,\A^I_j])\d u^i\wedge \d u^j.
\end{equation}
Unlike earlier trivialisations used in \S\ref{sec: Lag for Hitchin mod G} for which we had \eqref{dgdu}, these trivialisations will have a transition function $\gamma = g_{01} : U_0\cap U_1\times\mathbb{R}^n\to G$ that depends on $u\in\mathbb{R}^n$ (because our choice of trivialisation depends on $A''$, and $A''$ depends on $u$).  On the overlaps of coordinate charts we have
\begin{equation}\label{eq:4.1 A transition}
    \A_j^0=\gamma \A_j^1\gamma^{-1}-\partial_{u^j}\gamma\gamma^{-1},\qquad \bar{\partial}\gamma=0.
\end{equation}

Recall from \S\ref{sec: Lax matrices} that to emphasise that $B$ solves the moment map condition \eqref{A_defect_modif} we denote it by $L$ and refer to it as a Lax matrix. Now the moment map condition \eqref{eq:b_z-gauged} can be written concisely as
\begin{equation} \label{moment map condition L}
\bar\partial L^0 =  0 \,, \qquad \bar\partial L^1 = -2\pi i \sum_{\alpha=1}^N \varphi_\alpha \Lambda_\alpha \varphi_\alpha^{-1} \delta_{\mathsf p_\alpha} \,.
\end{equation}

In order to evaluate the action \eqref{S Gamma 2} in these trivialisations, we choose a circle $c_{\mathsf p}$ as described in the statement of the theorem, and let $R_0\subset U_0$, $R_1\subset U_1$ be the closure of the interior and exterior of this circle.  Then $R_0\cap R_1=c_{\mathsf p}$ and $R_0\cup R_1=C$.  The first term in the action \eqref{S Gamma 2} is
\begin{equation}\label{eq:4.1 integral}
\frac{1}{2\pi i}\int_{C\times \Gamma}\langle B,F_A\rangle =\sum_{I \in \{0,1\}} \frac{1}{2\pi i}\int_{R_I\times \Gamma}\langle L^I,\bar{\partial}\A^I_j\rangle\wedge \d u^j.
\end{equation}
We evaluate these terms separately.  For $I=1$ we find
\begin{align*}
\frac{1}{2 \pi i} \int_{R_1} \langle L^1, \bar{\partial}\A_j^1\rangle  &=  \frac{1}{2 \pi i}\int_{R_1}\left(-\bar{\partial}\langle L^1,\A_j^1\rangle + \langle\bar{\partial}L^1,\A_j^1\rangle\right)\\
&=- \frac{1}{2 \pi i} \int_{\partial R_1} \langle L^1, \A_j^1 \rangle - \sum_{\alpha=1}^N \big\langle \varphi_\alpha \Lambda_\alpha \varphi_\alpha^{-1}, \A_j^1(\mathsf p_\alpha) \big\rangle\\
&= \frac{1}{2 \pi i} \int_{c_\mathsf{p}} \big\langle L^0, \gamma \A_j^1 \gamma^{-1} \big\rangle - \sum_{\alpha=1}^N \big\langle \varphi_\alpha \Lambda_\alpha \varphi_\alpha^{-1}, \A_j^1(\mathsf p_\alpha) \big\rangle\,,
\end{align*}
where in the second equality we used the second relation in \eqref{moment map condition L} and in the last step we used the first relation in \eqref{geometric Lax overlap} and the fact that $\partial R_1$ is equal to $c_{\mathsf{p}}$ with reversed orientation. For $I=0$,
\begin{align*}
\frac{1}{2 \pi i} \int_{R_0} \langle L^0, \bar\partial \A_j^0\rangle &= - \frac{1}{2 \pi i} \int_{R_0}\bar{\partial} \langle L^0, \A_j^0 \rangle = - \frac{1}{2 \pi i} \int_{\partial R_0} \langle L^0, \A_j^0 \rangle\\
&= -\frac{1}{2 \pi i} \int_{c_{\mathsf{p}}} \big\langle L^0, \gamma \A_j^1 \gamma^{-1} \big\rangle + \frac{1}{2 \pi i} \int_{c_{\mathsf{p}}} \big\langle L^0, \partial_{u^j} \gamma \gamma^{-1} \big\rangle \,,
\end{align*}
where the first equality uses the first relation in \eqref{moment map condition L} and in the third step we used \eqref{eq:4.1 A transition} and the fact that $\partial R_0$ coincides with $c_{\mathsf{p}}$ and has the same orientation. It now follows from the above computation of both integrals on the right-hand side of \eqref{eq:4.1 integral} that
\begin{equation}
\frac{1}{2 \pi i}\int_{C\times \Gamma}\langle B,F_A\rangle =\frac{1}{2\pi i}\int_{c_{\mathsf p}\times\Gamma}\langle L^0,\d_{\RR^n}\gamma \gamma^{-1}\rangle- \sum_{\alpha=1}^N \int_\Gamma\big\langle \varphi_\alpha \Lambda_\alpha \varphi_\alpha^{-1}, \A_j^1(\mathsf p_\alpha) \big\rangle \d u^j\,.
\end{equation}
Substituting this into \eqref{S Gamma 2} and setting $B=L$ in the remaining terms gives the desired result.

To prove the last `in particular' statement, consider a change of `residual' local trivialisation $h = (h_0, h_1) \in \check C^0(C, G)$, with $h_I$ holomorphic on $U_I$ so that the condition $A''_{\bar z_I} = 0$ is preserved. We have $\gamma \mapsto \tilde\gamma = h_0 \gamma h_1^{-1}$, $\tilde L^0 = h_0 L^0 h_0^{-1}$ and $\tilde L^1 = h_1 L^1 h_1^{-1}$ so that $\tilde L^0 = \tilde \gamma \tilde L^1 \tilde \gamma^{-1}$. Then the integral in the first term on the right-hand side of \eqref{eq:unifyingaction b} transforms to
\begin{equation} \label{pre Lag 1-form change loc triv 1}
\int_{c_{\mathsf p}} \big\langle \tilde L^0, \d_{\RR^n} \tilde \gamma \tilde \gamma^{-1} \big\rangle = \int_{c_{\mathsf p}} \big\langle L^0, \d_{\RR^n} \gamma \gamma^{-1} \big\rangle + \int_{c_{\mathsf p}} \big\langle L^0, h_0^{-1} \d_{\RR^n} h_0 \big\rangle - \int_{c_{\mathsf p}} \big\langle L^1, h_1^{-1} \d_{\RR^n} h_1 \big\rangle \,.
\end{equation}
Note that the second term on the right-hand side vanishes by Cauchy's theorem since $h_0$ and $L_0$ are both holomorphic on $U_0$ and $c_{\mathsf p}$ is a small contour around $\mathsf p \in U_0$. To evaluate the last term on the right-hand side of \eqref{pre Lag 1-form change loc triv 1}, recall that the section $L^1$ has poles at the marked points $\mathsf p_\alpha$ given by \eqref{Lax behaviour a}. Since $h_1 : U_1 \to G$ is holomorphic on $U_1 = C \setminus \{ \mathsf p\}$ we then deduce using the residue theorem that
\begin{equation*}
\frac{1}{2 \pi i} \int_{c_{\mathsf p}} \big\langle L^1, h_1^{-1} \d_{\RR^n} h_1 \big\rangle = - \sum_{\alpha = 1}^N \big\langle \! -\varphi_\alpha \Lambda_\alpha \varphi_\alpha^{-1}, h_1(\mathsf p_\alpha)^{-1} \d_{\RR^n} h_1(\mathsf p_\alpha) \big\rangle
\end{equation*}
where the first sign comes from noting that $c_{\mathsf p}$ can be contracted down to a sum of small \emph{clockwise} circles around each point $\mathsf p_\alpha$. Substituting this into the right-hand side of \eqref{pre Lag 1-form change loc triv 1} we deduce the transformation property
\begin{equation} \label{Lag 1-form change loc triv 1}
 \frac{1}{2 \pi i} \int_{c_{\mathsf p}} \big\langle \tilde L^0, \d_{\RR^n} \tilde \gamma \tilde \gamma^{-1} \big\rangle = \frac{1}{2 \pi i} \int_{c_{\mathsf p}} \big\langle L^0, \d_{\RR^n} \gamma \gamma^{-1} \big\rangle - \sum_{\alpha = 1}^N \Big\langle \Lambda_\alpha, \varphi_\alpha^{-1} \big( h_1(\mathsf p_\alpha)^{-1} \d_{\RR^n} h_1(\mathsf p_\alpha) \big) \varphi_\alpha \Big\rangle \,.
\end{equation}
Consider now the second term in the Lagrangian $1$-form \eqref{eq:unifyingaction b}. Since $\varphi_\alpha$ transforms under the change of local trivialisation as $\varphi_\alpha \mapsto \tilde \varphi_\alpha = h_1(\mathsf p_\alpha) \varphi_\alpha$, see the discussion at the start of \S\ref{sec: adding punctures}, this term in the Lagrangian transforms as
\begin{equation} \label{Lag 1-form change loc triv 2}
\sum_{\alpha=1}^N \big\langle \Lambda_\alpha, \tilde \varphi_\alpha^{-1} \d_{\RR^n} \tilde \varphi_\alpha \big\rangle
= \sum_{\alpha=1}^N \big\langle \Lambda_\alpha, \varphi_\alpha^{-1} \d_{\RR^n} \varphi_\alpha \big\rangle + \sum_{\alpha = 1}^N \Big\langle \Lambda_\alpha, \varphi_\alpha^{-1} \big( h_1(\mathsf p_\alpha)^{-1} \d_{\RR^n} h_1(\mathsf p_\alpha) \big) \varphi_\alpha \Big\rangle \,.
\end{equation}
We see now from \eqref{Lag 1-form change loc triv 1} and \eqref{Lag 1-form change loc triv 2} that the Lagrangian $1$-form \eqref{eq:unifyingaction b} is invariant under residual changes of local trivialisations $h = (h_0, h_1) \in \check C^0(C, G)$ with $h_I$ holomorphic on $U_I$.
\end{proof}

\begin{remark} \label{rem: integration by parts}
The term $\int_{c_{\mathsf p}} \big\langle L^0, \d_{\RR^n} \gamma \gamma^{-1} \big\rangle$ in the kinetic part of the unifying Lagrangian \eqref{eq:unifyingaction b} arises from a subtle mechanism. 
In the proof of Theorem \ref{thm: Lag 1-form for Hitchin}, we work on the pullback bundle $\pi_C^\ast \mathcal P$ but in a local trivialisation in which $A''^I = 0$ for each $I \in \mathcal I$. This implies, in particular, that the transition function $g_{IJ} : (U_I \cap U_J) \times \RR^n \to G$ is holomorphic in $U_I \cap U_J$ and dependent on the coordinates $u^j$ of $\RR^n$. Thus, specialising the $\mathcal A$ component of the identity \eqref{partial connection compat} to this setting, we find that
\begin{equation*}
\bar\partial \mathcal A^I = g_{IJ} \, \bar\partial \mathcal A^J \, g_{IJ}^{-1} \,.
\end{equation*}
In particular, it follows that $\langle B^I, \bar\partial \mathcal A^I\rangle = \langle B^J, \bar\partial \mathcal A^J\rangle$ on overlaps $U_I \cap U_J \neq \emptyset$ so that we obtain a $\g$-valued $3$-form $\langle B, \bar\partial \mathcal A\rangle \in  \Omega^3(C \times \RR^n)$ which, in particular, is a global $(1,1)$-form along $C$ so that it can be integrated over $C$ (strictly speaking this is a fibre integration along the fibres of the projection $\pi_{\RR^n} : C \times \RR^n \to \RR^n$). Now, by contrast, in this local trivialisation where the transition functions of $\pi_C^\ast {\mathcal P}$ explicitly depend on $\RR^n$, we find using the transformation property \eqref{partial connection compat} that
\begin{equation} \label{del B A bad transformation}
\langle \bar\partial B^I, \mathcal A^I\rangle = \langle \bar\partial B^J, \mathcal A^J\rangle - \big\langle \bar\partial B^J, g_{IJ}^{-1} \d_{\RR^n} g_{IJ} \big\rangle \,.
\end{equation}
In other words, the local expressions $\langle \bar\partial B^I, \mathcal A^I\rangle$ do not define a global $\g$-valued $3$-form on $C \times \RR^n$. Likewise, the local expressions $\langle B^I, \mathcal A^I\rangle$ transform in a similar way to \eqref{del B A bad transformation} and therefore do not define a global $\g$-valued $2$-form on $C \times \RR^n$ either. This means that in the present context, ``naive'' integration by parts in an expression like $\int_C \langle B, \bar\partial \mathcal A\rangle$ is not possible, cf. Remark \ref{rem: mu well defined}. This is the source of the term $\int_{c_{\mathsf p}} \big\langle L^0, \d_{\RR^n} \gamma \gamma^{-1} \big\rangle$.
\end{remark}

We now derive the univariational equations for the unifying $1$d action in \eqref{eq:unifyingaction}.  To do so, it is helpful to recall the constraints on the various degrees of freedom.  We recall that $L^0$ and $L^1$ are holomorphic $\mathfrak{g}$-valued (1,0)-forms on $U_0$ and $U_1\setminus\{\mathsf{p}_\alpha\}$, $\gamma$ is a holomorphic $G$-valued function on $U_0\cap U_1$, and $\varphi_\alpha$ are elements of $G$.  All of these depend on $t\in\RR^n$, and are constrained by
\begin{equation}\label{constraint}
L^1=\gamma^{-1}L^0\gamma,\qquad \Res_{\mathsf p_\alpha}L^1_{z_I} = -\varphi_\alpha\Lambda_\alpha\varphi_\alpha^{-1}.
\end{equation}
We continue to assume that the Higgs bundle determined by $\gamma$ and $L$ satisfies the stability condition \eqref{inf_freeness}.  This means that there are no non-zero holomorphic functions $X^I:U_I\to\g$ for $I \in \{0,1\}$ with the property that $[L^I,X^I]=0$ on $U_I$ and $X^1=\gamma^{-1}X^0\gamma$ on $U_0\cap U_1$.

\begin{theorem}\label{thm: eom for 1d action}
The equations of motion of the unifying $1$d action $S_{{\rm H}, \Gamma}\big[ L, \gamma, (\varphi_\alpha), t \big]$ in \eqref{eq:unifyingaction} with respect to the variables $L$, $\gamma$ and $\varphi_\alpha$ take the form
\begin{equation}\label{Lax equations}
\partial_{t^i} L^0=[M_i^0,L^0],\qquad \partial_{t^i} L^1=[M_i^1,L^1]\,.
\end{equation}
Here $M_i^0,M_i^1$ are holomorphic $\mathfrak{g}$-valued functions on $U_0$ and $U_1\setminus\{\mathsf{q}_i\}$ that are uniquely determined by the constraints
\begin{equation}\label{M constraints}
\gamma^{-1}\partial_{t^i} \gamma = \gamma^{-1}M_i^0\gamma-M_i^1,\qquad \Res_{\mathsf{q}_i} M_i^1 = \nabla P_i\big( L_{z_I}^1(\mathsf{q}_i) \big)\,.
\end{equation}
Moreover, they satisfy the zero-curvature equations
\begin{equation}\label{eq:zce Mi}
\partial_{t^i} M^0_j - \partial_{t^j} M^0_i - [ M^0_i, M^0_j ] =0, \qquad
\partial_{t^i} M^1_j - \partial_{t^j} M^1_i - [ M^1_i, M^1_j ] =0\,.
\end{equation}
The equation of motion with respect to the set of times $t^i$ for $i=1,\ldots, n$ describe the conservation equations $\partial_{t^i} H_j = 0$ for all $i,j =1,\ldots, n$.
\end{theorem}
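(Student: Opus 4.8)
The plan is to apply the gauged univariational principle directly to $\Lag_{\rm H}$ in \eqref{eq:unifyingaction b}, treating $L^0$ (holomorphic on $U_0$), $\gamma$ (holomorphic on $U_0\cap U_1$), the $\varphi_\alpha$ and the times $t^i$ as the dynamical fields, with $L^1=\gamma^{-1}L^0\gamma$ and the residue constraints \eqref{constraint} imposed throughout. As in Theorem \ref{thm: gauged univar}, I would first use the injectivity hypothesis to bring the immersion to the canonical form $(f_\Sigma,\id_{\RR^n})$, so that the univariational equations reduce to the vanishing of the coefficient of each $\d t^i$ in the variation of $\Lag_{\rm H}$, for independent variations of $L$, $\gamma$, $\varphi_\alpha$ and $t^i$.

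\emph{Variation in $L$.} Holding $\gamma,\varphi_\alpha$ fixed, an admissible $\delta L$ is holomorphic on $U_0$ and on $U_1$ (the residues being frozen by $\varphi_\alpha$) and glued by $\delta L^0=\gamma\,\delta L^1\gamma^{-1}$. Using $\langle\delta L^0,\partial_{t^i}\gamma\gamma^{-1}\rangle=\langle\delta L^1,\gamma^{-1}\partial_{t^i}\gamma\rangle$ on the overlap, and rewriting $\delta_L H_i=\langle\delta L^1(\mathsf{q}_i),\nabla P_i\rangle$ as the residue at $\mathsf{q}_i$ of $\langle\delta L^1,M_i^1\rangle$ for a function $M_i^1$ with prescribed principal part $\nabla P_i/(z-z(\mathsf{q}_i))$, the stationarity condition becomes $\oint_{c_{\mathsf{p}}}\langle\delta L^1,\gamma^{-1}\partial_{t^i}\gamma+M_i^1\rangle=0$ for all admissible $\delta L^1$. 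By the residue theorem on $U_0$ this is equivalent to demanding that $M_i^0\coloneqq\partial_{t^i}\gamma\gamma^{-1}+\gamma M_i^1\gamma^{-1}$ extend holomorphically across $\mathsf{p}$, i.e. be holomorphic on $U_0$. This is precisely the pair of constraints \eqref{M constraints}, so the $L$-variation produces $M_i$ via a Riemann--Hilbert splitting.

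\emph{Variation in $\gamma$.} With $L^0$ fixed and $\epsilon=\delta\gamma\gamma^{-1}$, the Maurer--Cartan identity $\delta(\d_{\RR^n}\gamma\gamma^{-1})=\d_{\RR^n}\epsilon-[\d_{\RR^n}\gamma\gamma^{-1},\epsilon]$ together with integration by parts along $\Gamma$ turns the kinetic term into $-\oint_{c_{\mathsf{p}}}\langle\partial_{t^i}L^0+[L^0,\partial_{t^i}\gamma\gamma^{-1}],\epsilon\rangle$; the contribution of $H_i$ is governed by the variation of $L^1$ at $\mathsf{q}_i$ induced by $\delta\gamma$, which I would again convert to a residue and combine with the $c_{\mathsf{p}}$ integral, using the coadjoint-invariance identity $[\nabla P_i(L^1(\mathsf{q}_i)),L^1(\mathsf{q}_i)]=0$ recorded after \eqref{gradient def}. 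Feeding in $\partial_{t^i}\gamma\gamma^{-1}=M_i^0-\gamma M_i^1\gamma^{-1}$ from the previous step, stationarity for arbitrary $\epsilon$ on $c_{\mathsf{p}}$ collapses to $\partial_{t^i}L^0=[M_i^0,L^0]$; as both sides are holomorphic on $U_0$ this extends to all of $U_0$, and conjugating by $\gamma$ via the second relation in \eqref{M constraints} yields $\partial_{t^i}L^1=[M_i^1,L^1]$, i.e. the Lax equations \eqref{Lax equations}. Varying $\varphi_\alpha$ gives the evolution of the residues $\varphi_\alpha\Lambda_\alpha\varphi_\alpha^{-1}$, which I expect to reproduce the Lax equation near $\mathsf{p}_\alpha$ and hence to be consistent with the $L^1$-equation. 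Uniqueness of $M_i$ then follows from the stability condition \eqref{inf_freeness}: two admissible choices differ by a global holomorphic section $D$ of $\mathfrak g_{\mathcal P}$ with $[D,L]=0$ forced by \eqref{Lax equations}, whence $D=0$.

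The remaining two claims are comparatively soft. The zero-curvature equations \eqref{eq:zce Mi} follow exactly as \eqref{ZC_eqs} in Theorem \ref{thm: 3d BF with type A defects 2}: commuting $\partial_{t^i}$ and $\partial_{t^j}$ on \eqref{Lax equations} gives $[\partial_{t^i}M_j-\partial_{t^j}M_i-[M_i,M_j],L]=0$, the bracketed quantity is holomorphic with cancelling poles at $\mathsf{q}_i,\mathsf{q}_j$ and the correct gluing, so stability forces it to vanish. Finally, the $t^i$-variation yields the closure relation: differentiating $H_j=P_j(L^1(\mathsf{q}_j))$ along $t^i$ and using \eqref{Lax equations} together with $[\nabla P_j(L^1(\mathsf{q}_j)),L^1(\mathsf{q}_j)]=0$ gives $\partial_{t^i}H_j=\langle[M_i^1,L^1](\mathsf{q}_j),\nabla P_j\rangle=\langle M_i^1(\mathsf{q}_j),[L^1,\nabla P_j](\mathsf{q}_j)\rangle=0$, exactly as in the remark following Theorem \ref{th_one_form_Hitchin}. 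The main obstacle throughout is the variational bookkeeping of the single contour integral $\oint_{c_{\mathsf{p}}}$ against the pointwise data at $\mathsf{q}_i$ and $\mathsf{p}_\alpha$: a variation of the transition function $\gamma$ is supported on the overlap yet changes $L^1$ globally by holomorphic continuation, so one must carefully identify the space of admissible variations and repeatedly trade contour integrals for residues (via the residue theorem on $U_0$ and $U_1$) before the Riemann--Hilbert splitting defining $M_i$ and the stability condition can be invoked.
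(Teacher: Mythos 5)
Your overall architecture mirrors the paper's proof: the $L$-variation is meant to produce the splitting defining $M_i$, the $\gamma$-variation the Lax equations, and stability the uniqueness and zero-curvature statements; the closure relation and the $\varphi_\alpha$-consistency are handled the same way as in the paper. However, both of the crucial steps are justified by arguments that do not work, and these are precisely the points where the paper invests its technical machinery. First, your claim that stationarity under $\delta L$ is ``by the residue theorem \ldots equivalent to demanding that $M_i^0\coloneqq\partial_{t^i}\gamma\gamma^{-1}+\gamma M_i^1\gamma^{-1}$ extend holomorphically across $\mathsf p$'' holds only in one direction: if such an extension exists, the contour integral vanishes. The direction you actually need --- that vanishing of $\oint_{c_{\mathsf p}}\langle\delta L^1,\gamma^{-1}\partial_{t^i}\gamma+M_i^1\rangle$ for all \emph{admissible} $\delta L$ forces the splitting to exist --- is not a consequence of the residue theorem. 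The quantity $\gamma^{-1}\partial_{t^i}\gamma+M_i^1$ is a \v{C}ech cocycle representing a class in $H^1(C,E_t)$, which for genus $\geq 1$ and nontrivial $E_t$ need not vanish; the admissible variations form exactly $H^0(C,\Lambda^{1,0}C\otimes E_t)$, and the conclusion that the class is trivial requires the \emph{nondegeneracy of the Serre duality pairing} between these two cohomology groups. This is the content of the paper's argument around \eqref{Serre}--\eqref{M}, supplemented by the long-exact-sequence argument \eqref{short exact}--\eqref{exact sequence}, which is needed because $\delta L^1_{z_I}(\mathsf q_i)$ may be confined to a proper subspace of the fibre so the residue condition \eqref{M residue} cannot be read off componentwise. (Had you justified the vanishing via Serre duality, your formulation --- building the residue condition into $M_i^1$ from the outset --- would actually streamline the paper's two-step argument; but as written the key implication is unsupported.)

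Second, in the $\gamma$-variation you assert that stationarity ``for arbitrary $\epsilon$ on $c_{\mathsf p}$'' collapses to $\partial_{t^i}L^0=[M_i^0,L^0]$. But $\epsilon=\delta\gamma\gamma^{-1}$ is not arbitrary: an admissible variation of $\gamma$ must preserve the property that $\gamma^{-1}L^0\gamma$ extends meromorphically to $U_1$ with the prescribed pole structure, which is the constraint \eqref{variation constraint}; equivalently $Y=(\delta\gamma\gamma^{-1},\gamma^{-1}\delta\gamma)$ must lie in the kernel of the map $[L,\cdot]$ in \eqref{ad L}. Testing only against this kernel yields the weaker conclusion that $\partial_{t^i}L-[M_i,L]$ lies in the image of the adjoint map \eqref{ad L adjoint}, i.e.\ equals $[N_i,L]$ for some $N_i\in H^0(C,E_t)$. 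One must then invoke the residual freedom $M_i\mapsto M_i+N_i$ --- available precisely because the splitting from the first step is unique only up to $H^0(C,E_t)$ --- to absorb this term before the Lax equations hold, and only afterwards does the stability condition \eqref{inf_freeness} give uniqueness. Your closing paragraph shows you sense this difficulty (admissible variations of $\gamma$ are constrained by global extendability), but the proof as written replaces both Serre-duality arguments by claims that would be valid only if the test variations were unconstrained, which they are not; this is a genuine gap rather than a presentational shortcut.
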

\begin{proof} The last statement for the variation with respect to $t^i$ for $i=1,\ldots, n$ is straightforward, so we focus on deriving the equations of motions associated to variations of $L$, $\gamma$ and $\varphi_\alpha$.

It follows from the condition \eqref{constraint} that a variation of $L,\gamma$ is described by a meromorphic $\g$-valued $(1,0)$-form $\delta L^1$ on $U_1$, a holomorphic $\g$-valued $(1,0)$-form $\delta L^0$ on $U_0$, and a holomorphic $\g$-valued function $\gamma^{-1}\delta\gamma$ satisfying
\begin{equation}
\delta L^1 = \gamma^{-1} \delta L^0    \gamma+[L^1,\gamma^{-1}\delta \gamma] 
\end{equation}
on the intersection $U_0 \cap U_1$.
The action should be stationary with respect to all such variations of $L$ and to arbitrary variations of $\varphi_\alpha$. In particular, it should be stationary when $\gamma^{-1}\delta \gamma=0$, so that
\begin{equation}\label{delta L constraint}
\delta L^1 = \gamma^{-1}\delta L^0\gamma\,,
\end{equation}
and when $\delta \varphi_\alpha=0$, in which case the second constraint in \eqref{constraint} gives $\Res_{\mathsf p_\alpha} \delta L^1_{z_I} = 0$. Thus, in that case $\delta L^1$ is holomorphic on $U_1$.
The corresponding variation of $S_{{\rm H}, \Gamma}$ is
\begin{equation}
\delta S_{{\rm H}, \Gamma}= \int_0^1\left(\frac{1}{2\pi i}\oint_{c_{\mathsf p}} \langle \delta L^1,\gamma^{-1}\partial_{u^j}\gamma\rangle - \big\langle \delta L^1_{z_I}(\mathsf{q}_i),\nabla P_i(L^1_{z_I}(\mathsf{q}_i))\big\rangle \frac{\partial t^i}{\partial u^j} \right)\frac{\d u^j}{\d s}\d s \,.
\end{equation}
This must vanish for all curves $\Gamma : (0,1) \to \RR^n$ so that for $i =1,\ldots, n$ we have
\begin{equation}\label{L variation}
0=\frac{1}{2\pi i}\oint_{c_{\mathsf p}} \langle \delta L^1,\gamma^{-1}\partial_{t^i}\gamma\rangle - \big\langle \delta L^1_{z_I}(\mathsf{q}_i),\nabla P_i(L^1_{z_I}(\mathsf{q}_i))\big\rangle \,.
\end{equation}

Consider, to begin with, a variation such that $\delta L^1_{z_I}(\mathsf{q}_i)=0$ for some fixed $i$.  We find that
\begin{equation}\label{Serre}
0 = \frac{1}{2\pi i}\oint_{c_{\mathsf p}} \langle \delta L^1,\gamma^{-1}\partial_{t^i}\gamma\rangle = \frac{1}{2\pi i}\oint_{c_{\mathsf p}} \langle \delta L^0, \partial_{t^i}\gamma \gamma^{-1} \rangle \,,
\end{equation}
where the second equality just follows from the invariance of the bilinear form $\langle \cdot, \cdot \rangle : \g \otimes \g \to \CC$ and the constraint \eqref{delta L constraint}. Let us interpret the equation \eqref{Serre} using sheaf cohomology. For each $t \in \RR^n$ we denote by $E_t \to C$ the holomorphic vector bundle with transition function $\Ad_{\gamma(t)}$ between the two local trivialisations over $U_1$ and $U_0$.
The pair $\delta L = (\delta L^0, \delta L^1)$ of $\g$-valud $(1,0)$-forms on $U_0$ and $U_1$ related by the constraint \eqref{delta L constraint} describes a holomorphic section of $\bigwedge^{1,0} C \otimes E_t$ that vanishes at $\mathsf{q}_i$ and hence it determines an element of the \v{C}ech cohomology group $H^0\big( C,\Lambda^{1,0}C\otimes E_t(-\mathsf{q}_i) \big)$, in which $(-\mathsf{q}_i)$ indicates that the section vanishes at $\mathsf{q}_i$. On the other hand, the pair $(\partial_{t^i}\gamma \gamma^{-1}, \gamma^{-1}\partial_{t^i}\gamma)$ of $\g$-valued functions on $U_0 \cap U_1$ describe a holomorphic section of $E_t$ over $U_0\cap U_1$ defined relative to the local trivialisations over $U_0$ and $U_1$. It thus determines an element of the \v{C}ech cohomology group $H^1\big( C,E_t(\mathsf{q}_i) \big)$. The two integral expressions in \eqref{Serre} then correspond to the Serre duality pairing between $H^0\big( C,\Lambda^{1,0}C\otimes E_t(-\mathsf{q}_i) \big)$ and $H^1\big( C,E_t(\mathsf{q}_i) \big)$ described using the local trivialisation of $E_t$ over $U_1$ and $U_0$, respectively. This pairing is nondegenerate, so if \eqref{Serre} holds for all variation $\delta L$ then $(\partial_{t^i}\gamma \gamma^{-1}, \gamma^{-1}\partial_{t^i}\gamma)$ is zero in cohomology.  This means that
\begin{equation}\label{M}
\gamma^{-1}\partial_{t^i}\gamma = \gamma^{-1}M_i^0\gamma-M_i^1 \,,
\end{equation}
where $M_i^0$ and $M_i^1$ are sections of $E_t(\mathsf{q}_i)$ over $U_0$ and $U_1$, respectively, and the right-hand side of \eqref{M} is the coboundary map in sheaf cohomology, expressed in the local trivialisation over $U_1$. More precisely, $M_i^0$ is a $\g$-valued holomorphic function on $U_0$ and $M_i^1$ is a $\g$-valued holomorphic on $U_1\setminus\{\mathsf{q}_i\}$ with a simple pole at $\mathsf{q}_i$. There is some freedom in the choice of $M_i^I$ solving \eqref{M}: if $N_i^0$ and $N_i^1$ are sections of $E_t$ over $U_0$ and $U_1\setminus\{\mathsf{q}_i\}$, respectively, such that $N_i^1$ has a simple pole at $\mathsf{q}_i$ and $\gamma^{-1}N_i^0\gamma=N_i^1$, then adding $N_i^I$ to $M_i^I$ produces a new solution of \eqref{M}.  Thus $M_i^0,M_i^1$ are unique up to the addition of elements of $H^0\big( C,E_t(\mathsf{q}_i) \big)$.

Now we insert \eqref{M} into \eqref{L variation}, in which the variation $\delta L^1$ is no longer constrained to vanish at $\mathsf{q}_i$.  We find that
\begin{align}
0 &= \frac{1}{2\pi i}\oint_{c_{\mathsf p}} \langle \delta L^1,\gamma^{-1}M_i^0\gamma-M_i^1\rangle - \big\langle \delta L^1_{z_I}(\mathsf{q}_i),\nabla P_i\big( L^1_{z_I}(\mathsf{q}_i) \big) \big\rangle \notag\\
&= \frac{1}{2\pi i}\oint_{c_{\mathsf p}} \langle \delta L^0,M_i^0\rangle -  \frac{1}{2\pi i}\oint_{c_{\mathsf p}} \langle \delta L^1,M_i^1\rangle- \big\langle \delta L^1_{z_I}(\mathsf{q}_i),\nabla P_i\big( L^1_{z_I}(\mathsf{q}_i) \big) \big\rangle \,.
\label{L variation 2}
\end{align}
The contour integral of $\langle \delta L^0,M_i^0\rangle$ on the right-hand side vanishes because its integrand is holomorphic on $U_0$. Deforming the contour integral of $\langle\delta L^1,M_i^1\rangle$ from the anticlockwise contour $c_{\mathsf p}$ to a clockwise small circle around $\mathsf{q}_i$ yields 
\begin{equation*}
\frac{1}{2\pi i}\oint_{c_{\mathsf p}} \langle \delta L^1,M_i^1\rangle=-\Big\langle \delta L^1_{z_I}(\mathsf{q}_i),\Res_{\mathsf{q}_i} M_i^1\Big\rangle \,.
\end{equation*}
We may then rewrite \eqref{L variation 2} as
\begin{equation}\label{M residue constraint}
0=\Big\langle \delta L^1_{z_I}(\mathsf{q}_i),\Res_{\mathsf{q}_i} M_i^1 - \nabla P_i\big( L_{z_I}^1(\mathsf{q}_i) \big)\Big\rangle \,.
\end{equation}
We would like to conclude from this that
\begin{equation}\label{M residue}
\Res_{\mathsf{q}_i} M_i^1 = \nabla P_i\big( L_{z_I}^1(\mathsf{q}_i) \big) \,.
\end{equation}
However, this does not follow immediately.  The variation $\delta L^1$ is constrained by \eqref{delta L constraint}, and this constraint may mean that $\delta L^1_{z_I}(\mathsf{q}_i)$ takes values in a proper subspace of the fibre $E_t|_{\mathsf{q}_i}$ of $E_t$ at $\mathsf{q}_i$. If so, \eqref{M residue constraint} does not constrain all of the components of $\Res_{\mathsf{q}_i}M_i^1$. So to find a solution $M_i^0,M_i^1$ of \eqref{M residue constraint} we use more sophisticated methods.

Consider the exact sequence of sheaves on $C$ given by
\begin{equation} \label{short exact}
0 \longrightarrow E_t \longrightarrow E_t(\mathsf{q}_i) \longrightarrow E_t|_{\mathsf{q}_i} \longrightarrow 0 \,.
\end{equation}
Here, by abuse of notation, $E_t$ denotes the sheaf of holomorphic sections of the vector bundle $E_t$, and $E_t(\mathsf{q}_i)$ denotes the sheaf of sections that are holomorphic on $C\setminus\{\mathsf{q}_i\}$ with a simple pole at $\mathsf{q}_i$. Moreover, $E_t|_{\mathsf{q}_i}$ is the fibre of $E_t$ at $\mathsf{q}_i$, which we regard as a skyscraper sheaf on $C$ supported at the point $\mathsf{q}_i$.  This exact sequence of sheaves induces a long exact sequence in \v{C}ech cohomology:
\begin{equation}\label{exact sequence}
\begin{tikzcd}
0 \arrow[r] & H^0(C,E_t) \arrow[r, "\iota"] & H^0(C,E_t(\mathsf{q}_i)) \arrow[r, "\Res_{\mathsf{q}_i}"] & E_t|_{\mathsf{q}_i} \arrow[r, "h"] & H^1(C,E_t) \arrow[r] & \ldots \,.
\end{tikzcd}
\end{equation}
The map $\iota$ is the inclusion of the space of holomorphic sections in the space of sections with a simple pole. The map $\Res_{\mathsf q_i}$ is given by evaluating the residue of a section at the point $\mathsf{q}_i$. The so-called connecting homomorphism $h$ acts as follows: given $X_i \in E_t|_{\mathsf{q}_i}$, we use the exactness of \eqref{short exact} at $E_t|_{\mathsf q_i}$ to choose a holomorphic section $h(X_i)$ of $E_t$ over $U_1 \setminus \{\mathsf q_i\}$ with a simple pole at $\mathsf{q}_i$ with residue $X_i$ there. The restriction of this holomorphic section to $U_0\cap U_1$ determines a cohomology class in $H^1(C,E_t)$ which is independent of the choice of holomorphic section.

With this notation and letting $X_i \coloneqq \Res_{\mathsf{q}_i}M_i^1-\nabla P_i\big( L^1_{z_I}(\mathsf{q}_i) \big)$, we can rewrite \eqref{M residue constraint} as follows:
\begin{equation} \label{Serre 2}
0 = \big\langle \delta L^1_{z_I}(\mathsf{q}_i),X_i \big\rangle = -\frac{1}{2\pi i}\oint_{c_{\mathsf p}} \big\langle \delta L^1,h(X_i) \big\rangle = -\frac{1}{2\pi i}\oint_{c_{\mathsf p}} \big\langle \delta L^0, \gamma h(X_i) \gamma^{-1} \big\rangle \,,
\end{equation}
where the last step is as in \eqref{Serre}.
These two integral expressions describe the Serre duality pairing between $\delta L\in H^0(C,\Lambda^{1,0}\otimes E^\ast_t)$ and $\big( \gamma h(X_i) \gamma^{-1}, h(X_i) \big) \in H^1(C,E_t)$ in the local trivialisation of $E_t$ over $U_1$ and $U_0$, respectively.
Since \eqref{Serre 2} holds for all $\delta L\in H^0(C,\Lambda^{1,0}\otimes E_t)$, and since the pairing is nondegenerate, we conclude that $h(X_i)=0\in H^1(C,E_t)$.  Since the sequence \eqref{exact sequence} is exact, there exists an $N_i \in H^0\big( C,E_t(\mathsf{q}_i) \big)$ such that $X_i = \Res_{\mathsf q_i} N_i$.  We then have that $\tilde{M}_i \coloneqq M_i - N_i$ solves both \eqref{M} and \eqref{M residue}, as required.

There is still some freedom in the choice of $M_i$ solving \eqref{M}, \eqref{M residue}. The solution of \eqref{M} is unique up to the addition of $\g$-valued functions $N_i^0,N_i^1$ on $U_0,U_1$ satisfying $N_i^1=\gamma^{-1}N_i^0\gamma$.  Since the pole of $M_i^1$ is fixed by \eqref{M residue}, these functions $N_i^I$ are holomorphic, so they determine a holomorphic section of $E_t$.  Therefore $M_i$ is unique up to addition of elements of $H^0(C,E_t)$.

The analysis so far was based on variations such that $\delta\gamma=0$ and $\delta\varphi_\alpha=0$, which produced necessary conditions for the stationarity of the action. We now consider general variations of $L$, $\varphi_\alpha$ and $\gamma$.  We assume that $M_i$ has been chosen solving the necessary conditions \eqref{M} and \eqref{M residue}. From \eqref{constraint}, $\delta L^1$ is now meromorphic and the variations satisfy
\begin{align}
\label{variation constraint}
[L^0,\delta \gamma\gamma^{-1}]&=\gamma \delta L^1\gamma^{-1}-\delta L^0 \,,\\
\label{delta L residue}
\Res_{\mathsf{p}_\alpha}\delta L^1_{z_I}&=-[\delta\varphi_\alpha\varphi_\alpha^{-1},\varphi_{\alpha}\Lambda_\alpha\varphi_\alpha^{-1}] \,.
\end{align}
The variation of $S_{{\rm H}, \Gamma}$ is
\begin{multline}
\delta S_{{\rm H}, \Gamma} = \int_0^1\bigg(
\frac{1}{2\pi i}\oint_{c_{\mathsf p}} \langle \delta L^1,\gamma^{-1}\partial_{u^j}\gamma\rangle
+\frac{1}{2\pi i}\oint_{c_{\mathsf p}} \big\langle L^1,\gamma^{-1}\partial_{u^j}(\delta \gamma\gamma^{-1})\gamma \big\rangle\\
+ \sum_{\alpha=1}^N\big\langle\Lambda_\alpha,\varphi_\alpha^{-1}\partial_{u^j}(\delta\varphi_\alpha\varphi_\alpha^{-1})\varphi_\alpha\big\rangle
- \big\langle\delta L^1_{z_I}(\mathsf{q}_i),\nabla P_i\big( L^1_{z_I}(\mathsf{q}_i) \big) \big\rangle \frac{\partial t^i}{\partial u^j}\bigg)\frac{\d u^j}{\d s}\d s \,.
\end{multline}
We will rewrite this by substituting \eqref{M} in the first term and by using the following contour integral, which is evaluated using \eqref{M residue} and \eqref{delta L residue}:
\begin{equation}
\frac{1}{2\pi i}\oint_{c_{\mathsf p}} \langle\delta L^1,M_i^1\rangle=-\big\langle \delta L^1_{z_I}(\mathsf{q}_i),\nabla P_i\big( L^1_{z_I}(\mathsf{q}_i) \big)\big\rangle+\sum_{\alpha=1}^N \big\langle [\delta\varphi_\alpha\varphi_\alpha^{-1},\varphi_\alpha\Lambda_\alpha\varphi_\alpha^{-1}],M_i^1(\mathsf{p}_\alpha)\big\rangle \,.
\end{equation}
We obtain 
\begin{multline}
\delta S_{{\rm H}, \Gamma} = \int_0^1\bigg(
\frac{1}{2\pi i}\oint_{c_{\mathsf p}} \langle \gamma\,\delta L^1\,\gamma^{-1},M_i^0\rangle\frac{\partial t^i}{\partial u^j}
+\frac{1}{2\pi i}\oint_{c_{\mathsf p}} \big\langle \gamma\,L^1\,\gamma^{-1},\partial_{u^j}(\delta \gamma\gamma^{-1}) \big\rangle\\
+ \sum_{\alpha=1}^N \big\langle\Lambda_\alpha,\varphi_\alpha^{-1}\partial_{u^j}(\delta\varphi_\alpha\varphi_\alpha^{-1})\varphi_\alpha \big\rangle
-
\sum_{\alpha=1}^N \big\langle [\delta\varphi_\alpha\varphi_\alpha^{-1},\varphi_\alpha\Lambda_\alpha\varphi_\alpha^{-1}],M_i^1(\mathsf{p}_\alpha)\big\rangle\frac{\partial t^i}{\partial u^j}\bigg)\frac{\d u^j}{\d s}\d s \,.
\end{multline}
We now rewrite this using \eqref{constraint} and \eqref{variation constraint}:
\begin{multline}
\delta S_{{\rm H}, \Gamma} = \int_0^1\bigg(
\frac{1}{2\pi i}\oint_{c_{\mathsf p}} \big\langle \delta L^0+[L^0,\delta\gamma\gamma^{-1}],M_i^0 \big\rangle\frac{\partial t^i}{\partial u^j}
+\frac{1}{2\pi i}\oint_{c_{\mathsf p}} \big\langle L^0,\partial_{u^j}(\delta \gamma\gamma^{-1}) \big\rangle\\
+ \sum_\alpha \big\langle \varphi_\alpha\,\Lambda_\alpha\,\varphi_\alpha^{-1},\partial_{u^j}(\delta\varphi_\alpha\varphi_\alpha^{-1})\big\rangle
-
\sum_\alpha \big\langle [\delta\varphi_\alpha\varphi_\alpha^{-1},\varphi_\alpha\Lambda_\alpha\varphi_\alpha^{-1}],M_i^1(\mathsf{p}_\alpha) \big\rangle\frac{\partial t^i}{\partial u^j}\bigg)\frac{\d u^j}{\d s}\d s \,.
\end{multline}
The contour integral of $\langle \delta L^0,M_i^0\rangle$ vanishes because it is holomorphic on $U_0$.  After integration by parts, the variation becomes
\begin{multline}\label{SH variation}
\delta S_{{\rm H}, \Gamma} = \int_0^1\bigg(
\frac{1}{2\pi i}\oint_{c_{\mathsf p}} \big\langle [M_i^0,L^0],\delta\gamma\gamma^{-1}\big\rangle\frac{\partial t^i}{\partial u^j}
-\frac{1}{2\pi i}\oint_{c_{\mathsf p}} \big\langle \partial_{u^j} L^0,\delta \gamma\gamma^{-1}\big\rangle\\
- \sum_\alpha \big\langle \partial_{u^j}(\varphi_\alpha\,\Lambda_\alpha\,\varphi_\alpha^{-1}),\delta\varphi_\alpha\varphi_\alpha^{-1} \big\rangle
+
\sum_\alpha \big\langle [M_i^1(\mathsf{p}_\alpha),\varphi_\alpha\Lambda_\alpha\varphi_\alpha^{-1}],\delta\varphi_\alpha\varphi_\alpha^{-1}\big\rangle\frac{\partial t^i}{\partial u^j}\bigg)\frac{\d u^j}{\d s}\d s \,.
\end{multline}

Let us restrict attention to variations for which $\delta\varphi_\alpha=0$.  Then $\delta S_{{\rm H}, \Gamma}$ vanishes for all curves $\Gamma$ if and only if
\begin{equation}\label{gamma variation}
0 = \oint_{c_{\mathsf p}} \big\langle \partial_{t^i} L^0-[M_i^0,L^0],\delta \gamma\gamma^{-1} \big\rangle = \oint_{c_{\mathsf p}} \big\langle \partial_{t^i} L^1-[M_i^1,L^1], \gamma^{-1}\delta \gamma \big\rangle
\end{equation}
for all variations $\delta\gamma\gamma^{-1}$ satisfying \eqref{variation constraint}, where the second equality follows from noting that
\begin{equation}
\gamma^{-1}\big( \partial_{t^i} L^0-[M_i^0,L^0] \big) \gamma=\partial_{t^i} L^1-[M_i^1,L^1]
\end{equation}
as a consequence of \eqref{M}. In particular, it follows that $\partial_{t^i}L^I-[M_i^I,L^I]$ can be interpreted as a section of $E_t$. It could have poles at the point $\mathsf{q}_i$ and any of the points $\mathsf{p}_\alpha$, because $M_i^1$ and $L^1$ have a simple poles there. However, from \eqref{M residue} the residue at $\mathsf{q}_i$ is given by 
\begin{equation}
-\Big[\Res_{\mathsf{q}_i}M_i^1,L_{z_I}^1(\mathsf{q}_i) \Big]=-\Big[\nabla P_i \big( L^1_{z_I}(\mathsf{q}_i) \big), L_{z_I}^1(\mathsf{q}_i) \Big]
\end{equation}
and this vanishes because $P_i$ is an invariant polynomial.  So $\partial_{t^i} L^1-[M_i^1,L^1]$ only has poles at the points $\mathsf{p}_\alpha$, and it determines a holomorphic section in  $H^0\big( C,\Lambda^{1,0}C\otimes E_t^\ast(\sum_\alpha \mathsf{p}_\alpha) \big)$.

To interpret \eqref{gamma variation} using Serre duality, we note using the constraint \eqref{variation constraint} that the components of the holomorphic section $Y \coloneqq (\delta\gamma\gamma^{-1}, \gamma^{-1} \delta \gamma)$ of $E_t$ over $U_0 \cap U_1$ defined relative to the local trivialisations over $U_0$ and $U_1$, respectively, satisfy
\begin{equation*}
[L^0,\delta \gamma\gamma^{-1}] = \gamma \delta L^1\gamma^{-1}-\delta L^0, \qquad
[L^1,\gamma^{-1}\delta \gamma] = \delta L^1 - \gamma^{-1} \delta L^0 \gamma\,.
\end{equation*}
Since $\delta L^0,\delta L^1$ are holomorphic on $U_0,U_1$ (because we are assuming that $\delta\varphi_\alpha=0$), these equations say that $[L, Y]$ lies in the image of the \v{C}ech coboundary map, so determines a trivial element of $H^1\big( C,\Lambda^{1,0}C\otimes E_t \big)$. In other words, $Y$ is in the kernel of the map
\begin{equation}\label{ad L}
[L,\cdot]:H^1\big(C,E_t ( -\textstyle\sum_\alpha \mathsf{p}_\alpha )\big) \longrightarrow H^1\big( C,\Lambda^{1,0}C\otimes E_t \big)\,.
\end{equation}
(The notation $(-\sum_\alpha \mathsf{p}_\alpha)$ in the domain of this map denotes the sheaf of sections vanishing at the points $\mathsf{p}_\alpha$, and appears because $L$ has poles at these points). Thus we may interpret the integrals in \eqref{gamma variation} as the Serre duality pairing between $\partial_{t^i} L - [M_i, L] \in H^0\big( C,\Lambda^{1,0}C\otimes E_t^\ast(\sum_\alpha \mathsf{p}_\alpha) \big)$ and $Y = (\delta\gamma\gamma^{-1}, \gamma^{-1} \delta \gamma) \in H^1\big(C,E_t(-\sum_\alpha \mathsf{p}_\alpha )\big)$. The pairing is required to vanish for all $Y$ in the kernel of the map \eqref{ad L}.  This happens if and only if $\partial_{t^i}L-[M_i,L]$ lies in the image of the adjoint of \eqref{ad L}.  The adjoint of \eqref{ad L} is the map
\begin{equation}\label{ad L adjoint}
-[L,\cdot]:H^0\left(C,E_t\right) \longrightarrow H^0\big( C,\Lambda^{1,0}C\otimes E_t(\textstyle\sum_\alpha \mathsf{p}_\alpha) \big)
\end{equation}
because, by Serre duality, the dual space of $H^1(C,V)$ is $H^0(C,\Lambda^{1,0}C\otimes V^\ast)$, and 
\begin{equation}
\oint_{c_{\mathsf p}}\langle [L,u],v\rangle =- \oint_{c_{\mathsf p}}\langle u,[L,v]\rangle
\end{equation}
for every $u\in H^1\big(C,E_t(-\textstyle\sum_\alpha \mathsf{p}_\alpha)\big)$ and $v\in H^0\left(C,E_t\right)$.
Since $\partial_{t^i}L-[M_i,L]$ is in the image of the map \eqref{ad L adjoint}, there exist holomorphic functions $N_i^I:U_I\to\g$ for $I=0,1$ satisfying
\begin{equation}
\partial_{t^i}L^I-[M^I_i,L^I]=[N^I_i,L^I],\qquad \gamma^{-1}N^0_i\gamma = N^1_i\,.
\end{equation}
Recall from earlier that the solution of \eqref{M} and \eqref{M residue} was unique up to addition of elements of $H^0(C,E_t)$.  Therefore we are free to define $\tilde{M}^I_i \coloneqq M^I_i+N^I_i$, and this choice of $M_i$ satisfies the Lax equations \eqref{Lax equations} and the constraints \eqref{M constraints}.  The choice of $M_i$ is now unique up to addition of elements of $\ker [L,\cdot]\subset H^0(C,E_t)$.  But the stability condition \eqref{inf_freeness} means that there are no holomorphic sections of $C$ that commute with $L$. So $M_i$ is unique, as claimed.

We have shown that $S_{{\rm H}, \Gamma}$ is critical with respect to variations of $L,\gamma$ with $\varphi_\alpha$ fixed if and only if $L$ solves the Lax equation \eqref{Lax equations}.  It remains to check that solutions of these equations are also critical points of $S_{{\rm H}, \Gamma}$ with respect to variations of $\varphi_{\alpha}$.  The variation of $S_{{\rm H}, \Gamma}$ is given in \eqref{SH variation}.  The term involving $\delta\gamma$ vanishes because $L$ satisfies the Lax equations \eqref{Lax equations}.  The term involving $\delta\varphi_\alpha$ also vanishes because taking residues of \eqref{Lax equations} gives
\begin{equation}
\partial_{t^i} (\varphi_\alpha\Lambda_\alpha\varphi_\alpha^{-1}) = \big[ M_i^1(\mathsf{p}_\alpha),\varphi_\alpha\Lambda_\alpha\varphi_\alpha^{-1} \big].
\end{equation}
So the variation vanishes and $S_{{\rm H}, \Gamma}$ is critical with respect to variations of $L,\gamma,\varphi_\alpha$ \,.

Finally, to show the zero-curvature equation \eqref{eq:zce Mi} we note that the Lax equations \eqref{Lax equations} imply
\begin{equation}\label{ZCE Mi com L}
\big[ \partial_{t^i} M^I_j - \partial_{t^j} M^I_i - [ M^I_i, M^I_j ], L^I \big] =0
\end{equation}
for $I \in \{ 0, 1\}$. On the other hand, it also follows from the first equation in \eqref{M constraints} that
\begin{equation}
\gamma^{-1} \big( \partial_{t^i} M^0_j - \partial_{t^j} M^0_i - [ M^0_i, M^0_j ] \big) \gamma = \partial_{t^i} M^1_j - \partial_{t^j} M^1_i - [ M^1_i, M^1_j ]
\end{equation}
on $U_0 \cap U_1$. The zero-curvature equation \eqref{eq:zce Mi} now follows by the stability condition.
\end{proof}

\section{Examples}\label{sec: examples}

The unifying Lagrangian $1$-form \eqref{eq:unifyingaction b} has the typical structure of a Lagrangian $1$-form in phase space coordinates as first used in \cite{CDS} for finite-dimensional integrable systems on coadjoint orbits, and further developed in \cite{CH} in relation to the univariational principle. It is the difference of a kinetic part and a potential part involving all the invariant Hamiltonians $H_i$.
The essential novelty compared to all Lagrangian $1$-forms considered so far is that the kinetic part involves not only the group coordinates $\varphi_\alpha$ of the coadjoint orbits at the marked points $\mathsf p_\alpha$ for $\alpha = 1,\ldots, N$, but also the transition function $\gamma : (U_0 \cap U_1) \times \RR^n \to G$ of the holomorphic principal $G$-bundle. This introduces a new type of kinetic term in the action which is crucial for generating the complete set of kinetic contributions. We will see explicit examples of this in the genus $1$ case considered in \S\ref{sec:ellipticgaudin}, where we derive a Lagrangian $1$-form for the elliptic Gaudin model and also the elliptic spin Calogero-Moser model as a special case. In the genus $0$ case considered in \S\ref{sec:rationalgaudin}, this new kinetic term is absent and we recover the familiar Lagrangian $1$-form first introduced in \cite{CDS,CSV}.

Since this section deals with explicit examples and we only work with a matrix Lie algebra $\g$, we fix the nondegenerate invariant bilinear pairing $\langle \cdot, \cdot \rangle : \g \otimes \g \to \CC$ to be the usual one given in terms of the trace by $\langle A,B \rangle = \Tr(AB)$. Also, throughout this section, we identify $\g^\ast$ with $\g$ and the coadjoint action with the adjoint action. Further, we fix a Cartan subalgebra $\h \subset \g$ with basis $\{\sfH_\mu\}_{\mu = 1}^{\text{rk}\, \g}$ and the corresponding root generators $\{\sfE_\varrho\}$ of $\g$.

\subsection{Rational Gaudin hierarchy}\label{sec:rationalgaudin}

We begin with the simplest case of genus $0$ so that $C = \CP$ is the Riemann sphere. We will show that when the unifying Lagrangian $1$-form \eqref{eq:unifyingaction b} is specialised to the case $\gamma = 1$ we obtain the rational Gaudin hierarchy. Gaudin models are a broad class of integrable systems associated with Lie algebras equipped with a nondegenerate invariant bilinear form. They were first introduced for the Lie algebra $\mathfrak{sl}_2(\mathbb{C})$ by Gaudin in \cite{Ga1} as quantum integrable spin chains with long-range interactions, and later extended to arbitrary semisimple Lie algebras in \cite{Ga2}. In the rational case, the classical Gaudin model provides the simplest example of a Hitchin system with marked points and has been extensively studied from this perspective; see, for instance, \cite{Ne, ER}.

We choose $N \in \ZZ_{\geq 1}$ distinct marked points $\{ \sfp_\alpha \}_{\alpha=1}^N$. We also need sufficiently many points $\{\sfq_i\}_{i=1}^n$ which are distinct from the $\{ \sfp_\alpha \}_{\alpha=1}^N$, where $n$ will be half the dimension of the phase space, to obtain a sufficient number of Hamiltonians. Let $U_0$ be a neighbourhood of a point $\sfp \in C$ distinct from $\{ \sfp_\alpha \}_{\alpha=1}^N$ and $\{\sfq_i\}_{i=1}^n$, such that $U_0$ does not contain any of these other points. Further, let $U_1 \coloneqq C \setminus \{\sfp\}$ which we equip with a holomorphic coordinate $z \colon U_1 \rightarrow \mathbb{C}$, where, in contrast to the general case of \S \ref{sec: Hitchin holomorphic}, we have dropped the index on $z$ for notational simplicity.

\paragraph{Lax matrix:} Following the general setup in \S \ref{sec: Lax matrices}, we obtain a Lax matrix for the rational Gaudin model as the solution of the moment map condition $\mu=0$ \eqref{eq:b_z-gauged} in a local trivialisation where $A''^I = 0$. Using \eqref{Lax behaviour a}, we can write
\begin{equation}\label{eq:L1-genuszero}
L^1 = L_{z} \d z = \left( \frac{L_\alpha}{z - z(\mathsf p_\alpha)} + J_\alpha^1 \right) \d z\,.
\end{equation}
locally near the point $\mathsf p_\alpha$. Here $L_\alpha = - \varphi_\alpha \Lambda_\alpha \varphi_\alpha^{-1}$ denotes the residue at $\mathsf p_\alpha$ and $J^1_\alpha$ the holomorphic part of $L_z$ in the neighbourhood of $\mathsf p_\alpha$. Since $\gamma = 1$ we have $L^0 = L^1$ on $U_0 \cap U_1$ so that $L^1$ extends holomorphically to the point $\mathsf p \in C = \CP$ at infinity in the coordinate $z$.
We can therefore write the Lax matrix $L^1$ as a global $\g$-valued meromorphic $(1,0)$-form
\begin{equation}\label{eq:B-Gaudin}
L^1 = L_z \d z = \sum_{\alpha=1}^N \frac{L_\alpha}{z-z(\mathsf p_\alpha)} \d z
\end{equation}
together with the constraint $\sum_\alpha L_\alpha = 0$ coming from the fact that $L^1$ is regular at $z(\mathsf p) = \infty$.

\paragraph{Lagrangian 1-form:} Let us look at the unifying Lagrangian 1-form \eqref{eq:unifyingaction b} in the present case. Since $\gamma = 1$, the kinetic term in the Lagrangian $1$-form \eqref{eq:unifyingaction b} involving the transition function drops out, and we are left with
\begin{equation}\label{eq:rgmultiform}
\Lag_{\rm RG} = \sum_{\alpha=1}^N \Tr \left( \Lambda_\alpha \varphi_\alpha^{-1} \d_{\RR^n} \varphi_\alpha \right) - H_i\, \d_{\RR^n} t^i
\end{equation}
where
\begin{equation}
H_i = P_i( L_{z}(\sfq_i))
\end{equation}
with $G$-invariant polynomials $P_i$ acting on $L_{z}$ given by \eqref{eq:B-Gaudin}. The Lagrangian 1-form \eqref{eq:rgmultiform} describes the rational Gaudin hierarchy. A Lagrangian 1-form for the rational Gaudin hierarchy was first obtained in \cite[Section 7]{CDS} using an algebraic approach. The Lagrangian $1$-form describing the dynamics for a single Hamiltonian of the Gaudin model was also obtained previously in \cite{VW}. By the same derivation as in those references, one shows that the Euler-Lagrange equations associated to \eqref{eq:rgmultiform} give the following collection of equations 
\begin{equation}
  \partial_{t^i} L_\alpha = \left[ \frac{\nabla P_i(L_{z}(\sfq_i))}{z(\mathsf p_\alpha) - z(\sfq_i)}, L_\alpha \right],
\end{equation}
for all $i$. One can check that one recovers a hierarchy of Lax equations from the above collection of equations using \eqref{MIi explicit}.

\subsection{Elliptic Gaudin and elliptic spin Calogero-Moser hierarchies}\label{sec:ellipticgaudin}

Let us now consider the genus $1$ case so that the Riemann surface $C$ is a torus $\mathbb{C}/(\mathbb{Z} + \tau \mathbb{Z})$ with $\text{Im}(\tau)>0$. Following the general setup from \S\ref{sec: Hitchin holomorphic} we fix a point $\sfp \in C$ and take $U_0$ to be a small neighbourhood of $\sfp$, and we let $U_1 \coloneqq C \setminus \{\sfp\}$. We fix a coordinate $z$ on $C$ with the identifications $z \sim z + 1$ and $z \sim z + \tau$ such that $z(\sfp) = 0$.
Our starting point is the Lagrangian $1$-form \eqref{eq:unifyingaction b} but now specialised to a holomorphic transition function $\gamma$ on the annulus $U_0 \cap U_1$ given by
\begin{equation}\label{eq:tranfun-genusone}
    \gamma = \exp\left(\frac{\Q}{z} \right) \quad \text{with}\quad \Q = q^\mu \sfH_\mu
\end{equation}
where $q^\mu \in \mathbb{C}$, for each $\mu \in \{1, \ldots, \text{rk}\, \g\}$, is constant in $z$. We will show that the resulting Lagrangian $1$-form describes the elliptic Gaudin hierarchy, and in the case with $N=1$ marked point the elliptic spin Calogero-Moser hierarchy. 

The elliptic Gaudin model we construct was obtained from a Hamiltonian reduction procedure in \cite{ER, Ne}. To avoid possible confusion, it is worth noting in passing that for $\g = \sl_m$ there is another integrable system that also goes by the name elliptic Gaudin model which was originally derived as a limit of the XYZ spin chain in \cite{ST1, ST2}. It is not clear if and how these two models are related, especially since the elliptic Gaudin model we consider is known to have a dynamical $r$-matrix \cite{ER} while the one of \cite{ST1, ST2} can be built from Belavin's elliptic solution \cite{Be} of the classical Yang-Baxter equation by following the procedure of \cite{J}. A spin generalisation of the Calogero-Moser model was first defined in \cite{GiH}, and its case with elliptic potential was realised as a Hitchin system in \cite{Ma, GoNe, ER, Ne}. The main goal of this section is to use the unifying Lagrangian 1-form \eqref{eq:unifyingaction b} to obtain variational descriptions of these two hierarchies.\footnote{Recently, the Lagrangian 1-form structure of certain Calogero-Moser-type systems was studied in \cite{KNY}. Starting from a general ansatz for the kinetic term of the Lagrangian 1-form, the authors used the resulting generalised Euler-Lagrange equations to derive the integrable cases of these systems.}

We choose $N \in \ZZ_{\geq 1}$ distinct marked points $\{ \sfp_\alpha \}_{\alpha=1}^N$ in $C \setminus \overline{U_0}$. As in the general case, we need sufficiently many additional points $\sfq_i \in C \setminus \overline{U_0}$ distinct from $\{\sfp_\alpha\}_{\alpha=1}^N$ to obtain a sufficient number of Hamiltonians.
We will later specialise to the case of a single marked point $\sfp_1$, i.e. taking $N=1$.

\paragraph{Lax matrix:} From \eqref{Lax behaviour a}, we have
\begin{equation}\label{eq:L1-eg}
L^1 = L_z \d z = \left( \frac{L_\alpha}{z - z(\mathsf p_\alpha)} + J^1_\alpha \right) \d z
\end{equation}
locally around the point $\mathsf p_\alpha$ in our choice of local trivialisation. Here $L_\alpha = -\varphi_\alpha \Lambda_\alpha \varphi_\alpha^{-1}$ denotes the residue at $\mathsf p_\alpha$ and $J^1_\alpha$ the holomorphic part of $L_z$ in the neighbourhood of $\mathsf p_\alpha$. We can express $L_\alpha$ and $L^1 = L_z \d z$ in the basis $(\sfH_\mu, \sfE_\varrho)$ as
\begin{equation}\label{eq:L1-cartanbasis}
  L_\alpha = (L_\alpha)^\mu \sfH_\mu + (L_\alpha)^ \varrho \sfE_\varrho \quad \text{and} \quad L_z = L^\mu \sfH_\mu + L^\varrho \sfE_\varrho.
\end{equation}
Since $L^1 = L_z \d z$ is a meromorphic $1$-form on the punctured torus $U_1$, it satisfies
\begin{equation}
    L^1(z+1) = L^1(z+\tau) = L^1(z)\,. 
\end{equation}
Also, recall from \S \ref{sec: Lax matrices} that $L^0$ is holomorphic on $U_0$ and we have the relation $L^0 = \gamma L^1 \gamma^{-1}$ on $U_0 \cap U_1$. Therefore, we can write
\begin{equation}\label{eq:L0-cartanbasis}
    L^0 = \bigg( L^\mu \sfH_\mu +  L^\varrho \exp\Big(\frac{\varrho(\Q)}{z} \Big) \sfE_\varrho \bigg) \d z\,.
\end{equation}
We can express $L^\mu$ and $L^\varrho$ in \eqref{eq:L1-cartanbasis} and \eqref{eq:L0-cartanbasis} in terms of the Weierstrass $\zeta$-function and the Weierstrass $\sgm$-function which are respectively defined by the relations
\begin{equation}\label{eq:zetafunction}
  \frac{\d \zeta(z)}{\d z} = - \wp(z), \quad \lim_{z \rightarrow 0} \left( \zeta(z) - \frac{1}{z} \right) = 0
\end{equation}
and
\begin{equation}\label{eq:sigmafunction}
  \frac{\d \log (\sgm(z))}{\d z} = \zeta(z), \quad \lim_{z \rightarrow 0} \frac{\sgm(z)}{z} = 1
\end{equation}
where $\wp$ is the Weierstrass $\wp$-function. While the Weierstrass $\wp$-function is doubly periodic, the $\sigma$-function and the $\zeta$-function satisfy
\begin{align}
\label{quasi_period1}  &\zeta(z + 2 \omg_l) = \zeta(z) + 2 \eta_l \,,\\
\label{quasi_period2}  &\sgm(z + 2 \omg_l) = - \sgm(z) e^{2 \eta_l (z + 2 \omg_l)}\,,
\end{align} 
where $2\omg_1$ and $2\omg_2$ are the periods of a generic torus and $\eta_l = \zeta(\omg_l)$, for $l = 1, 2$. Here we have $2\omg_1 = 1$ and $2\omg_2 = \tau$.

It follows from \eqref{eq:L1-eg} that the elliptic function $L^\mu$ is meromorphic with simple poles at $\{\sfp_\alpha\}_{\alpha=1}^N$ with residues $(L_\alpha)^\mu$, so it can be expressed as
\begin{subequations} \label{L mu and constraint}
\begin{equation}\label{eq:Lmu-eg}
  L^\mu =  \pi^\mu + \sum_{\alpha=1}^N (L_\alpha)^\mu \zeta(z-z(\sfp_\alpha)) 
\end{equation}
where $\pi^\mu$ is constant in $z$, and we have
\begin{equation}\label{eq:sum_residue}
    \sum_{\alpha=1}^N (L_\alpha)^\mu = 0
\end{equation}
\end{subequations}
since the sum of residues over an irreducible set of poles of an elliptic function vanishes. Next, for the function $L^\varrho$, \eqref{eq:L1-eg} tells us that it has poles at $\{\sfp_\alpha\}_{\alpha=1}^N$ with residues $(L_\alpha)^\varrho$ and \eqref{eq:L0-cartanbasis} implies that the function $L^\varrho e^{\frac{\varrho(\Q)}{z}}$ is holomorphic near $z = 0$. It then follows that
\begin{equation}\label{eq:Lvarrho-eg}
    L^\varrho = \sum_{\alpha=1}^{N} (L_\alpha)^\varrho \frac{\sgm(\varrho(\Q) + z - z(\sfp_\alpha))}{\sgm(\varrho(\Q)) \sgm(z-z(\sfp_\alpha)) } e^{- \varrho(\Q)\zeta(z)}\,.
\end{equation}
Then, we get $L^1 = (L^\mu \sfH_\mu + L^\varrho \sfE_\varrho) \d z$ with $L^\mu$ and $L^\varrho$ given by \eqref{L mu and constraint} and \eqref{eq:Lvarrho-eg} as the Lax matrix of the elliptic Gaudin model.

In the special case when $N = 1$ and $\sfp_1 = 0$, the components \eqref{L mu and constraint} simply reduce to $L^\mu =  \pi^\mu$, while the components \eqref{eq:Lvarrho-eg} read
\begin{equation}
L^\varrho = \sum_{\alpha=1}^{N} (L_\alpha)^\varrho \frac{\sgm(\varrho(\Q) + z)}{\sgm(\varrho(\Q)) \sgm(z) } e^{- \varrho(\Q)\zeta(z)}\,.
\end{equation} 
When $\g = \mathfrak{sl}_m(\mathbb{C})$, coincides with the Lax matrix of the elliptic spin Calogero-Moser model, see, e.g. \cite[Chapter 7]{BBT}.

\paragraph{Lagrangian 1-form:} We now specialise the unifying Lagrangian 1-form \eqref{eq:unifyingaction b} to the present case. Recalling that $\g$ is taken to be a matrix Lie algebra we take the nondegenerate bilinear form on $\g$ is given by the trace, this takes the form
\begin{equation}\label{eq:egmultiform-unsimp}
  \Lag_{\rm EG} = \frac{1}{2 \pi i} \int_{c_{\mathsf p}} \Tr \left( L^1 \wedge \gamma^{-1} \d_{\RR^n} \gamma \right) + \sum_{\alpha=1}^N \Tr \left( \Lambda_\alpha \varphi_\alpha^{-1} \d_{\RR^n} \varphi_\alpha \right) - H_i \, \d_{\RR^n} t^i
\end{equation}
where the transition function $\gamma$ is given by \eqref{eq:tranfun-genusone} and
\begin{equation}\label{eq:Ham-genusone}
H_i = P_i( L_{z}(\mathsf q_i))\,.
\end{equation}
Let us evaluate the integral in the first term on the right-hand side of \eqref{eq:egmultiform-unsimp}. We have
\begin{equation}\label{eq:integraleval}
  \begin{split}
    \frac{1}{2 \pi i} \int_{c_{\mathsf p}} \Tr \left( L_z \d z \wedge \gamma^{-1} \d_{\RR^n} \gamma \right) = \frac{1}{2 \pi i} \int_{c_{\mathsf p}} \Tr \left( L_z \d z \wedge \frac{\d_{\RR^n} \Q}{z}  \right) = p_\mu \d_{\RR^n} q^\mu
  \end{split}
\end{equation}
where the coordinates $q^\mu$ were introduced in \eqref{eq:tranfun-genusone} and we defined
\begin{equation}
p_\mu \coloneqq \Tr \big( L_{z}(z(\sfp)) \sfH_\mu \big)\,,~~\mu \in \{1, \ldots, \text{rk}\, \g\}\,.
\end{equation}
Plugging back \eqref{eq:integraleval} into \eqref{eq:egmultiform-unsimp}, we get a Lagrangian 1-form for the elliptic Gaudin hierarchy
\begin{equation}\label{eq:egmultiform-simp}
  \Lag_{\rm EG} = p_\mu \d_{\RR^n} q^\mu + \sum_{\alpha=1}^N \Tr \left( \Lambda_\alpha \varphi_\alpha^{-1} \d_{\RR^n} \varphi_\alpha \right) - H_i\, \d_{\RR^n} t^i\,,
\end{equation}
with $H_i$ defined by \eqref{eq:Ham-genusone} depending on the elliptic Gaudin Lax matrix given by \eqref{L mu and constraint} and \eqref{eq:Lvarrho-eg}. Note that in contrast with the Lagrangian 1-form \eqref{eq:rgmultiform} obtained in the genus $0$ case, the Lagrangian 1-form above has an additional kinetic term corresponding to the cotangent bundle degrees of freedom $p_\mu,q^\mu$ arising from the non-triviality of the principal $G$-bundle in genus 1.

Writing a Lagrangian 1-form for the elliptic spin Calogero-Moser hierarchy only requires us to set $N=1$ in the Lagrangian 1-form for the elliptic Gaudin hierarchy. We then get
\begin{equation}\label{eq:escmmultiform}
  \Lag_{\rm ESCM} = p_\mu \d_{\RR^n} q^\mu + \Tr \left( \Lambda_1 \varphi_1^{-1} \d_{\RR^n} \varphi_1 \right) - H_i \, \d_{\RR^n} t^i
\end{equation}
for this special case.

\paragraph{Alternative description:} Lax matrices for the elliptic Gaudin and elliptic spin Calogero-Moser models have appeared before in the literature and we now explain how they are related to the ones we used above to formulate our Lagrangian $1$-form, hence justifying that we have indeed built a variational description for the elliptic Gaudin hierarchy \cite{ER, Ne}. The two descriptions are related by a change of local trivialisation, as we now show.

Recall that under a change of local trivialisation implemented through smooth functions $f_0$ on $U_0$ and $f_1$ on $U_1$, the $(0, 1)$-connection $A''^I$ and the $(1, 0)$-form $B^I$ transform as
\begin{subequations}
\begin{align}
A''^I &\longmapsto \widetilde{A}''^I = f_I A''^I f_I^{-1} - \dol f_I f_I^{-1}\,,\\
L^I &\longmapsto \widetilde{L}^I = f_I L^I f_I^{-1}
\end{align}
\end{subequations}
locally in $U_I$ for each $I \in \{0, 1\}$, while the new transition function $\widetilde{\gamma}$ reads
\begin{equation}
    \widetilde{\gamma} = f_0 \gamma f_1^{-1} \,,
\end{equation}
and $\widetilde{\varphi}_\alpha=f_1(z(p_\alpha))\,\varphi_\alpha$ so that 
\begin{equation}
    \widetilde{L}_\alpha=f_1(z(p_\alpha))\,L_\alpha\,f_1(z(p_\alpha))^{-1}\,.
\end{equation}

In our choice of local trivialisation with respect to which $A''^I = 0$ for $I \in \{ 0, 1\}$, we obtain a Lax matrix $L$ for the elliptic Gaudin model described by \eqref{L mu and constraint} and \eqref{eq:Lvarrho-eg}. The alternative description of the elliptic Gaudin model, which can be found in \cite{Ne}, for instance, corresponds to a local trivialisation with respect to which
\begin{equation}\label{eq:cartanlocaltriv}
    \widetilde{A}''^I = \mathcal{V} = v^\mu \sfH_\mu \,,
\end{equation}
where $v^\mu$ is constant in $(z, \zbar)$ for each $I \in \{0, 1\}$, and with trivial transition function. To implement the change of local trivialisation from the one with respect to which $A''^I = 0$ to the one defined by \eqref{eq:cartanlocaltriv}, we need functions $f_0, f_1$ to satisfy
\begin{equation}\label{eq:changetrivcond1}
    - \dol f_I f_I^{-1} = \widetilde{A}''^I = \mathcal{V}\,, \quad I \in \{0, 1\}\,.
\end{equation}
The new transition function is given by 
\begin{equation}\label{eq:changetrivcond3}
    \widetilde{\gamma} = f_0 \gamma f_1^{-1} \,.
\end{equation}
Then choosing $f_0, f_1$ such that $f_0  = f_1 \exp\left(-\frac{\Q}{z}\right)$ we obtain $\widetilde{\gamma} = 1$.
Finally, since $f_1$ is a smooth function on the punctured torus $U_1$, it must satisfy:
\begin{equation}\label{eq:changetrivcond2}
    f_1(z+1) = f_1(z+\tau) = f_1(z)\,.
\end{equation}
From \eqref{eq:changetrivcond1} and \eqref{eq:changetrivcond2}, we get
\begin{equation}\label{eq:changetrivfunc}
    f_1 = \exp\left(\Q \left( \zeta(z) +  2 \zeta(\tfrac{1}{2})\frac{z \taubar - \zbar \tau}{\tau - \taubar} -  2 \zeta(\tfrac{\tau}{2})\frac{z - \zbar}{\tau - \taubar} \right)\right)
\end{equation}
together with the relation
\begin{equation}
    \mathcal{V} = - \mathcal{Q} \frac{2}{\tau - \taubar} \left( \zeta(\tfrac{\tau}{2}) - \tau \zeta(\tfrac{1}{2}) \right)=\frac{2\pi i}{\tau - \taubar}\mathcal{Q}\,,
\end{equation}
where in the final step we used Legendre's relation between the periods and quasiperiods of $\zeta(z)$.
Expressing $\widetilde{L}^1 =f_1L^1f_1^{-1} = \widetilde{L}_z \d z$ in terms of the basis $(\sfH_\mu, \sfE_\varrho)$ of $\g$ as
\begin{equation}
    \widetilde{L}_z = \widetilde{L}^\mu \sfH_\mu + \widetilde{L}^\varrho \sfE_\varrho=L^\mu \sfH_\mu + L^\varrho f_1\sfE_\varrho f_1^{-1}\,,
\end{equation}   
we find  
\begin{equation}\label{eq:Lmu-eg-nek}
    \widetilde{L}^\mu=L^\mu= \pi^\mu + \sum_{\alpha=1}^N (L_\alpha)^\mu \zeta(z-z(\sfp_\alpha))= \pi^\mu + \sum_{\alpha=1}^N (\widetilde{L_\alpha})^\mu \zeta(z-z(\sfp_\alpha))
\end{equation}
and 
\begin{align}
    \widetilde{L}^\varrho&=\sum_{\alpha=1}^{N} (L_\alpha)^\varrho \frac{\sgm(\varrho(\Q) + z - z(\sfp_\alpha))}{\sgm(\varrho(\Q)) \sgm(z-z(\sfp_\alpha)) } e^{\varrho(\Q) \left(   2 \zeta(\frac{1}{2})\frac{z \taubar - \zbar \tau}{\tau - \taubar} -  2 \zeta(\frac{\tau}{2})\frac{z - \zbar}{\tau - \taubar} \right)}\nonumber\\
    &=\sum_{\alpha=1}^{N} (\widetilde{L}_\alpha)^\varrho \frac{\sgm(\varrho(\Q) + z - z(\sfp_\alpha))}{\sgm(\varrho(\Q)) \sgm(z-z(\sfp_\alpha)) } e^{\varrho(\Q) \left(   2 \zeta(\frac{1}{2})\frac{(z-z(\sfp_\alpha)) \taubar - (\zbar -\zbar(\sfp_\alpha)) \tau}{\tau - \taubar} -  2 \zeta(\frac{\tau}{2})\frac{(z-z(\sfp_\alpha)) - (\zbar-\zbar(\sfp_\alpha))}{\tau - \taubar} \right)} \,.
\end{align}
Note that, using again Legendre's relation,
\begin{equation}
   2 \zeta(\tfrac{1}{2})\frac{z \taubar - \zbar \tau}{\tau - \taubar} -  2 \zeta(\tfrac{\tau}{2})\frac{z - \zbar}{\tau - \taubar}= -2 \zeta(\tfrac{1}{2})z+\frac{2\pi i}{\tau-\taubar}(z - \zbar)
\end{equation}
so that we can write
\begin{align}\label{eq:Lvarrho-eg-nek}
    \widetilde{L}^\varrho
    &=\sum_{\alpha=1}^{N} (\widetilde{L}_\alpha)^\varrho \frac{\sgm(\varrho(\Q) + z - z(\sfp_\alpha))}{\sgm(\varrho(\Q)) \sgm(z-z(\sfp_\alpha)) } e^{\varrho(\Q) \left(   \frac{2\pi i}{\tau-\taubar}(z - \zbar) \right)}e^{-\varrho(\Q) \left(    \frac{2\pi i}{\tau-\taubar}(z(\sfp_\alpha) - \zbar(\sfp_\alpha)) \right)}e^{ -2 \zeta(\frac{1}{2})\varrho(\Q) (z-z(\sfp_\alpha))} \,.
\end{align}
This last expression allows us to compare the Lax matrix $\widetilde{L}^1 = (\widetilde{L}^\mu \sfH_\mu + \widetilde{L}^\varrho \sfE_\varrho) \d z$ with $\widetilde{L}^\mu$ and $\widetilde{L}^\varrho$ given by \eqref{eq:Lmu-eg-nek} and \eqref{eq:Lvarrho-eg-nek}, with the elliptic Gaudin Lax matrix obtained in \cite[(4.5)]{Ne} (when specialising to the case $\g = \mathfrak{gl}_m(\CC)$ so that writing \eqref{eq:tranfun-genusone} as $\mathcal Q = \sum_{k=1}^m x_k \mathsf E_{kk}$ in the standard basis $\{ \mathsf E_{ij} \}_{i,j=1}^m$ for $\mathfrak{gl}_m(\CC)$, we have $\varrho(\mathcal Q) = x_i-x_j$ for the $\mathfrak{gl}_m(\CC)$ root $\varrho = \epsilon_i - \epsilon_j$). They coincide up to the factor 
$e^{-2\varrho(\Q)\zeta(\frac{1}{2})(z-z(\sfp_\alpha))}$ 
which possibly arises from the use of a different convention for the $\sigma$-function and the $\zeta$-function in \cite{Ne} relative to the standard \eqref{quasi_period1}-\eqref{quasi_period2}.

\section{Conclusion and outlook}

In this paper, we derived a variational formulation of Hitchin's completely integrable system associated with a compact Riemann surface of arbitrary genus by extending the theory of Lagrangian multiforms to the setting of gauge theories. As an application, in the genus one case we obtained an explicit phase-space Lagrangian multiform for the elliptic Gaudin hierarchy (and also the elliptic spin Calogero-Moser hierarchy as a special case) for the first time. The present work thus extends the results of \cite{CDS, CSV}, where explicit Lagrangian multiforms were constructed for both non-cyclotomic and cyclotomic rational Gaudin models using the algebraic framework of Lie dialgebras. In that respect, a natural question is how to generalise the well-understood connection between classical $r$-matrices and Lagrangian multiforms in the non-dynamical case \cite{CStV2, CDS, CSV} to the present setting. Since the Hitchin system associated with an elliptic curve is known to have a dynamical $r$-matrix \cite{ER}, the work \cite{BDOZ} appears particularly promising in bridging this gap.

One of the virtues of our construction is that it makes manifest the explicit connection between classical $3$d mixed holomorphic–topological BF theory and Hitchin's completely integrable system. The relevance of this $3$d holomorphic–topological gauge theory for the study of the Hitchin system has been noted previously in the literature \cite{GW, Z}, see also \cite{GT}, in fact, already at the quantum level. In those works, however, its appearance is less direct: $3$d mixed BF theory can be obtained as a certain level-zero limit of $3$d Chern-Simons theory which upon quantisation undergoes the familiar shift to the critical level. The latter is in turn deeply related to the representation theory of affine Kac–Moody algebras at the critical level, via conformal blocks of the WZW model living on the boundary, which underlies the geometric Langlands correspondence and the quantisation of the Hitchin system through the work of Beilinson-Drinfel'd \cite{BD}.

As emphasised in the introduction, the fruitful merging of the framework of Lagrangian $1$-forms with that of $3$d mixed holomorphic-topological BF theory was achieved through the study of Hitchin's completely integrable system in the Lagrangian framework. A natural generalisation of the present work will be to similarly bring together the framework of Lagrangian $2$-forms with that of the celebrated $4$d semi-holomorphic Chern-Simons theory \cite{CoY} with the aim of obtaining a gauge-theoretic origin for hierarchies of $2$d integrable field theories. In connection with these ideas, it is important to note also the work \cite{LeOZ2} on the construction of a $2$d field-theoretic generalisation of Hitchin's integrable system based on affine Higgs bundles, which was shown to be closely related to the $4$d Chern-Simons setup. Examples of affine Hitchin systems constructed in this way include $2$d field theoretic analogues of the elliptic Gaudin and Calogero-Moser models, both of which we have cast in the Lagrangian multiform framework in the present work (see also, for instance, \cite{AZ} and references therein concerning the so-called field analogue of elliptic (spin) Calogero-Moser model). Naturally, a related goal is then to use a higher-dimensional generalisation of our present work to obtain a variational analogue of the affine Higgs bundle setup.

Finally, a central motivation behind the Lagrangian multiform programme is to provide a path integral quantisation framework for integrable hierarchies. By constructing Lagrangian multiforms for Hitchin systems and demonstrating their connection to $3$d classical mixed BF theory in the present work, we have opened the way for studying the quantisation of Hitchin systems via path integral methods. This approach also offers the potential to uncover new insights into their relations with gauge theories at the quantum level.

\appendix

\section{Stability for \texorpdfstring{$G=SL_m(\CC)$}{G=SLm}} \label{sec: Stability}

In this appendix we show that stable bundles with $G=SL_m(\mathbb{C})$ satisfy our differential stability condition \eqref{inf_freeness}.  A stable $SL_m(\CC)$ principal Higgs bundle $\mathcal P$ is equivalent to a stable vector Higgs bundle $E = \mathcal P \times_{SL_m(\CC)}\CC^n$ for which the holomorphic line bundle $\det(E)$ is trivial.  We recall that a vector Higgs bundle consists of a holomorphic vector bundle $E\to C$ and a holomorphic section $B$ of $\Lambda^{1,0} C \otimes\mathrm{End}(E)$. Such a bundle is called \emph{stable} if, for all holomorphic sub-bundles $F\subset E$ satisfying $B(F)\subset \Lambda^{1,0}C\otimes F$, we have
\begin{equation}
\frac{\mathrm{deg}(F)}{\mathrm{rank}(F)}<\frac{\mathrm{deg}(E)}{\mathrm{rank}(E)} \,.
\end{equation}

Let $h$ be any Hermitian metric on the holomorphic vector bundle $E$.  Then $h$, together with $E$, determines a Chern connection $A$.  This is the unique unitary connection such that the holomorphic structure $\bar{\partial}$ of $E$ coincides with the $(0,1)$-part $\bar\partial^A$ of the covariant derivative $\d^A$.  We let $B^\dagger$ denote the Hermitian-adjoint of $B$; this is a smooth section of $\Lambda^{0,1}C\otimes \mathrm{End}(E)$. Then we can form two operators
\begin{align}
D' &= \bar\partial^A + B \,, \\
D''&= \partial^A + B^\dagger \,.
\end{align}
Simpson showed \cite{Sim} that every stable bundle admits a unique Hermitian metric such that
\begin{equation}\label{HE}
    D'D''+D''D' = \lambda\,\omega\otimes\mathrm{id}_E\,,
\end{equation}
in which $\omega$ is a chosen area form for the Riemann surface and $\lambda\in\mathbb{C}$ is a constant.

We will prove that a stable vector Higgs bundle satisfies the property \eqref{inf_freeness}, following a standard argument.  To do so, suppose for contradiction that $X$ is a non-zero traceless section of $\mathrm{End}(E)$ satisfying $\bar\partial X=0$ and $[B,X]=0$.  Let $h$ be the Hermitian metric on $E$ satisfying \eqref{HE}.  We note that
\begin{equation}\label{DD1}
    i\int_C\mathrm{tr}(D''X^\dagger\wedge D'X - D'X^\dagger\wedge D''X)=\int_C\mathrm{tr}(\star \d^AX^\dagger\wedge \d^AX+\star[\Phi,X]^\dagger\wedge[\Phi,X]) \,,
\end{equation}
in which $\star$ is the Hodge star, $\d^A=\partial^A+\bar\partial^A$ and $\Phi=B-B^\dagger$.  On the other hand, \eqref{HE} and integration by parts show that
\begin{equation}\label{DD2}
\int_C\mathrm{tr}(D''X^\dagger\wedge D'X + D'X^\dagger\wedge D''X)=-\int_C\lambda\mathrm{tr}(X^\dagger[\mathrm{id}_E,X])\omega=0 \,.
\end{equation}
Our assumptions on $X$ mean that $D''X=0$.  So, \eqref{DD2} implies that the left-hand side of \eqref{DD1} vanishes.  Since the right-hand side of \eqref{DD1} is the integral of a non-negative function, its integrand must vanish, meaning that $\d^AX=0$ and $[\Phi,X]=0$. In other words, $X$ is parallel and commutes with $\Phi$.

Since $X$ is parallel, its eigenvalues are constant.  Let $F_1$ be the bundle of eigenspaces for one of its eigenvalues.  Since $X$ is traceless, $F_1\neq E$.  Let $F_2$ be the orthogonal complement of $F_1$ obtained using the Hermitian metric $h$.  Since $\d^AX=0$ and $A$ is unitary, $A$ restricts to unitary connections on $F_1$ and $F_2$.  Similarly, since $[\Phi,X]=0$ and $\Phi$ is anti-Hermitian, $F_1$ and $F_2$ are $\Phi$-invariant, \ie $\Phi(F_i)\subset \Lambda^{1,0}C\otimes F_i$.  Since $A$ determines the holomorphic structure, $F_1$ and $F_2$ are both holomorphic sub-bundles of $E$.  Since $B$ is the $(1,0)$-part of $\Phi$, $B(F_i)\subset \Lambda^{1,0}\otimes F_i$ for $i=1,2$.  Since $E=F_1\oplus F_2$,
\begin{equation}
    \mathrm{deg}(E)=\mathrm{rank}(F_1)\mathrm{deg}(F_2)+\mathrm{rank}(F_2)\mathrm{deg}(F_1) \,.
\end{equation}
In terms of the slope $\mu$, this means that
\begin{equation}
    \mu(F_1)+\mu(F_2)=\frac{\mathrm{rank}(F_1)+\mathrm{rank}(F_2)}{\mathrm{rank}(F_1)\mathrm{rank}(F_2)}\mu(E)\geq2\mu(E) \,.
\end{equation}
This means that either $\mu(F_1)\geq \mu(E)$ or $\mu(F_2)\geq \mu(E)$.  Either way, we have a contradiction to the stability of $E$.

\paragraph{Data Availability} There is no data associated to this work.

\paragraph{Declarations}

\paragraph{Conflict of interest} The authors have no conflict of interest to disclose.

\end{document}